\documentclass[sigconf,screen]{acmart}

\newif\ifanonymous
\anonymousfalse
\relax

\AtBeginDocument{}

\usepackage[utf8]{inputenc} 
\usepackage[T1]{fontenc}    
\usepackage{xcolor}         
\definecolor{linkdarkblue}{rgb}{0, 0.08, 0.45}    
\usepackage{url}            
\usepackage{array}          
\usepackage{colortbl}       
\usepackage{makecell}       
\usepackage{tabularx}       
\usepackage{bigstrut}       
\usepackage{multirow}       
\usepackage{colortbl}       
\usepackage{nicefrac}       
\usepackage{microtype}      
\usepackage{graphicx}       
\usepackage{bm}             
\usepackage[capitalise,noabbrev]{cleveref} 
\usepackage{subcaption}     
\usepackage{booktabs}       
\usepackage{wrapfig}        
\usepackage{pifont}         
\usepackage{algorithm}      
\usepackage{mathtools}      
\usepackage{enumitem}       
\usepackage{pifont}         
\usepackage{algorithm}      
\usepackage{algorithmicx}   
\usepackage{algpseudocode}  
\usepackage{thm-restate}    
\usepackage{fontawesome}    
\usepackage{comment}        
\usepackage[normalem]{ulem} 
\usepackage[most]{tcolorbox}
\usepackage{xspace}         
\usepackage{amsmath,amsfonts,bm}

\def\eqref#1{equation~\ref{#1}}

\def\1{\bm{1}}

\DeclareMathAlphabet{\mathsfit}{\encodingdefault}{\sfdefault}{m}{sl}
\SetMathAlphabet{\mathsfit}{bold}{\encodingdefault}{\sfdefault}{bx}{n}

\allowdisplaybreaks

\newcommand{\noindentparnolinenospace}[1]{\noindent\textbf{#1} }

\newcommand{\noindentparnoline}[1]{\addvspace{0.50\baselineskip}\noindent\textbf{#1} }

\newcommand{\eqnref}[1]{(\ref{#1})}

\numberwithin{equation}{section}    
\numberwithin{algorithm}{section}   

\theoremstyle{plain}

\newtheorem*{theorem*}{Theorem}

\newtheorem*{lemma*}{Lemma}

\newtheorem*{axiom*}{Axiom}

\newtheorem*{conjecture*}{Conjecture}

\newtheorem*{assumption*}{Assumption}

\newtheorem*{proposition*}{Proposition}

\newtheorem*{corollary*}{Corollary}

\theoremstyle{definition}

\newtheorem*{definition*}{Definition}

\newtheorem*{example*}{Example}

\newtheorem*{problem*}{Problem}

\newtheorem*{question*}{Question}

\newtheorem*{exercise*}{Exercise}

\newtheorem*{remark*}{Remark}

\definecolor{darkgreen}{rgb}{0.0, 0.5, 0.0}
\definecolor{skyblue}{rgb}{0.53, 0.81, 0.98}
\definecolor{beamsearchblue}{rgb}{0.42, 0.56, 0.75}     
\definecolor{beamsearchyellow}{rgb}{0.84, 0.71, 0.34}   
\definecolor{beamsearchred}{rgb}{0.72, 0.33, 0.31}      
\definecolor{darkred}{rgb}{0.6, 0, 0}
\definecolor{pruning_red}{rgb}{0.64, 0.12, 0.21}        
\definecolor{pruning_gray}{rgb}{0.6, 0.6, 0.6}          
\definecolor{sensitivity_blue}{rgb}{0.30, 0.45, 0.69}   
\definecolor{sensitivity_orange}{rgb}{0.87, 0.52, 0.32} 

\newcolumntype{Y}{>{\centering\arraybackslash}X}

\newcommand{\ie}{\text{i.e., }}         
\newcommand{\eg}{\text{e.g., }}         
\newcommand{\wrt}{\text{w.r.t. }}       
\newcommand{\cf}{\text{cf. }}           
\newcommand{\st}{\text{s.t. }}          

\newtcolorbox{statementbox}[1][]
{
    enhanced,
    colback=blue!5!white,
    colframe=blue!60!black,
    coltitle=white,
    fonttitle=\bfseries,
    boxrule=0pt,
    bottomrule=1.5pt,
    arc=0pt,
    auto outer arc,
    left=8pt, right=8pt, top=5pt, bottom=5pt,
    toptitle=2pt, bottomtitle=2pt,
    before skip=8pt, after skip=8pt,
    breakable,
    #1
}

\definecolor{bocchipink}{HTML}{F3A6B2}
\definecolor{nijikayellow}{HTML}{F8DD74}
\definecolor{ryoblue}{HTML}{6680AE}
\definecolor{kitared}{HTML}{DD6D63}

\tcbset{
    kessokustyle/.style={
        enhanced, breakable,
        colback=#1!8!white,
        colframe=#1!90!black,
        boxrule=0pt, leftrule=3pt, arc=0pt, auto outer arc,
        left=8pt, right=8pt, top=5pt, bottom=5pt,
        before skip=8pt, after skip=8pt,
    },
}

\newcommand{\mymethod}{SPRINT\xspace}

\copyrightyear{2026}
\acmYear{2026}
\setcopyright{cc}
\setcctype{by-nc-nd}
\acmConference[KDD 2026] {Proceedings of the 32nd ACM SIGKDD Conference on Knowledge Discovery and Data Mining V.2}{August 9--13, 2026}{Jeju Island, Republic of Korea.}
\acmBooktitle{Proceedings of the 32nd ACM SIGKDD Conference on Knowledge Discovery and Data Mining V.2 (KDD 2026), August 9--13, 2026, Jeju Island, Republic of Korea}
\acmISBN{979-8-4007-2259-2/2026/08}
\acmDOI{10.1145/3770855.3818185}

\makeatletter
\renewcommand{\@fnsymbol}[1]{%
  \ensuremath{\ifcase#1\or \dagger\or \ast\or \ddagger\or
  \mathsection\or \mathparagraph\or \|\or **\or \dagger\dagger
  \or \ddagger\ddagger \else\@ctrerr\fi}}
\makeatother

\begin{document}

\title[The Pitfall of Scaling Up: Uncovering and Mitigating Popularity Bias Amplification in \\Scaling Transformer-based Recommenders]{
The Pitfall of Scaling Up: Uncovering and Mitigating Popularity Bias Amplification in Scaling Transformer-based Recommenders
}


\settopmatter{authorsperrow=4}

\author{Weiqin Yang}
    \orcid{0000-0002-5750-5515}
    \affiliation{
        \institution{Zhejiang University}
        \city{Hangzhou}
        \country{China}
    }
    \email{tinysnow@zju.edu.cn}
\authornote{Equal contribution.}
\authornotemark[3]
\authornotemark[4]

\author{Yue Pan}
    \orcid{0009-0009-3726-5145}
    \affiliation{
        \institution{Zhejiang University}
        \city{Hangzhou}
        \country{China}
    }
    \email{3220100912@zju.edu.cn}
\authornotemark[1]
\authornotemark[3]
\authornotemark[4]

\author{Chongming Gao}
    \orcid{0000-0002-5187-9196}
    \affiliation{
        \institution{University of Science and Technology of China}
        \city{Hefei}
        \country{China}
    }
    \email{chongming.gao@gmail.com}

\author{Sheng Zhou}
    \orcid{0000-0003-3645-1041}
    \affiliation{
        \institution{Zhejiang University}
        \city{Hangzhou}
        \country{China}
    }
    \email{zhousheng_zju@zju.edu.cn}

\author{Xiang Wang}
    \orcid{0000-0002-6148-6329}
    \affiliation{
        \institution{University of Science and Technology of China}
        \city{Hefei}
        \country{China}
    }
    \email{xiangwang1223@gmail.com}

\author{Can Wang}
    \orcid{0000-0002-5890-4307}
    \affiliation{
        \institution{Zhejiang University}
        \city{Hangzhou}
        \country{China}
    }
    \email{wcan@zju.edu.cn}
\authornotemark[3]
\authornotemark[4]

\author{Jiawei Chen}
    \orcid{0000-0002-4752-2629}
    \affiliation{
        \institution{Zhejiang University}
        \city{Hangzhou}
        \country{China}
    }
    \email{sleepyhunt@zju.edu.cn}
\authornote{Corresponding author.}
\authornote{State Key Laboratory of Blockchain and Data Security, Zhejiang University.}
\authornote{College of Computer Science and Technology, Zhejiang University.}
\authornote{Hangzhou High-Tech Zone (Binjiang) Institute of Blockchain and Data Security.}

\ifanonymous
\else
    \renewcommand{\shortauthors}{Weiqin Yang et al.}
\fi



\begin{abstract}
    We identify a critical pitfall in scaling transformer-based sequential recommenders: while increasing model size improves recommendation accuracy, it simultaneously amplifies \emph{popularity bias}. This bias drives systems to over-recommend popular items at the expense of niche ones, which not only undermines fairness but also degrades the broader ecosystem by reinforcing the Matthew effect and filter bubbles. Consequently, this bias amplification emerges as a fundamental obstacle to sustainable model scaling.

    Through comprehensive theoretical and empirical analyses, we uncover the root cause of this amplification. Our findings reveal that as model depth increases, the two core components of the transformer architecture, \ie attention aggregation and feed-forward projections, synergistically induce severe \emph{spectral collapse} in model predictions, which directly translates to the amplification of popularity bias. To address this challenge, we propose \textbf{\mymethod} (\textbf{\uline{S}}calable \textbf{\uline{P}}opularity \textbf{\uline{R}}egularization \textbf{\uline{IN}} \textbf{\uline{T}}ransformers), which mitigates spectral collapse during scaling by constraining (i) the maximum column-sums of the attention score matrices and (ii) the spectral norms of the feed-forward parameters. Extensive experiments demonstrate that \mymethod significantly improves both accuracy and long-tail fairness. Crucially, it yields more favorable scaling behaviors when expanding model sizes from 0.05M to 0.34B parameters. The code is available at \url{https://github.com/Tiny-Snow/GenRec}.
\end{abstract}

\begin{CCSXML}
<ccs2012>
    <concept>
        <concept_id>10002951.10003317.10003347.10003350</concept_id>
        <concept_desc>Information systems~Recommender systems</concept_desc>
        <concept_significance>500</concept_significance>
    </concept>
</ccs2012>
\end{CCSXML}

\ccsdesc[500]{Information systems~Recommender systems}

\keywords{Recommender Systems; Sequential Recommendation; Scaling Laws; Popularity Bias; Spectral Regularization}



\maketitle

\begingroup\small\noindent\raggedright\textbf{Resource Availability:}\\
The source code used in this paper is available at \url{https://github.com/Tiny-Snow/GenRec}, and an archive copy is maintained at \url{https://doi.org/10.5281/zenodo.20400259}. The datasets adopted in this paper are preprocessed and can be accessed at \url{https://github.com/Tiny-Snow/GenRec-Datasets}, and an archive copy is maintained at \url{https://doi.org/10.5281/zenodo.20400316}.
\endgroup



\section{Introduction} \label{sec:introduction}

The discovery of scaling laws~\citep{kaplan2020scaling} has fundamentally reshaped the landscape of modern machine learning. These laws characterize a remarkable empirical regularity: model performance improves predictably (often following a power law) as the number of parameters or the volume of data increases~\citep{hoffmann2022training,wei2022emergent}. Guided by these observations, transformer-based foundation models have been scaled to billions or even trillions of parameters~\citep{liu2024deepseek,yang2025qwen3}, unlocking unprecedented proficiency in contextual understanding~\citep{brown2020language,hao2025rethinking}, complex reasoning~\citep{wei2022chain,wang2025scheduling,chen2025arrows}, and agency~\citep{hao2026recreate,hao2026evolveteamcollaborativeselfevolution}.

Inspired by this success, researchers have increasingly explored scaling laws in recommender systems, particularly in sequential recommendation~\citep{kang2018self,sun2019bert4rec,wang2025msl,wang2025llm4dsr,shi2024large,zhang2025reinforced}. Recent studies have observed a similar trend: scaling up the parameters of transformer-based sequential recommenders leads to consistent improvements in recommendation accuracy, which has also been validated in large-scale industrial recommender systems~\citep{guo2024scaling,zhang2024wukong,zhang2024scaling,zivic2024scaling,chen2026beyond,chen2024sigformer,chen2025rankformer,wang2026trustworthy}.

\begin{figure}[t]
    \centering
    \includegraphics[width=\columnwidth]{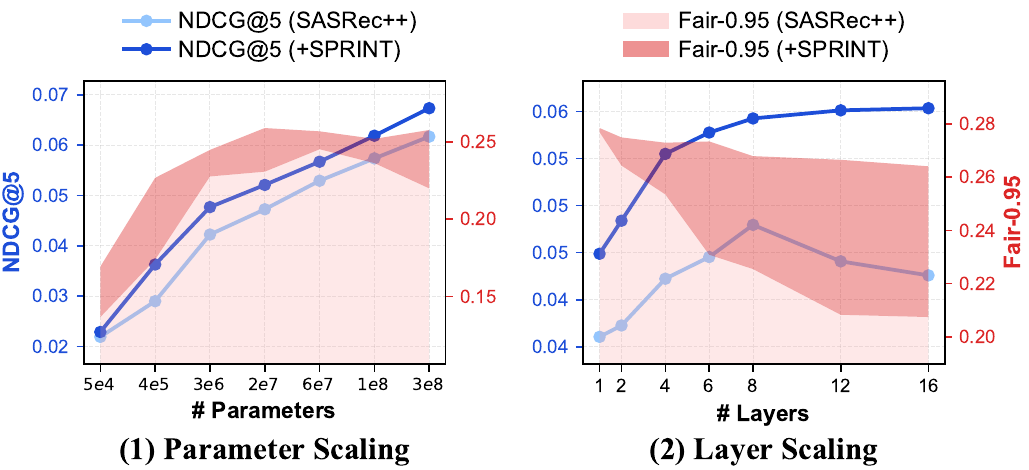}
    \caption{
        Scaling laws of accuracy (NDCG@5) and fairness (Fair-0.95) on transformer-based sequential recommenders, across (1) different number of non-embedding parameters (0.05M to 0.34B) and (2) different transformer depths (1 to 16 layers). Here we present results on the MovieLens-20M dataset~\citep{harper2015movielens} with the advanced SASRec++ backbone~\citep{zhang2024scaling,guo2024scaling}, and the fairness metric denotes the ratio of bottom-95\% long-tail items exposed in top-1 recommendations. Notably, as model size increases, especially by deepening layers, fairness tends to degrade due to amplified popularity bias. Our proposed \mymethod effectively mitigates this bias, enabling superior scaling performance in both accuracy and fairness.
    }
    \Description{Movielens-20M scaling laws.}
    \label{fig:ml-20m-scaling-laws}
\end{figure}

\noindentparnoline{Challenge.}
However, we uncover a \emph{critical pitfall} of scaling sequential recommenders: as model size increases, especially when increasing transformer depth, popularity bias is simultaneously amplified, as shown in \cref{fig:ml-20m-scaling-laws}. Popularity bias~\citep{chen2023bias,chen2021autodebias}, a notorious issue in recommender systems, denotes the tendency to over-recommend popular items at the expense of niche ones. This bias adversely affects both accuracy and fairness and can further degrade the recommendation ecosystem by reinforcing the Matthew effect and intensifying filter bubbles through the user-system feedback loop~\citep{lin2025recommendation}. The severity of these consequences poses a fundamental challenge to achieving sustainable scaling in recommendation.

\noindentparnoline{Analysis.}
To tackle this challenge, we first conduct theoretical and empirical analyses to pinpoint the root cause behind the amplification of popularity bias during model scaling. Prior work by \citet{lin2025recommendation} indicates that item popularity is primarily encoded within the principal spectral component of the model's prediction score matrix, which is weighted by the largest singular value (\ie the spectral norm) in the singular value decomposition (\cf \cref{lemma:popularity-bias-spectral-norm}). Based on this insight, we further empirically observe that increasing transformer depth exacerbates \emph{spectral collapse}~\citep{jing2021understanding,guo2023embedding,zhang2023mitigating,ohsaka2023curse}: the largest singular value grows disproportionately dominant relative to the rest of the singular values, thereby magnifying the influence of the popularity-related principal component on final recommendations, and consequently amplifying popularity bias as model scales up (\cf \cref{fig:ml-20m-scaling-laws}).

Diving deeper into the architectural mechanisms, we attribute the aforementioned depth-induced spectral collapse to the two core components of transformer-based sequential recommenders: \emph{attention aggregation} and \emph{feed-forward projections}. Specifically, the attention mechanism can over-emphasize popular items as they tend to accumulate disproportionately high attention weights. Our theoretical analysis further shows that stacking layers with such popularity-priority attention intensifies spectral collapse by increasing the spectral norm exponentially with depth (\cf \cref{theorem:spectral-norm-attention-aggregation}). Moreover, the gradient dynamics of stacked feed-forward projections can make gradient-based optimization less effective in learning features within the long-tail singular space, further exacerbating spectral collapse as the network deepens (\cf \cref{lemma:spectral-norm-feed-forward-projection}). Together, these two mechanisms synergistically amplify popularity bias and impede the scalability of sequential recommenders.

\noindentparnoline{Our Method.}
Given the adverse impact of popularity bias on the fairness and sustainability of recommendation systems, developing effective debiasing strategies is crucial. While numerous debiasing methods have been proposed~\citep{wang2021deconfounded,zhang2022incorporating,zhang2023mitigating,zhang2023model,zhang2023invariant,ning2024debiasing,wang2024causally,lin2025recommendation}, most primarily target traditional collaborative filtering scenarios, overlooking the unique characteristics of sequential recommendation and the implications of model architectures on scaling laws. Consequently, these approaches are less effective in addressing the popularity bias amplification observed during scaling.

To bridge this gap, we propose \textbf{\mymethod}, a popularity regularization method tailored to transformer-based sequential recommenders. Motivated by our prior theoretical analysis, \mymethod introduces two key regularizers: (i) \textit{attention regularization}, which constrains the maximum column-sums of the attention score matrices, limiting the excessive accumulation of popularity signals during attention aggregation, thus mitigating popularity bias amplification; and (ii) \textit{feed-forward regularization}, which bounds the spectral norms of the feed-forward projection weights, thereby controlling spectral collapse as depth increases. By regularizing these core components in each layer, \mymethod can natively alleviate spectral collapse when training deeper transformers.

\mymethod offers three appealing properties: (i) We prove that \mymethod upper-bounds the spectral norm of the prediction score matrix, providing theoretical guarantees for mitigating popularity bias amplification (\cf \cref{theorem:spectral-norm-bound}). (ii) Unlike existing debiasing approaches, \mymethod directly targets the architectural sources of this amplification by regularizing the specific transformer components. (iii) \mymethod is computationally efficient, introducing negligible training overhead. These properties make \mymethod a principled and effective debiasing solution for scaling sequential recommenders.

We evaluate \mymethod on six benchmark datasets using two strong sequential recommendation backbones: SASRec++~\citep{zhang2024scaling,guo2024scaling} and HSTU~\citep{zhai2024actions}. Experimental results demonstrate that \mymethod significantly improves both accuracy (avg. +15.70\% in NDCG and HR) and long-tail exposure (avg. +7.12\%). To assess scalability, we test model sizes ranging from 0.05M to 0.34B non-embedding parameters (a 6912$\times$ scale-up). The results show that \mymethod achieves better accuracy scaling while maintaining a more balanced popularity distribution (\cf \cref{fig:ml-20m-scaling-laws}). We further extend \mymethod to the generative recommendation scenario and observe consistent gains on the state-of-the-art backbones TIGER~\citep{rajput2023recommender} and LETTER~\citep{wang2024learnable}.

\begin{statementbox}[title={Our Contributions}]
\begin{itemize}[topsep=0pt,leftmargin=3pt,itemsep=0pt]
\item We identify a critical scaling pitfall in sequential recommendation: while accuracy improves with model scaling, popularity bias is simultaneously amplified.
\item We reveal that attention aggregation and feed-forward projections in transformer synergistically exacerbate the spectral collapse of model predictions, thereby intensifying popularity bias as depth grows.
\item We propose \mymethod, a principled and effective popularity regularization method that constrains the maximum column-sums of the attention matrices and the spectral norms of feed-forward parameters to mitigate this spectral collapse.
\item Extensive experiments demonstrate that \mymethod significantly improves both accuracy and fairness (\ie long-tail exposure), yielding more favorable scaling behaviors.
\end{itemize}
\end{statementbox}


\section{Preliminaries} \label{sec:preliminaries}

In this section, we formulate the sequential recommendation task in \cref{sec:preliminaries:task-formulation}, introduce the popularity bias in \cref{sec:preliminaries:popularity-bias}, and present the transformer-based model architectures in \cref{sec:preliminaries:model-formulation}.


\subsection{Task Formulation} \label{sec:preliminaries:task-formulation}

In this work, we focus on sequential recommendation~\citep{wang2026does,cui2024distillation,cui2025hatllm,cui2026field,cui2026spectran}, a widely adopted paradigm in real-world recommender systems. Formally, given a set of users $\mathcal{U}$ and a set of items $\mathcal{I}$, each user $u \in \mathcal{U}$ interacts with a sequence of items $(i_{u, 1}, i_{u, 2}, \cdots, i_{u, n_u})$ in chronological order. Here, $i_{u, t} \in \mathcal{I}$ denotes the $t$-th item the user interacts with, and $n_u$ represents the length of the interaction sequence. Given this historical sequence as input, the goal of a sequential recommendation model is to predict the next item $i_{u, n_u + 1}$ that the user is most likely to interact with.



\subsection{Popularity Bias} \label{sec:preliminaries:popularity-bias}

Recommender systems often exhibit a long-tail distribution of item popularity, where a small fraction of items (\ie \emph{popular items}) receive a disproportionately large number of interactions, while the majority of items (\ie \emph{long-tail items}) are rarely interacted with~\citep{chen2023bias,lin2025recommendation}. For instance, in the ML-1M dataset~\citep{harper2015movielens}, the top-20\% of items (ranked by popularity) account for 62.6\% of all interactions, while the bottom-40\% account for only 5.4\%. Such a skewed data distribution poses significant challenges for training effective recommendation models, as these models often amplify this skewness by over-recommending popular items. For example, when employing the state-of-the-art HSTU model~\citep{zhai2024actions} on ML-1M, we observe that over 80\% of the recommendations are dominated by the top-20\% popular items, whereas the bottom-40\% long-tail items are recommended less than 1\% of the time. This notorious phenomenon, known as \emph{popularity bias}, not only hinders recommendation accuracy but also leads to information cocoons, inequitable exposure, and reduced user satisfaction~\citep{zhu2021popularity,klimashevskaia2024survey}.

To quantify the extent of popularity bias in recommendation systems, we additionally measure a \emph{fairness metric} Fair-$\alpha$, defined as the ratio of bottom-$\alpha$ long-tail items in the final recommendations, which is also referred to as the long-tail exposure. Following prior work~\citep{lin2025recommendation}, we evaluate Fair-0.8 by treating the bottom 80\% of items as long-tail items across most datasets (\cf \cref{tab:main-results-5}). For the ML-20M dataset, due to its massive scale and particularly skewed popularity distribution, we instead evaluate Fair-0.95 (\cf \cref{fig:ml-20m-scaling-laws}). In the main text, we primarily evaluate fairness in top-1 recommendations, while the similar trends can be observed in top-5 and top-10 recommendations (\cf \cref{tab:fairness-results}).



\subsection{Model Architecture} \label{sec:preliminaries:model-formulation}

Motivated by the widespread success of the transformer architecture~\citep{vaswani2017attention}, transformer-based models have gained significant traction and achieved state-of-the-art performance in sequential recommendation~\citep{kang2018self,sun2019bert4rec}. In the following, we present a generic formulation of these models. For ease of exposition, we detail only the single-head attention mechanism and standard two-layer feed-forward networks, purposely omitting auxiliary components like residual connections and layer normalization. Crucially, our theoretical framework naturally extends to incorporate these components; the complete analyses are deferred to \cref{app:theoretical-proofs:proof-of-spectral-norm-bound}.

\noindentparnoline{Embedding.}
In the initial embedding layer, the input sequence of each user $u$ is transformed into a sequence of embedding vectors $\mathbf{X}_u^{(0)} = [\mathbf{x}_{u, 1}^{(0)}; \mathbf{x}_{u, 2}^{(0)}; \cdots; \mathbf{x}_{u, n_u}^{(0)}] \in \mathbb{R}^{n_u \times d}$, where $\mathbf{x}_{u, t}^{(0)} \in \mathbb{R}^{1 \times d}$ is the $d$-dimensional embedding of item $i_{u, t}$. Stacking the embeddings of all users yields the initial representation $\mathbf{X}^{(0)} = [\mathbf{X}_u^{(0)}]_{u \in \mathcal{U}} \in \mathbb{R}^{N \times d}$, where $N = \sum_{u \in \mathcal{U}} n_u = |\mathcal{U}| \cdot \bar{n}$ is the total number of interactions, and $\bar{n}$ is the average user sequence length. This zeroth-layer representation $\mathbf{X}^{(0)}$ is then fed into a stack of $L$ transformer layers to capture sequential dependencies and contextual information.

\noindentparnoline{Attention Aggregation.}
For each layer $\ell = 1, 2, \cdots, L$, the input representation $\mathbf{X}^{(\ell-1)}$ from the $(\ell-1)$-th layer is updated to the \emph{aggregated representation} $\mathbf{A}^{(\ell)} \in \mathbb{R}^{N \times d}$ via self-attention:
\begin{equation} \label{eq:attention-aggregation}
    \mathbf{A}^{(\ell)} = \mathbf{S}^{(\ell)} \mathbf{X}^{(\ell-1)} \mathbf{W}_V^{(\ell)} \mathbf{W}_O^{(\ell)},
\end{equation}
where $\mathbf{S}^{(\ell)} = \operatorname{diag}(\mathbf{S}_u^{(\ell)})_{u \in \mathcal{U}} \in \mathbb{R}^{N \times N}$ is the block-diagonal attention score matrix. Each block $\mathbf{S}_u^{(\ell)} \in \mathbb{R}^{n_u \times n_u}$ represents the attention scores for user $u$ at layer $\ell$. Additionally, $\mathbf{W}_V^{(\ell)}, \mathbf{W}_O^{(\ell)} \in \mathbb{R}^{d \times d}$ are the value and output projection matrices, respectively. In practice, attention aggregation is computed independently for each user sequence (\ie within the corresponding block). Here, we present the formulation in a unified matrix form for ease of theoretical analysis.

\noindentparnoline{Feed-forward Projection.} 
After attention, the aggregated representation $\mathbf{A}^{(\ell)}$ is further transformed into the output representation $\mathbf{X}^{(\ell)}$ via a feed-forward network (FFN), typically consisting of two linear transformations and a non-linear activation function $\mu(\cdot)$:
\begin{equation} \label{eq:feed-forward-network}
    \mathbf{X}^{(\ell)} = \mu(\mathbf{A}^{(\ell)} \mathbf{W}_1^{(\ell)}) \mathbf{W}_2^{(\ell)},
\end{equation}
where $\mathbf{W}_1^{(\ell)} \in \mathbb{R}^{d \times d_{f}}$ and $\mathbf{W}_2^{(\ell)} \in \mathbb{R}^{d_{f} \times d}$ are the weight matrices of the two linear layers, and $d_{f}$ is the intermediate dimension. The feed-forward projection serves to further enhance model expressiveness by applying non-linear transformations to representations.

\noindentparnoline{Prediction.}
Finally, the output representation $\mathbf{X}^{(L)}$ from the last layer is used to generate prediction scores over all items based on its inner product with the item embedding matrix $\mathbf{E} \in \mathbb{R}^{|\mathcal{I}| \times d}$, \ie $\hat{\mathbf{Y}} = \mathbf{X}^{(L)} \mathbf{E}^\top \in \mathbb{R}^{N \times |\mathcal{I}|}$. The items with the highest prediction scores are retrieved as final recommendations.




\section{Pitfalls of Scaling Laws} \label{sec:pitfalls-of-scaling-laws}

In this section, we identify a critical pitfall in scaling sequential recommendation models: \emph{while scaling up model size generally enhances performance, it simultaneously amplifies popularity bias}. We first provide empirical evidence in \cref{sec:pitfalls-of-scaling-laws:empirical-evidences}, then conduct empirical and theoretical analyses of the root causes in \cref{sec:pitfalls-of-scaling-laws:analyses}.


\subsection{Popularity Bias Amplification} \label{sec:pitfalls-of-scaling-laws:empirical-evidences}

Motivated by recent efforts to scale sequential recommendation models~\citep{guo2024scaling,zhang2024wukong,zhang2024scaling,zivic2024scaling}, we conduct an empirical study to investigate the impact of model size on both recommendation accuracy and fairness. As shown in \cref{fig:ml-20m-scaling-laws}.(1), our results reveal an intriguing pattern: while overall recommendation accuracy improves with larger model sizes, recommendation fairness exhibits a non-monotonic trajectory. Specifically, popularity bias decreases as model size grows to a moderate scale (\eg 18.9M), likely due to an enhanced capacity to align with user preferences. However, when further scaling to large sizes (\eg 0.13B and 0.34B), popularity bias significantly worsens, suggesting an increased tendency to over-recommend popular items and underutilize long-tail features. More crucially, this popularity bias is continuously amplified as model depth increases, eventually becoming a bottleneck to further scaling. In \cref{fig:ml-20m-scaling-laws}.(2), we observe accuracy saturation, and in some cases even degradation, when model depth is increased (\eg over 8 layers) without additional debiasing mechanisms. Overall, these findings reveal a critical pitfall in scaling sequential recommenders:

\begin{statementbox}[title={Observation: Popularity Bias Pitfall in Scaling}]
When scaling model size, particularly by increasing model depth, popularity bias is simultaneously amplified and can become a bottleneck to further scaling.
\end{statementbox}



\subsection{Analyses on the Root Causes} \label{sec:pitfalls-of-scaling-laws:analyses}

Our observations in \cref{sec:pitfalls-of-scaling-laws:empirical-evidences} have identified the critical pitfall in scaling transformer-based sequential recommenders, prompting a natural question: \emph{what are the root causes of popularity bias amplification during model scaling?}

\noindentparnoline{Spectral Norm and Popularity Bias.}
To answer this question, we first revisit the theoretical insights from \citet{lin2025recommendation}, which reveal an inherent connection between popularity bias and the spectral norm (\ie the largest singular value) of the predicted score matrix. For convenience, we denote the item popularity vector as $\mathbf{r}$ (where $\mathbf{r}_i$ is the number of interactions for item $i$) and the singular value decomposition (SVD) of the predicted score matrix as $\hat{\mathbf{Y}} = \sum_{k = 1}^{\min(N, |\mathcal{I}|)} \sigma_k(\hat{\mathbf{Y}}) \mathbf{p}_k \mathbf{q}_k^\top$, where $\sigma_k(\hat{\mathbf{Y}})$ is the $k$-th largest singular value, and $\mathbf{p}_k, \mathbf{q}_k$ are the corresponding left and right singular vectors, respectively. \citet{lin2025recommendation} prove the following result:

\begin{tcolorbox}[kessokustyle=nijikayellow]
\begin{restatable}{lemma}{RestatablePopularityBiasSpectralNorm} \label{lemma:popularity-bias-spectral-norm}
    \upshape
    {(\textit{Popularity Memorization}, Theorem 1 in~\citep{lin2025recommendation}).}
    If the item popularity $\mathbf{r}$ is sufficiently long-tailed (\ie following a power-law distribution), the principal singular vector $\mathbf{q}_1$ of the predicted score matrix $\hat{\mathbf{Y}}$ is highly correlated with the item popularity $\mathbf{r}$, \ie $\cos(\mathbf{q}_1, \mathbf{r}) \approx 1$.
\end{restatable}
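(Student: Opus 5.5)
To prove \cref{lemma:popularity-bias-spectral-norm}, I would follow the route of \citet{lin2025recommendation}. First, model a well-trained recommender as one whose predictions approximately fit the observed interaction matrix. Then study the principal right singular vector of that fitted matrix through power iteration on its Gram matrix. Concretely, let $\mathbf{Y} \in \{0,1\}^{N \times |\mathcal{I}|}$ be the interaction (target) matrix, so that $\mathbf{1}^\top \mathbf{Y} = \mathbf{r}^\top$. The working assumption is that $\hat{\mathbf{Y}}$ is close to $\mathbf{Y}$ up to a row-wise positive scaling, as holds for softmax or inner-product models trained to convergence. This reduces the statement to showing that the top right singular vector $\mathbf{q}_1$ of $\mathbf{Y}$, i.e.\ the top eigenvector of $\mathbf{G} = \mathbf{Y}^\top \mathbf{Y}$, nearly aligns with $\mathbf{r}$. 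A Davis--Kahan argument then transfers this alignment from $\mathbf{Y}$ to $\hat{\mathbf{Y}}$, with an error controlled by $\|\hat{\mathbf{Y}} - \mathbf{Y}\|_2 / (\sigma_1 - \sigma_2)$.

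The key computation is in three steps.
\begin{itemize}
\item Observe that $\mathbf{G}$ is entrywise nonnegative. Its diagonal is $\mathbf{G}_{ii} = \mathbf{r}_i$, and its off-diagonal entries are co-occurrence counts.
\item Take $\mathbf{q}^{(0)} = \mathbf{1}/\sqrt{|\mathcal{I}|}$ as the starting vector. Then $\mathbf{G}\mathbf{1} = \mathbf{Y}^\top(\mathbf{Y}\mathbf{1})$, whose $i$-th entry is the sum over users who interacted with $i$ of their activity levels. This is roughly $\mathbf{r}_i$ times an average user activity, so $\mathbf{G}\mathbf{1} \approx c\,\mathbf{r}$.
\item By Perron--Frobenius, $\mathbf{q}_1$ has nonnegative entries, and the Rayleigh quotient gives $\sigma_1^2 \ge \mathbf{r}^\top\mathbf{G}\mathbf{r}/\|\mathbf{r}\|^2$.
\end{itemize}
I would then bound $\cos(\mathbf{q}_1, \mathbf{r})$ from below using the spectral gap. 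Decompose $\mathbf{r}/\|\mathbf{r}\|$ along the eigenbasis of $\mathbf{G}$. The component orthogonal to $\mathbf{q}_1$ has squared norm at most
\[
\frac{\sigma_1^2 - \mathbf{r}^\top\mathbf{G}\mathbf{r}/\|\mathbf{r}\|^2}{\sigma_1^2 - \sigma_2^2}.
\]
So it suffices to show two things: the Rayleigh quotient of $\mathbf{r}$ nearly attains $\sigma_1^2$, and the gap $\sigma_1^2 - \sigma_2^2$ is large.

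The power-law assumption is what makes both of these hold. Suppose $\mathbf{r}_i \propto i^{-\beta}$ with $\beta$ large enough that $\|\mathbf{r}\|_2^2$ is dominated by a few head items. Then $\sigma_1^2 \gtrsim \|\mathbf{r}\|_2^2 \cdot(\text{avg activity})$, while all other directions carry only the spread-out mass. This gives a ratio $\sigma_2/\sigma_1$ that shrinks as the tail gets heavier.

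The main obstacle is bounding $\sigma_2$. This needs some independence-type modeling of the interactions. I would adopt the assumption that interactions are generated with probability proportional to a user activity term times an item popularity term, plus a zero-mean residual. Under this model, $\mathbb{E}\mathbf{Y}$ is rank one with right factor $\mathbf{r}$. Matrix concentration (e.g.\ a matrix Bernstein bound on the residual) then controls $\sigma_2$ and finishes the Davis--Kahan step, yielding $\cos(\mathbf{q}_1,\mathbf{r}) \ge 1 - O(\|\text{residual}\|_2^2/\sigma_1^2) \approx 1$.
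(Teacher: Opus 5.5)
The paper contains no proof of this lemma. It is quoted from \citet{lin2025recommendation} (Theorem~1) and supported here only empirically, via the Spearman correlation between $\mathbf{r}$ and $\mathbf{q}_1$ ($0.90$ on ML-1M, $0.83$ on Toy). Your argument therefore has to stand on its own, and it has a genuine gap: the hypothesis of the lemma does no actual work. The load-bearing step is your model $\Pr(y_{ui}=1)\propto a_u\mathbf{r}_i$. It makes $\mathbb{E}\mathbf{Y}$ rank one with right singular vector exactly $\mathbf{r}/\|\mathbf{r}\|_2$ for \emph{every} popularity profile, long-tailed or uniform. That builds the conclusion in, and the Bernstein/Davis--Kahan steps afterwards only handle sampling noise.

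Without such a model, the spectral-gap route cannot be closed by a power law. A power law constrains only the column sums; it says nothing about $\sigma_2$. For example, index items by popularity rank and split the rows into groups of sizes $n_A$ and $n_B$. Let odd-ranked items be scored only by the first group and even-ranked ones only by the second, with blocks $\mathbf{1}\mathbf{r}_A^\top/n_A$ and $\mathbf{1}\mathbf{r}_B^\top/n_B$, where $\mathbf{r}_A,\mathbf{r}_B$ are the corresponding restrictions of $\mathbf{r}$. The column sums are exactly $\mathbf{r}_k\propto k^{-\alpha}$, and the nonzero singular values are $\|\mathbf{r}_A\|_2/\sqrt{n_A}$ and $\|\mathbf{r}_B\|_2/\sqrt{n_B}$. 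Once $n_A>(4^{\alpha}-1)n_B$, the principal right singular vector is the zero-padded $\mathbf{r}_B/\|\mathbf{r}_B\|_2$. Then $\cos(\mathbf{q}_1,\mathbf{r})=\|\mathbf{r}_B\|_2/\|\mathbf{r}\|_2\approx 2^{-\alpha}$, about $0.5$ at $\alpha=1$ and smaller for steeper tails. So your claim that heavier tails shrink $\sigma_2/\sigma_1$ is unfounded. An explicit restriction on how heterogeneous the fitted scores may be is unavoidable, and the power law has to act through it.

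The reduction and concentration steps also fail as written. With your dimensions ($N$ rows, one per interaction, $\mathbf{1}^\top\mathbf{Y}=\mathbf{r}^\top$, so $\sum_i\mathbf{r}_i=N$), the target matrix has one-hot rows. Hence $\mathbf{G}=\mathbf{Y}^\top\mathbf{Y}=\operatorname{diag}(\mathbf{r})$ exactly: there are no co-occurrence terms, $\mathbf{q}_1(\mathbf{Y})$ is the indicator of the most popular item, and $\cos(\mathbf{q}_1(\mathbf{Y}),\mathbf{r})=\|\mathbf{r}\|_\infty/\|\mathbf{r}\|_2$. Correspondingly, your popularity model gives $\mathbb{E}\mathbf{Y}=\mathbf{1}\mathbf{r}^\top/N$ with $\sigma_1^2=\|\mathbf{r}\|_2^2/N\le\|\mathbf{r}\|_\infty$, because $\|\mathbf{r}\|_2^2\le\|\mathbf{r}\|_\infty\|\mathbf{r}\|_1$. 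Meanwhile the centered column of the top item alone gives $\|\mathbf{Y}-\mathbb{E}\mathbf{Y}\|_2^2\ge\|\mathbf{r}\|_\infty(1-\|\mathbf{r}\|_\infty/N)$. Unless a single item accounts for almost all interactions, the residual is at least as large as the signal, so Davis--Kahan cannot give $\cos\approx 1$.

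Separately, $\hat{\mathbf{Y}}=\mathbf{X}^{(L)}\mathbf{E}^\top$ is a signed matrix of raw inner products with rank at most $d$. Nothing makes it spectrally close to the sparse $0/1$ matrix $\mathbf{Y}$, and Perron--Frobenius does not transfer to it. A row rescaling $\mathbf{D}\mathbf{Y}$ also changes the right singular vectors into eigenvectors of $\mathbf{Y}^\top\mathbf{D}^2\mathbf{Y}$.

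The one-hot computation does show where a power law genuinely helps: head dominance. For sorted popularities $\mathbf{r}_{[k]}\propto k^{-\alpha}$ one has $\|\mathbf{r}\|_\infty/\|\mathbf{r}\|_2=(\sum_k k^{-2\alpha})^{-1/2}\to 1$ as $\alpha$ grows. A proof that really uses the hypothesis has to exploit this, rather than posit a rank-one popularity model. For instance, show under a stated assumption on $\hat{\mathbf{Y}}$ that $\mathbf{q}_1$ places most of its mass on the head items.
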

\end{tcolorbox}


\noindent
\cref{lemma:popularity-bias-spectral-norm} indicates that the popularity bias is primarily captured in the principal spectral component of the predicted score matrix, which is also verified by our empirical results (\cf \cref{app:experimental-supplementary:supplementary-results}). Therefore, the spectral norm $\|\hat{\mathbf{Y}}\|_2 = \sigma_1(\hat{\mathbf{Y}})$, which dictates the magnitude of this principal component, can be interpreted as the strength of the popularity influence on final recommendations.

\begin{figure}[t]
    \centering
    \includegraphics[width=\columnwidth]{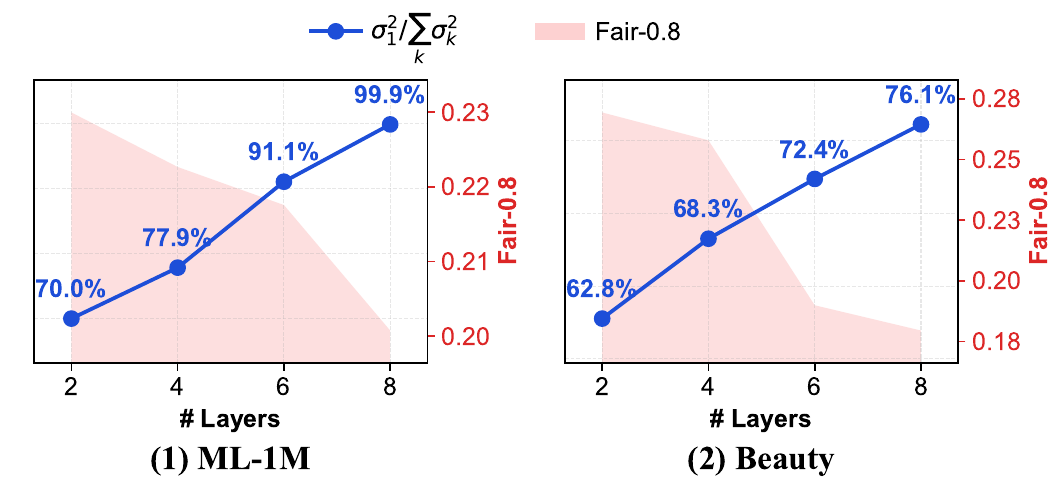}
    \caption{
        Illustration of increasingly severe spectral collapse and worsened fairness as the model deepens. The proportion of the largest singular value rises (\eg $\sigma_1^2/\sum_k \sigma_k^2$ reaching 99.9\% on the ML-1M dataset) as more layers are stacked, causing the popularity-related principal spectral component to exert a greater influence on the final recommendations, thereby degrading fairness (\ie Fair-0.8, the ratio of bottom-80\% long-tail items exposed in top-1 recommendation).
    }
    \Description{Spectral norm analysis of layer scaling.}
    \label{fig:analysis-layer-scaling-spectral-norm}
\end{figure}

\noindentparnoline{Spectral Collapse in Deep Transformers.}
Building on the above insight, we further analyze how transformer architectures contribute to \emph{spectral collapse} in model outputs. Here, spectral collapse refers to the phenomenon that the largest singular value of the prediction matrix dominates the singular value spectrum, \ie $\sigma_1$ becomes disproportionately large relative to the remaining singular values $\sigma_{k \geq 1}$. As a consequence, the principal spectral component, which is closely associated with item popularity, exerts an excessively strong influence on recommendation outcomes. Empirically, in \cref{fig:analysis-layer-scaling-spectral-norm}, we observe that spectral collapse consistently intensifies with depth, suggesting that deeper transformer stacks inherently exacerbate this effect and drive bias amplification.

We next provide a theoretical analysis to reveal the root causes of spectral collapse induced by transformer architectures. To facilitate intuitive understanding and theoretical derivations, we assume identity activations in the feed-forward networks. This simplification is common in theoretical studies of transformers, and our empirical results indicate that the resulting insights also hold under standard nonlinear activations (\cf \cref{app:theoretical-proofs:proof-of-spectral-norm-bound}). The output after $L$ transformer layers can be rewritten as follows:
\begin{equation} \label{eq:output-decomposition}
    \mathbf{X}^{(L)} = \underbrace{\left( \mathbf{S}^{(L)} \mathbf{S}^{(L - 1)} \cdots \mathbf{S}^{(1)} \right)}_{\text{\small attention aggregation}} \mathbf{X}^{(0)} \underbrace{\left( \mathbf{W}^{(1)} \mathbf{W}^{(2)} \cdots \mathbf{W}^{(L)} \right)}_{\text{\small feed-forward projection}},
\end{equation}
where $\mathbf{W}^{(\ell)}$ is the combined feed-forward weight product for each layer $\ell$, \eg $\mathbf{W}^{(\ell)} = \mathbf{W}_V^{(\ell)} \mathbf{W}_O^{(\ell)} \mathbf{W}_1^{(\ell)} \mathbf{W}_2^{(\ell)}$. This formulation elegantly highlights the distinct roles of the attention mechanisms (\emph{left}) and the feed-forward networks (\emph{right}) in aggregating and transforming the input representations, respectively. More importantly, \emph{these two components synergistically contribute to the spectral collapse of the final outputs}.

\begin{figure}[t]
    \centering
    \includegraphics[width=0.9\columnwidth]{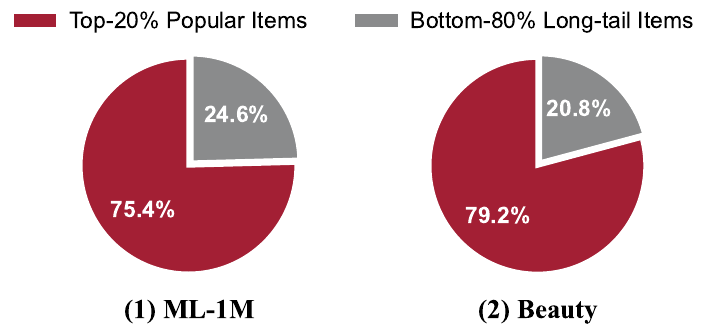}
    \caption{    
    Popular items tend to receive disproportionately higher attention scores compared to long-tail items. As illustrated, the top-20\% popular items account for over 75\% of the total attention weights, thereby dominating the aggregation process and amplifying popularity bias.
    }
    \Description{Attention score distribution analysis.}
    \label{fig:analysis-attention-scores}
\end{figure}

\noindentparnoline{Popularity-Prioritized Attention Aggregation.}
In attention mechanisms, the attention scores $\mathbf{S}^{(\ell)}$ dictate the overall influence of each item during the aggregation process. Our empirical results in \cref{fig:analysis-attention-scores} reveal that \emph{popular items tend to receive disproportionately higher attention scores compared to long-tail items}. Specifically, they account for over 75\% of the total attention weights, indicating their dominant contribution to the aggregated representations. More importantly, given the similar popularity-prioritized attention patterns across different layers, stacking multiple attention layers (\cf the left part of \cref{eq:output-decomposition}) may further amplify this effect, as established in the following theorem:

\begin{tcolorbox}[kessokustyle=bocchipink]
\begin{restatable}{theorem}{RestatableSpectralNormAttentionAggregation} \label{theorem:spectral-norm-attention-aggregation}
    \upshape
    {(\textit{Spectral Collapse in Attention Aggregation}).}
    Suppose that the attention score matrices $\mathbf{S}^{(\ell)}$ share similar singular vectors across layers up to an $\epsilon$-perturbation, \ie $\mathbf{S}^{(\ell)} = \mathbf{U} \mathbf{\Sigma}^{(\ell)} \mathbf{V}^\top + \mathbf{\Delta}^{(\ell)}$ for $\ell = 1, 2, \cdots, L$, where $\mathbf{U}$ and $\mathbf{V}$ are orthogonal matrices sharing the same principal singular vector\footnotemark\ (\ie $\mathbf{u}_1 = \mathbf{v}_1$), $\mathbf{\Sigma}^{(\ell)} = \operatorname{diag}(\sigma_1^{(\ell)}, \sigma_2^{(\ell)}, \cdots)$ is the diagonal matrix of singular values satisfying $\sigma_1^{(\ell)} > \max\{1, \sigma_2^{(\ell)} + \delta\}$ and $\sigma_2^{(\ell)} \geq \eta$ for some constants $\delta, \eta > 0$, and $\mathbf{\Delta}^{(\ell)}$ is a perturbation matrix bounded by $\|\mathbf{\Delta}^{(\ell)}\|_2 \leq \epsilon \ll 1 / \sqrt{L^2 \max_{\ell} \sigma_1^{(\ell)}}$. Then, the spectral norm of the stacked attention matrices is lower-bounded by:
    \begin{equation} \label{eq:spectral-norm-attention-aggregation}
        \left\| \mathbf{S}^{(L)} \mathbf{S}^{(L - 1)} \cdots \mathbf{S}^{(1)} \right\|_2
        \geq \prod_{\ell = 1}^{L} \left( \sigma_1^{(\ell)} - \epsilon - O\left( \frac{\epsilon^2 \ell^2 \sigma_1^{(\ell)}}{\min\{\delta, \eta\}^2} \right) \right).
    \end{equation}
    For a sufficiently small $\epsilon$, the spectral norm of the attention aggregation grows exponentially with the network depth $L$.
\end{restatable}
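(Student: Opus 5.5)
The plan is to reduce everything to one test vector, the shared principal singular vector $\mathbf{u}_1 = \mathbf{v}_1$, and track how well it survives multiplication by each layer. Since $\|\mathbf{M}\|_2 \geq \|\mathbf{M}\mathbf{x}\|_2$ for any unit vector $\mathbf{x}$, it suffices to lower-bound $\|\mathbf{S}^{(L)} \cdots \mathbf{S}^{(1)} \mathbf{v}_1\|_2$.

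\textbf{Unperturbed case.} With $\epsilon = 0$ we have $\mathbf{S}^{(\ell)}\mathbf{v}_1 = \sigma_1^{(\ell)} \mathbf{u}_1 = \sigma_1^{(\ell)} \mathbf{v}_1$. So $\mathbf{v}_1$ is an exact eigenvector of every layer, and the product norm equals $\prod_\ell \sigma_1^{(\ell)}$. Since each $\sigma_1^{(\ell)} > 1$, this product grows exponentially in $L$.

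\textbf{Perturbed case: the recursion.} Define $\mathbf{z}_0 = \mathbf{v}_1$ and $\mathbf{z}_\ell = \mathbf{S}^{(\ell)} \mathbf{z}_{\ell-1}$, and decompose $\mathbf{z}_\ell = a_\ell \mathbf{v}_1 + \mathbf{w}_\ell$ with $\mathbf{w}_\ell \perp \mathbf{v}_1$. Applying $\mathbf{S}^{(\ell+1)} = \mathbf{U}\mathbf{\Sigma}^{(\ell+1)}\mathbf{V}^\top + \mathbf{\Delta}^{(\ell+1)}$ gives:
\begin{itemize}
\item The $\mathbf{v}_1$-component receives $\sigma_1^{(\ell+1)} a_\ell$.
\item The orthogonal part $\mathbf{w}_\ell$ is mapped by the unperturbed matrix into $\operatorname{span}(\mathbf{u}_2, \ldots)$, which is orthogonal to $\mathbf{u}_1 = \mathbf{v}_1$, with gain at most $\sigma_2^{(\ell+1)} \leq \sigma_1^{(\ell+1)} - \delta$.
\item The perturbation $\mathbf{\Delta}^{(\ell+1)}\mathbf{z}_\ell$ contributes at most $\epsilon \|\mathbf{z}_\ell\|_2$ in norm.
\end{itemize}
This yields a pair of coupled inequalities:
\begin{align*}
a_{\ell+1} &\geq \sigma_1^{(\ell+1)} a_\ell - \epsilon (|a_\ell| + \|\mathbf{w}_\ell\|_2), \\
\|\mathbf{w}_{\ell+1}\|_2 &\leq \sigma_2^{(\ell+1)} \|\mathbf{w}_\ell\|_2 + \epsilon (|a_\ell| + \|\mathbf{w}_\ell\|_2).
\end{align*}

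\textbf{Solving the recursion.} I would track the ratio $\rho_\ell = \|\mathbf{w}_\ell\|_2 / a_\ell$ and show inductively that it stays small. The spectral gap $\delta$ contracts the ratio at each step, while $\epsilon$ feeds it. The aim is an invariant of the form $\rho_\ell \leq C \epsilon \ell / \min\{\delta, \eta\}$. Plugging this into the first inequality gives
\[
a_{\ell+1} \geq a_\ell \bigl(\sigma_1^{(\ell+1)} - \epsilon - \epsilon \rho_\ell\bigr).
\]
Together with second-order terms of size $\epsilon^2 \ell^2 \sigma_1 / \min\{\delta, \eta\}^2$ (from squaring the ratio when normalizing, or from eigenvector perturbation via Davis--Kahan), this produces the per-layer factor in \eqref{eq:spectral-norm-attention-aggregation}. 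Telescoping then gives $\|\mathbf{S}^{(L)}\cdots\mathbf{S}^{(1)}\|_2 \geq |a_L| \geq \prod_\ell (\cdots)$.

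The hypothesis $\epsilon \ll 1/\sqrt{L^2 \max_\ell \sigma_1^{(\ell)}}$ ensures that every correction term stays below a constant. Each factor therefore remains above $1$, and exponential growth follows.

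\textbf{Expected obstacle.} The main difficulty is keeping the orthogonal-leakage ratio $\rho_\ell$ uniformly controlled across layers without extra factors of $L$ compounding. A crude bound lets errors grow like $(1 + \epsilon/\delta)^\ell$. To avoid this, I would use the gap $\sigma_1 - \sigma_2 \geq \delta$ to obtain a contraction of $\rho$, and use $\eta$ to control normalization. If that fails, the fallback is a first-order perturbation expansion of the product, in which cross terms are bounded by $\binom{\ell}{2}\epsilon^2$. This naturally explains the $\ell^2$ appearing inside the $O(\cdot)$.
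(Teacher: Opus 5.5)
Your proposal is correct and is essentially the paper's argument: both test the product on $\mathbf{v}_1=\mathbf{u}_1$, split each iterate into its $\mathbf{v}_1$-component and an orthogonal part that the unperturbed map contracts by $\sigma_2^{(\ell)}/\sigma_1^{(\ell)}$ while $\mathbf{\Delta}^{(\ell)}$ injects $O(\epsilon)$ (your $\rho_\ell$ is exactly the paper's $\tan\theta^{(\ell)}$), and then telescope per-layer factors. The only real difference is that you track the unnormalized coefficient $a_\ell$ rather than normalized iterates, which gives the factor $\sigma_1^{(\ell)}-\epsilon-\epsilon\rho_{\ell-1}$ with $\rho_{\ell-1}=O(\epsilon/\delta)$ uniformly by the gap contraction---already at least as strong as the stated bound---so the extra second-order terms, Davis--Kahan, and the $\eta$ assumption you mention are not actually needed.
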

\end{tcolorbox}
\footnotetext{These assumptions are empirically verified in \cref{tab:singular-vector-alignment,app:experimental-supplementary:supplementary-results}. The results demonstrate that the singular spaces of the attention score matrices are highly aligned across layers, and the principal left and right singular vectors ($\mathbf{u}_1$ and $\mathbf{v}_1$) exhibit a high cosine similarity of over 0.8. In fact, if each diagonal block $\mathbf{S}_u^{(\ell)}$ is a Cesàro matrix~\citep{brown1965cesaro,ross2022cesaro}, the cosine similarity between the principal singular vectors of the block-diagonal $\mathbf{S}^{(\ell)}$ will approach 1 as the length $N \to \infty$ (\cf \cref{app:theoretical-proofs:justification-of-assumption-spectral-norm-attention-aggregation}).}

\noindent
The proof is detailed in \cref{app:theoretical-proofs:proof-of-spectral-norm-attention-aggregation}. \cref{theorem:spectral-norm-attention-aggregation} reveals that if the attention scores across layers exhibit similar principal singular patterns (\eg prioritizing popular items), \emph{the spectral norm of the attention aggregation will grow exponentially as more attention layers are stacked}. This exponential growth triggers severe spectral collapse, thereby resulting in popularity bias amplification. Empirically, we observe in \cref{fig:ablation-spectral-norm}.(1) that this collapse is indeed mitigated when certain attention layers are replaced with identity mappings. This further confirms the pivotal role of attention mechanisms in driving bias amplification during model scaling.

\noindentparnoline{Low-Rank Feed-forward Projection.}
The feed-forward projection can be viewed as a product of multiple weight matrices, serving to effectively capture generalizable patterns in the representation space. However, \emph{stacking deep feed-forward layers inherently induces spectral growth}, as formalized in the following lemma:

\begin{tcolorbox}[kessokustyle=nijikayellow]
\begin{restatable}{lemma}{RestatableSpectralNormFeedForwardProjection} \label{lemma:spectral-norm-feed-forward-projection}
    \upshape
    {(\textit{Spectral Collapse in Feed-forward Projection}, Theorem 3 in~\citep{arora2019implicit}).}
    Denote the trajectory of the feed-forward weight product as $\mathbf{W}(t) = \mathbf{W}^{(1)}(t) \mathbf{W}^{(2)}(t) \cdots \mathbf{W}^{(L)}(t)$ during gradient descent training with an infinitesimally small learning rate. With a balanced initialization~\citep{arora2018optimization,arora2019implicit}, we have the following dynamics for the singular values $\sigma_k(t)$ of $\mathbf{W}(t)$:
    \begin{equation}
        \frac{\mathrm{d}}{\mathrm{d} t} \sigma_k(t) = -L \cdot (\sigma_k^2(t))^{1 - 1 / L} \cdot \langle \nabla \mathcal{L}(\mathbf{W}(t)), \mathbf{u}_k(t) \mathbf{v}_k^\top(t) \rangle,
    \end{equation}
    where $\mathbf{u}_k(t)$ and $\mathbf{v}_k(t)$ are the left and right singular vectors corresponding to $\sigma_k(t)$, and $\mathcal{L}(\cdot)$ can be any analytic loss.
\end{restatable}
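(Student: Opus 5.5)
The plan is to follow the end-to-end dynamics argument of \citet{arora2018optimization,arora2019implicit}. Under gradient flow, each factor evolves as $\dot{\mathbf{W}}^{(\ell)} = -\nabla_{\mathbf{W}^{(\ell)}}\mathcal{L}(\mathbf{W})$. Write $\mathbf{W}_{<\ell} = \mathbf{W}^{(1)}\cdots\mathbf{W}^{(\ell-1)}$ and $\mathbf{W}_{>\ell} = \mathbf{W}^{(\ell+1)}\cdots\mathbf{W}^{(L)}$. The chain rule then gives
\begin{equation*}
\nabla_{\mathbf{W}^{(\ell)}}\mathcal{L} = \mathbf{W}_{<\ell}^\top \nabla\mathcal{L}(\mathbf{W}) \mathbf{W}_{>\ell}^\top .
\end{equation*}

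\textbf{Step 1: balancedness is conserved.} I first show that $\mathbf{W}^{(\ell)\top}\mathbf{W}^{(\ell)} - \mathbf{W}^{(\ell+1)}\mathbf{W}^{(\ell+1)\top}$ is constant along the flow. Differentiating both terms and substituting the gradient expression, the two derivatives coincide. Balanced initialization sets this difference to zero, so it stays zero for all $t$.

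\textbf{Step 2: closed form for $\dot{\mathbf{W}}$.} Differentiating the product gives
\begin{equation*}
\dot{\mathbf{W}} = \sum_\ell \mathbf{W}_{<\ell}\,\dot{\mathbf{W}}^{(\ell)}\,\mathbf{W}_{>\ell}.
\end{equation*}
Substituting the factor dynamics, each summand becomes $-(\mathbf{W}_{<\ell}\mathbf{W}_{<\ell}^\top)\nabla\mathcal{L}(\mathbf{W}_{>\ell}^\top\mathbf{W}_{>\ell})$. Balancedness lets me telescope these partial products: by induction on the SVDs, all factors share aligned singular bases, which yields $\mathbf{W}_{<\ell}\mathbf{W}_{<\ell}^\top = (\mathbf{W}\mathbf{W}^\top)^{(\ell-1)/L}$ and $\mathbf{W}_{>\ell}^\top\mathbf{W}_{>\ell} = (\mathbf{W}^\top\mathbf{W})^{(L-\ell)/L}$. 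This gives
\begin{equation*}
\dot{\mathbf{W}} = -\sum_{\ell=1}^L (\mathbf{W}\mathbf{W}^\top)^{\frac{\ell-1}{L}}\nabla\mathcal{L}(\mathbf{W})(\mathbf{W}^\top\mathbf{W})^{\frac{L-\ell}{L}}.
\end{equation*}

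\textbf{Step 3: singular-value dynamics.} Write the SVD $\mathbf{W}(t)=\mathbf{U}(t)\mathbf{\Sigma}(t)\mathbf{V}(t)^\top$. Analyticity of $\mathcal{L}$ guarantees that $\mathbf{W}(t)$ is analytic in $t$, so by the analytic SVD theorem the singular values and vectors can be chosen analytic. Then $\sigma_k = \mathbf{u}_k^\top\mathbf{W}\mathbf{v}_k$, and differentiating gives
\begin{equation*}
\dot\sigma_k = \mathbf{u}_k^\top\dot{\mathbf{W}}\mathbf{v}_k.
\end{equation*}
The cross terms vanish because $\mathbf{u}_k^\top\dot{\mathbf{u}}_k=0$ and $\dot{\mathbf{v}}_k^\top\mathbf{v}_k=0$ by normalization. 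Plugging in Step 2, each summand contributes
\begin{equation*}
\sigma_k^{2(\ell-1)/L}\,\sigma_k^{2(L-\ell)/L}\,\mathbf{u}_k^\top\nabla\mathcal{L}\,\mathbf{v}_k = (\sigma_k^2)^{1-1/L}\langle\nabla\mathcal{L},\mathbf{u}_k\mathbf{v}_k^\top\rangle.
\end{equation*}
Summing the $L$ identical terms gives the stated formula.

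The main obstacle is Step 2: justifying the fractional-power telescoping from balancedness alone. I would handle it by induction on $\ell$, using the fact that balancedness forces the left singular space of $\mathbf{W}^{(\ell+1)}$ to match the right singular space of $\mathbf{W}^{(\ell)}$ with equal singular values. A second delicate point is the analytic SVD choice when singular values cross, since eigenvector differentiability can fail there; analyticity of the loss is what rescues this.
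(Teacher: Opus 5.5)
Your proposal is correct and follows the route the paper relies on. The paper gives no derivation of its own and simply imports Theorem~3 of~\citep{arora2019implicit}, whose proof goes exactly as yours does: conservation of balancedness, then the end-to-end dynamics $\dot{\mathbf{W}} = -\sum_{\ell}(\mathbf{W}\mathbf{W}^\top)^{(\ell-1)/L}\nabla\mathcal{L}(\mathbf{W})(\mathbf{W}^\top\mathbf{W})^{(L-\ell)/L}$ of~\citep{arora2018optimization}, then differentiation of an analytic SVD.

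Two small refinements are worth making. First, the analytic SVD gives diagonal entries that are analytic but possibly negative and unordered, not the sorted singular values. That is why the rate is written as $(\sigma_k^2)^{1-1/L}$, and your identity $\dot\sigma_k = \mathbf{u}_k^\top\dot{\mathbf{W}}\mathbf{v}_k$ should be read for these signed values. Second, the step you flag as the main obstacle in Step~2 needs no singular-basis alignment. Balancedness gives $\mathbf{W}_{<\ell}\mathbf{W}_{<\ell}^\top = (\mathbf{W}^{(1)}\mathbf{W}^{(1)\top})^{\ell-1}$ by a short direct induction, and uniqueness of positive semidefinite $L$-th roots then identifies this with $(\mathbf{W}\mathbf{W}^\top)^{(\ell-1)/L}$.
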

\end{tcolorbox}

\noindent
\cref{lemma:spectral-norm-feed-forward-projection} reveals that the derivative $| \mathrm{d} \sigma_k(t) / \mathrm{d} t |$ of each singular value is proportional to $L \cdot (\sigma_k^2(t))^{1 - 1 / L}$, which monotonically increases \wrt the layer depth $L$. Consequently, as the model deepens (larger $L$), the spectral norm of the feed-forward projection $\|\mathbf{W}(t)\|_2 = \sigma_1(t)$ grows disproportionately faster than the remaining singular values $\sigma_{k \geq 2}(t)$, driving the representations toward an extreme low-rank state. Empirically, when we replace multiple FFNs with identity mappings, we observe a significant mitigation of spectral collapse in the final predictions, as illustrated in \cref{fig:ablation-spectral-norm}.(2). This validates that stacking more FFNs acts as the second key factor in amplifying popularity bias during model scaling.

\begin{statementbox}[title={Analysis: Spectral Collapse during Scaling}]
The \emph{popularity-prioritized attention aggregation} and \emph{low-rank feed-forward projection} synergistically exacerbate the spectral collapse of model predictions as the model deepens, which intrinsically intensifies popularity bias during model scaling.
\end{statementbox}



\section{Methodology} \label{sec:methodology}

Given the aforementioned concerns regarding popularity bias amplification during model scaling, as well as the underlying causes rooted in spectral collapse from stacking attention and feed-forward layers, blindly scaling up sequential recommenders \emph{without constraining spectral norm growth} will almost inevitably lead to sub-optimal scalability. Motivated by this critical pitfall, we propose \textbf{\mymethod} (\textbf{\uline{S}}calable \textbf{\uline{P}}opularity \textbf{\uline{R}}egularization \textbf{\uline{IN}} \textbf{\uline{T}}ransformers), a principled approach that \emph{explicitly bounds the spectral norm within key transformer components} to mitigate bias amplification and unlock sustainable scalability. In the following, we first present the algorithmic details of \mymethod in \cref{sec:methodology:details}, and subsequently provide theoretical analyses and comparative discussions in \cref{sec:methodology:analyses_and_discussions}.

\subsection{Scalable Popularity Regularization} \label{sec:methodology:details}

Recognizing spectral collapse in deep sequential recommenders as the root cause of popularity bias amplification, our approach tries to address this issue by directly constraining the spectral norm within the transformer components of each layer. Specifically, our proposed \mymethod introduces two regularization terms: \emph{attention regularization} and \emph{feed-forward regularization}.

\noindentparnoline{Attention Regularization.}
As \cref{theorem:spectral-norm-attention-aggregation} reveals, stacking multiple attention layers with similar spectral patterns (\eg prioritizing high-popularity items) inevitably drives exponential growth in the spectral norm. A natural solution to alleviate this effect is to explicitly restrict the spectral norm of the attention score matrix $\mathbf{S}^{(\ell)}$ at each layer $\ell$, thereby upper-bounding the spectral norm of the entire attention aggregation:
\begin{equation} \label{eq:attention-regularization-naive}
    \| \mathbf{S}^{(L)} \mathbf{S}^{(L - 1)} \cdots \mathbf{S}^{(1)} \|_2 \leq \prod_{\ell = 1}^{L} \| \mathbf{S}^{(\ell)} \|_2.
\end{equation}
However, directly optimizing $\| \mathbf{S}^{(\ell)} \|_2$ is highly non-trivial. Computing the exact spectral norm of this $N \times N$ matrix requires expensive SVD operations with a time complexity of $O(|\mathcal{U}| \bar{n}^3)$, which is practically infeasible for large-scale recommendation models. Furthermore, while power iteration methods~\citep{golub2013matrix,golub2000eigenvalue,yoshida2017spectral} can approximate the spectral norm of static parameter weights in quadratic time, they are inapplicable to attention scores, as these matrices are input-dependent and evolve drastically during training.

Instead, we optimize a computationally tractable surrogate upper bound for the spectral norm by leveraging the desirable row-normalized properties of attention matrices:
\begin{equation} \label{eq:attention-regularization-derivation}
    \| \mathbf{S}^{(\ell)} \|_2 \leq \sqrt{\| \mathbf{S}^{(\ell)} \|_1 \cdot \| \mathbf{S}^{(\ell)} \|_\infty} = \sqrt{\| \mathbf{S}^{(\ell)} \|_1},
\end{equation}
where $\| \cdot \|_1$ and $\| \cdot \|_\infty$ denote the matrix 1-norm (\ie maximum absolute column sum) and infinity-norm (\ie maximum absolute row sum), respectively. The last equality holds because $\mathbf{S}^{(\ell)}$ is row-normalized by the softmax function, yielding $\| \mathbf{S}^{(\ell)} \|_\infty = 1$. Note that this relaxation remains tight, given that attention scores are often concentrated on a few popular items (\cf \cref{fig:analysis-attention-scores}). Nonetheless, this surrogate bound is non-differentiable due to the \texttt{max} operation inherent in the 1-norm. To enable gradient-based optimization, we derive a smooth upper bound for the \texttt{max} operation utilizing the \texttt{logsumexp} function:
\begin{equation} \label{eq:attention-regularization}
    \| \mathbf{S}^{(\ell)} \|_1 \leq \widetilde{\| \mathbf{S}^{(\ell)} \|}_{1} \triangleq \frac{1}{\tau} \log \sum_{u, t} \exp \left( \tau \cdot \| \mathbf{S}_{u, t}^{(\ell)} \|_1 \right),
\end{equation}
where $\mathbf{S}_{u, t}^{(\ell)}$ denotes the $t$-th column vector of the attention block $\mathbf{S}_u^{(\ell)}$ for user $u$, and $\tau$ is the temperature hyperparameter. As $\tau \to +\infty$, $\widetilde{\| \mathbf{S}^{(\ell)} \|}_{1}$ converges to the exact 1-norm $\| \mathbf{S}^{(\ell)} \|_1$. In this work, we simply set $\tau = 1$ to eliminate the need for hyperparameter tuning.

Optimizing the surrogate objective in \eqnref{eq:attention-regularization} offers two compelling advantages: (i) \emph{Computational Efficiency}: Calculating the matrix 1-norm only requires summing the columns, reducing the time complexity to $O(|\mathcal{U}| \bar{n}^2)$ and circumventing time-consuming and gradient-unstable SVD operations. (ii) \emph{Intuitive Interpretability}: The column sums of the attention scores directly reflect the overall influence of each item during aggregation. By constraining the maximum column sum, we effectively prevent the model from over-relying on a few highly popular items (\cf \cref{sec:pitfalls-of-scaling-laws:analyses}). Finally, one might question why we do not use an alternative surrogate such as the Frobenius norm $\| \mathbf{S}^{(\ell)} \|_F$. Note that $\| \mathbf{S}^{(\ell)} \|_F^2 = \sum_k \sigma_k^2(\mathbf{S}^{(\ell)})$, blindly penalizing the Frobenius norm could simultaneously suppress the remaining singular values $\sigma_{k \geq 2}(\mathbf{S}^{(\ell)})$, resulting in the underutilization of critical long-tail features.

\noindentparnoline{Feed-forward Regularization.}
Another core component that contributes to spectral collapse is the feed-forward projection, where stacking more FFNs can lead to disproportionately faster growth of the spectral norm, as elaborated in \cref{lemma:spectral-norm-feed-forward-projection}. To overcome this issue, inspired by prior works on spectral regularization~\citep{yoshida2017spectral,miyato2018spectral,liu2019spectral}, we propose to penalize the spectral norm of each weight matrix $\mathbf{W}^{(\ell)}$, thereby constraining the overall spectral norm of the feed-forward projection through the following upper bound:
\begin{equation} \label{eq:feed-forward-regularization-derivation}
    \| \mathbf{W}^{(1)} \mathbf{W}^{(2)} \cdots \mathbf{W}^{(L)} \|_2 \leq \prod_{\ell = 1}^{L} \| \mathbf{W}^{(\ell)} \|_2,
\end{equation}
where $\| \mathbf{W}^{(\ell)} \|_2$ can be efficiently estimated via the power iteration~\citep{golub2013matrix,golub2000eigenvalue,yoshida2017spectral} with a time complexity of $O(d^2)$ for each layer $\ell$. Note that each feed-forward projection $\mathbf{W}^{(\ell)}$ may consist of multiple linear layers, \eg $\mathbf{W}^{(\ell)} = \mathbf{W}_1^{(\ell)} \mathbf{W}_2^{(\ell)}$. For ease of implementation, we can further relax the bound to $\| \mathbf{W}^{(\ell)} \|_2 \leq \| \mathbf{W}_1^{(\ell)} \|_2 \cdot \| \mathbf{W}_2^{(\ell)} \|_2$, opting to regularize the spectral norm of each projection matrix individually. This upper-bound relationship also holds for architectures with non-linear activations up to a constant factor, since common activation functions (\eg ReLU) typically possess a finite Lipschitz constant~\citep{yoshida2017spectral}. By applying this feed-forward regularization, we effectively curb the risk of spectral norm explosion as depth increases, thereby alleviating the spectral collapse issue.

\noindentparnoline{Overall Objective.}
Finally, combining the two aforementioned regularization terms with the specific recommendation loss $\mathcal{L}_{\text{rec}}$, the overall training objective of \textbf{\mymethod} is formulated as\footnote{We take the logarithm of the surrogate bounds derived in \eqnref{eq:attention-regularization-naive} and \eqnref{eq:feed-forward-regularization-derivation} to transform the product into a summation, facilitating layer-wise computation (\eg gradient checkpointing). For the complete formulation in more general cases (\eg with multi-head attention, residual connections, and layer normalization), please refer to \cref{app:theoretical-proofs:proof-of-spectral-norm-bound}.}:
\begin{equation} \label{eq:overall-objective}
    \mathcal{L}_{\text{\mymethod}} = \mathcal{L}_{\text{rec}} + \lambda_{\text{attn}} \sum_{\ell = 1}^{L} \log \widetilde{\| \mathbf{S}^{(\ell)} \|}_{1} + \lambda_{\text{ffn}} \sum_{\ell = 1}^{L} \log \| \mathbf{W}^{(\ell)} \|_2,
\end{equation}
where $\lambda_{\text{attn}}$ and $\lambda_{\text{ffn}}$ are hyperparameters controlling the strength of the attention and feed-forward regularizations, respectively. By unifying these regularizations with the recommendation loss, \mymethod not only improves basic recommendation accuracy but also intrinsically restrains spectral collapse \emph{within each component and across layers}. Consequently, it effectively mitigates the severe popularity bias amplification and unlocks sustainable model scalability.


\begin{table*}[t]
  \centering
    \caption{
    Overall performance comparison on SASRec++ and HSTU. The columns "N@$K$", "H@$K$", and "Fair" denote the NDCG@$K$, HitRatio@$K$, and Fair-0.8 metrics, respectively. The best results are highlighted in bold, and the strongest baselines are underlined. "\textcolor{darkred}{\textbf{Imp.\%}}" indicates the relative improvement of \mymethod over the best baseline (\cf \cref{tab:main-results-10} for @10 results).
    }
  \resizebox{\textwidth}{!}{
    \begin{tabular}{l|ccc|ccc|ccc|ccc|ccc|ccc}
    \toprule
    \multicolumn{1}{c|}{\multirow{2}[4]{*}{\textbf{Methods}}} & \multicolumn{3}{c|}{\textbf{ML-1M}} & \multicolumn{3}{c|}{\textbf{Beauty}} & \multicolumn{3}{c|}{\textbf{Toy}} & \multicolumn{3}{c|}{\textbf{Electronic}} & \multicolumn{3}{c|}{\textbf{Clothing}} & \multicolumn{3}{c}{\textbf{Book}} \\
\cmidrule(lr){2-4} \cmidrule(lr){5-7} \cmidrule(lr){8-10} \cmidrule(lr){11-13} \cmidrule(lr){14-16} \cmidrule(lr){17-19}
    & \textbf{N@5} & \textbf{H@5} & \textbf{Fair} & \textbf{N@5} & \textbf{H@5} & \textbf{Fair} & \textbf{N@5} & \textbf{H@5} & \textbf{Fair} & \textbf{N@5} & \textbf{H@5} & \textbf{Fair} & \textbf{N@5} & \textbf{H@5} & \textbf{Fair} & \textbf{N@5} & \textbf{H@5} & \textbf{Fair} \\
    \midrule
    \textbf{SASRec++} & .0807 & .1291 & .2227 & .0248 & .0406 & .2694 & .0314 & \uline{.0539} & .3741 & .0113 & .0170 & .0029 & .0018 & .0033 & .3806 & .0229 & .0443 & .4128 \\
    +DROS & .0830 & .1320 & .2172 & .0251 & .0410 & .2592 & .0282 & .0482 & .3478 & .0114 & .0175 & .0035 & .0020 & .0038 & \uline{.4157} & .0228 & .0439 & .4154 \\
    +MOJITO & .0812 & .1272 & .2188 & \uline{.0294} & .0429 & \uline{.3202} & \uline{.0363} & .0534 & .3803 & .0120 & .0194 & .0002 & .0025 & .0036 & .3717 & .0276 & .0393 & .2375 \\
    +TPAB & .0799 & .1269 & .1921 & .0220 & .0346 & .2291 & .0237 & .0363 & .3011 & .0053 & .0086 & .0004 & .0004 & .0006 & .3631 & .0252 & .0378 & .2156 \\
    +R$^2$Rec & .0771 & .1267 & .1997 & .0252 & .0426 & .3108 & .0294 & .0476 & \uline{.3878} & \uline{.0133} & \uline{.0203} & .0043 & .0019 & .0034 & .3500 & .0264 & \uline{.0471} & .3028 \\
    +D$^2$LR & \uline{.0829} & .1277 & \uline{.2317} & .0246 & .0432 & .3152 & .0288 & .0489 & .3543 & .0112 & .0169 & .0031 & .0017 & .0030 & .2994 & .0245 & .0467 & \uline{.4208} \\
    +LogDet & .0849 & \uline{.1333} & .1792 & .0272 & .0431 & .1071 & .0322 & .0507 & .2833 & .0122 & .0190 & .0006 & .0025 & .0040 & .1528 & .0249 & .0403 & .1832 \\
    +ReSN & .0739 & .1253 & .1223 & .0281 & \uline{.0452} & .1016 & .0343 & .0520 & .1515 & .0126 & .0193 & \uline{.0175} & \uline{.0032} & \uline{.0048} & .0850 & \uline{.0278} & .0379 & .3572 \\
\cmidrule(lr){1-1} \cmidrule(lr){2-4} \cmidrule(lr){5-7} \cmidrule(lr){8-10} \cmidrule(lr){11-13} \cmidrule(lr){14-16} \cmidrule(lr){17-19}
    \textbf{+\mymethod} & \textbf{.0925} & \textbf{.1405} & \textbf{.2534} & \textbf{.0355} & \textbf{.0505} & \textbf{.4127} & \textbf{.0421} & \textbf{.0584} & \textbf{.4836} & \textbf{.0161} & \textbf{.0231} & \textbf{.1618} & \textbf{.0036} & \textbf{.0055} & \textbf{.4454} & \textbf{.0347} & \textbf{.0491} & \textbf{.4953} \\
    \textcolor{darkred}{\textbf{Imp.\%}} & \textcolor{darkred}{\textbf{8.97\%}} & \textcolor{darkred}{\textbf{5.35\%}} & \textcolor{darkred}{\textbf{2.17\%}} & \textcolor{darkred}{\textbf{21.09\%}} & \textcolor{darkred}{\textbf{11.83\%}} & \textcolor{darkred}{\textbf{9.25\%}} & \textcolor{darkred}{\textbf{15.81\%}} & \textcolor{darkred}{\textbf{8.25\%}} & \textcolor{darkred}{\textbf{9.58\%}} & \textcolor{darkred}{\textbf{20.52\%}} & \textcolor{darkred}{\textbf{14.08\%}} & \textcolor{darkred}{\textbf{14.43\%}} & \textcolor{darkred}{\textbf{13.90\%}} & \textcolor{darkred}{\textbf{15.05\%}} & \textcolor{darkred}{\textbf{2.97\%}} & \textcolor{darkred}{\textbf{24.67\%}} & \textcolor{darkred}{\textbf{4.37\%}} & \textcolor{darkred}{\textbf{7.45\%}} \\
    \midrule
    \textbf{HSTU} & .0784 & .1250 & .1902 & .0267 & .0417 & .3369 & .0310 & .0473 & .4215 & .0159 & .0248 & .0014 & .0022 & .0035 & .4179 & .0269 & .0393 & .1815 \\
    +DROS & .0802 & .1267 & .1934 & .0272 & .0437 & \uline{.3861} & \uline{.0320} & .0473 & \uline{.4951} & .0160 & .0248 & .0009 & .0023 & .0039 & \uline{.4582} & .0276 & .0398 & .1739 \\
    +MOJITO & .0799 & .1299 & .2035 & .0277 & .0422 & .3174 & .0318 & .0441 & .3753 & .0168 & .0264 & .0001 & .0026 & .0039 & .3856 & .0282 & .0412 & .1768 \\
    +TPAB & .0751 & .1219 & .2240 & \uline{.0286} & \uline{.0443} & .3781 & .0299 & .0448 & .4133 & .0114 & .0188 & .0001 & .0010 & .0015 & .3834 & .0290 & .0429 & .1383 \\
    +R$^2$Rec & .0705 & .1180 & .2128 & .0255 & .0423 & .3269 & .0312 & .0475 & .4429 & .0169 & .0263 & .0021 & .0024 & .0038 & .3430 & .0272 & .0417 & .1679 \\
    +D$^2$LR & .0798 & .1279 & \uline{.2305} & .0265 & .0428 & .3824 & .0314 & \uline{.0497} & .4589 & .0164 & .0267 & .0084 & .0023 & .0039 & .4508 & \uline{.0294} & .0426 & \uline{.2490} \\
    +LogDet & \uline{.0828} & \uline{.1306} & .1574 & .0282 & .0430 & .3110 & .0319 & .0496 & .4425 & .0177 & \uline{.0273} & \uline{.0116} & \uline{.0028} & .0041 & .3155 & \uline{.0294} & \uline{.0435} & .1029 \\
    +ReSN & .0740 & .1223 & .1520 & .0256 & .0418 & .1398 & .0289 & .0439 & .2350 & \uline{.0186} & .0257 & .0000 & .0027 & \uline{.0043} & .1042 & .0229 & .0295 & .0752 \\
\cmidrule(lr){1-1} \cmidrule(lr){2-4} \cmidrule(lr){5-7} \cmidrule(lr){8-10} \cmidrule(lr){11-13} \cmidrule(lr){14-16} \cmidrule(lr){17-19}
    \textbf{+\mymethod} & \textbf{.0871} & \textbf{.1357} & \textbf{.2503} & \textbf{.0362} & \textbf{.0514} & \textbf{.4449} & \textbf{.0438} & \textbf{.0589} & \textbf{.5173} & \textbf{.0204} & \textbf{.0297} & \textbf{.2150} & \textbf{.0036} & \textbf{.0055} & \textbf{.4744} & \textbf{.0346} & \textbf{.0481} & \textbf{.3237} \\
    \textcolor{darkred}{\textbf{Imp.\%}} & \textcolor{darkred}{\textbf{5.14\%}} & \textcolor{darkred}{\textbf{3.90\%}} & \textcolor{darkred}{\textbf{1.99\%}} & \textcolor{darkred}{\textbf{26.79\%}} & \textcolor{darkred}{\textbf{16.07\%}} & \textcolor{darkred}{\textbf{5.89\%}} & \textcolor{darkred}{\textbf{36.98\%}} & \textcolor{darkred}{\textbf{18.45\%}} & \textcolor{darkred}{\textbf{2.22\%}} & \textcolor{darkred}{\textbf{9.72\%}} & \textcolor{darkred}{\textbf{8.75\%}} & \textcolor{darkred}{\textbf{20.34\%}} & \textcolor{darkred}{\textbf{30.41\%}} & \textcolor{darkred}{\textbf{28.31\%}} & \textcolor{darkred}{\textbf{1.62\%}} & \textcolor{darkred}{\textbf{17.75\%}} & \textcolor{darkred}{\textbf{10.55\%}} & \textcolor{darkred}{\textbf{7.47\%}} \\
    \bottomrule
    \end{tabular}
    }
  \label{tab:main-results-5}
\end{table*}

\subsection{Analyses and Discussions} \label{sec:methodology:analyses_and_discussions}

\noindentparnoline{Theoretical Guarantees.}
To theoretically verify the effectiveness of our method, we establish the following upper-bound relationship between the spectral norm of model predictions and our proposed \mymethod objective \eqnref{eq:overall-objective} to provide a rigorous guarantee for spectral norm control and popularity bias mitigation:

\begin{tcolorbox}[kessokustyle=bocchipink]
\begin{restatable}{theorem}{RestateTheoremSpectralNormBound} \label{theorem:spectral-norm-bound}
    \upshape
    {(\textit{Theoretical Guarantee of \mymethod}).}
    With proper hyperparameters $\lambda_{\text{attn}}$ and $\lambda_{\text{ffn}}$, the \mymethod regularization terms establish a rigorous upper bound on the spectral norm $\| \hat{\mathbf{Y}} \|_2$ of the final model predictions.
\end{restatable}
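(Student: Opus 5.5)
The plan is to bound $\|\hat{\mathbf{Y}}\|_2$ multiplicatively through the factorization in \cref{eq:output-decomposition}, and then show that its logarithm is dominated by an affine function of the two regularizers in \eqnref{eq:overall-objective}. Since $\hat{\mathbf{Y}} = \mathbf{X}^{(L)} \mathbf{E}^\top$, submultiplicativity gives
\begin{equation*}
    \|\hat{\mathbf{Y}}\|_2 \leq \Big\| \prod_{\ell=L}^{1} \mathbf{S}^{(\ell)} \Big\|_2 \, \|\mathbf{X}^{(0)}\|_2 \, \Big\| \prod_{\ell=1}^{L} \mathbf{W}^{(\ell)} \Big\|_2 \, \|\mathbf{E}\|_2 .
\end{equation*}
The embedding factors $\|\mathbf{X}^{(0)}\|_2$ and $\|\mathbf{E}\|_2$ are treated as bounded constants. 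This is justified because the rows of $\mathbf{X}^{(0)}$ are rows of $\mathbf{E}$, and embeddings are norm-controlled in practice through weight decay or normalization. Alternatively, they can be absorbed into a constant $C_0$.

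\textbf{Attention factor.} I will chain \eqnref{eq:attention-regularization-naive}, \eqnref{eq:attention-regularization-derivation} and \eqnref{eq:attention-regularization}.
\begin{itemize}
    \item Softmax row normalization gives $\|\mathbf{S}^{(\ell)}\|_\infty = 1$, so $\|\mathbf{S}^{(\ell)}\|_2 \leq \sqrt{\|\mathbf{S}^{(\ell)}\|_1}$.
    \item The logsumexp bound then gives $\|\mathbf{S}^{(\ell)}\|_1 \leq \widetilde{\|\mathbf{S}^{(\ell)}\|}_1$, because a log-sum-exp dominates the maximum of its arguments for any $\tau>0$.
\end{itemize}
Hence $\log \|\prod_\ell \mathbf{S}^{(\ell)}\|_2 \leq \tfrac12 \sum_\ell \log \widetilde{\|\mathbf{S}^{(\ell)}\|}_1$.

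\textbf{Feed-forward factor.} I use \eqnref{eq:feed-forward-regularization-derivation} together with the relaxation $\|\mathbf{W}^{(\ell)}\|_2 \leq \|\mathbf{W}_V^{(\ell)}\|_2\|\mathbf{W}_O^{(\ell)}\|_2\|\mathbf{W}_1^{(\ell)}\|_2\|\mathbf{W}_2^{(\ell)}\|_2$. This yields $\log \|\prod_\ell \mathbf{W}^{(\ell)}\|_2 \leq \sum_\ell \log \|\mathbf{W}^{(\ell)}\|_2$, or the sum over the individual matrices.

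\textbf{Combining the two.} Together these give
\begin{equation*}
    \log\|\hat{\mathbf{Y}}\|_2 \leq \log C_0 + \tfrac12 \sum_{\ell} \log \widetilde{\|\mathbf{S}^{(\ell)}\|}_1 + \sum_\ell \log \|\mathbf{W}^{(\ell)}\|_2 .
\end{equation*}
Taking $\lambda_{\text{attn}}:\lambda_{\text{ffn}} = 1:2$, or more generally $\lambda_{\text{attn}}$ and $\lambda_{\text{ffn}}$ proportional to $\tfrac12$ and $1$, the right-hand side equals $\log C_0$ plus a constant multiple of the \mymethod regularization term. This is exactly the meaning of ``with proper hyperparameters'' in the statement. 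Minimizing the objective \eqnref{eq:overall-objective} therefore minimizes an explicit upper bound on $\log\|\hat{\mathbf{Y}}\|_2$.

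\textbf{Nonlinear activation.} I would handle a Lipschitz activation $\mu$ with $\mu(0)=0$ (e.g. ReLU, with constant $1$; GELU, with constant $\approx 1.13$). For such $\mu$, $\|\mu(\mathbf{Z})\|_F \leq \mathrm{Lip}(\mu)\|\mathbf{Z}\|_F$, so passing to the Frobenius norm costs only a constant. Since a single matrix has no subadditive spectral bound under elementwise maps, I would peel the network layer by layer:
\begin{equation*}
    \|\mathbf{X}^{(\ell)}\|_F \leq \mathrm{Lip}(\mu)\, \|\mathbf{S}^{(\ell)}\|_2 \, \|\mathbf{W}_V^{(\ell)}\mathbf{W}_O^{(\ell)}\mathbf{W}_1^{(\ell)}\|_2 \, \|\mathbf{W}_2^{(\ell)}\|_2 \, \|\mathbf{X}^{(\ell-1)}\|_F .
\end{equation*}
Finally, $\|\hat{\mathbf{Y}}\|_2 \leq \|\mathbf{X}^{(L)}\|_F \|\mathbf{E}\|_2$. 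This reproduces the same bound with an extra factor $\mathrm{Lip}(\mu)^L$ in $C_0$.

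\textbf{Residuals, layer norm and multi-head attention.}
\begin{itemize}
    \item \emph{Residual connections.} These replace each factor $a$ by $1+a$, so I would bound $\log(1+a) \leq \log 2 + \max\{0,\log a\}$. This is again monotone in the regularized quantities.
    \item \emph{Layer normalization.} I would treat it as a Lipschitz map on bounded-away-from-zero inputs, using the learned gain's norm.
    \item \emph{Multi-head attention.} Concatenation is bounded by a sum over heads, controlled by the largest per-head 1-norm.
\end{itemize}

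The main obstacle is making the extension beyond the linear, residual-free model rigorous without the constant $C_0$ silently depending on unregularized quantities. Layer normalization in particular can rescale by $1/\mathrm{std}$, which is unbounded in principle. For that step I would assume a lower bound on the per-token standard deviation (equivalently, an $\varepsilon$ in the LayerNorm denominator). I would state the theorem as a bound \emph{up to architecture-dependent constants independent of the regularized parameters}, which is what the informal phrase ``rigorous upper bound'' should be read as.
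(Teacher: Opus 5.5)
Your argument is correct and follows the paper's skeleton: submultiplicativity, $\|\mathbf{S}^{(\ell)}\|_2 \le \sqrt{\|\mathbf{S}^{(\ell)}\|_1}$ with the log-sum-exp surrogate, per-block amplification factors for residuals, normalization and heads, then logarithms. Three differences are worth knowing.

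First, for the activation the paper stays in the spectral norm at every block, using $\|\sigma(\mathbf{Z})\|_2 \le \|\sigma(\mathbf{Z})\|_F \le L_\sigma \|\mathbf{Z}\|_F \le L_\sigma \sqrt{d}\, \|\mathbf{Z}\|_2$. It therefore pays $L_\sigma \sqrt{d}$ per feed-forward block, i.e.\ $d^{L/2}$ overall. Your Frobenius recursion loses only $\mathrm{Lip}(\mu)$ per layer, plus a single $\|\mathbf{X}^{(0)}\|_F \le \sqrt{d}\,\|\mathbf{X}^{(0)}\|_2$ at the input, so it is tighter as soon as $L \ge 2$. It also makes the multi-head step cleaner: the identity $\|[\mathbf{V}_1', \dots, \mathbf{V}_H']\|_F^2 = \sum_h \|\mathbf{V}_h'\|_F^2$ recovers the paper's $\sqrt{\sum_h \widetilde{\|\mathbf{S}_h\|}_1 \|\mathbf{W}_{V,h}\|_2^2}$ factor without its concatenation inequality.

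Second, the paper does not park $\|\mathbf{E}\|_2$ in a constant. It puts $\mathbf{E}$ under the same spectral penalty as the feed-forward weights and bounds $\|\mathbf{X}^{(0)}\|_2$ through it. By your own criterion you need this. In the residual-free linear model, multiply $\mathbf{E}$ by $c$ and the (unpenalized) query projections by $1/c^2$. This leaves every $\mathbf{S}^{(\ell)}$ and $\mathbf{W}^{(\ell)}$, hence every \mymethod term, unchanged, while $\hat{\mathbf{Y}}$ is multiplied by $c^2$. Weight decay controls $C_0$ only in practice, not through the regularizers. (Strictly, $\mathbf{X}^{(0)} = \mathbf{P}\mathbf{E}$ with $\mathbf{P}^\top \mathbf{P} = \operatorname{diag}(\mathbf{r})$. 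So the paper's ``trivially bounded by $\|\mathbf{E}\|_2$'' should read $\|\mathbf{X}^{(0)}\|_2 \le \sqrt{\max_i \mathbf{r}_i}\, \|\mathbf{E}\|_2$, which only adds a harmless data-dependent constant.)

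Third, your residual step $\log(1+a) \le \log 2 + \max\{0, \log a\}$ is more careful than the paper's. The paper multiplies the factors $(1 + C_{\text{block}})$ and then asserts that the logarithm recovers the objective exactly. In both versions, though, residuals really give a bound that is monotone in each regularized factor, not one dominated by the weighted regularizer sum. A term such as $\log \|\mathbf{W}^{(\ell)}\|_2 \to -\infty$ lowers the sum but not $\log(1+a)$.

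Finally, your explicit ratio $\lambda_{\text{attn}} : \lambda_{\text{ffn}} = 1 : 2$, which comes from the square root, is a faithful reading of ``proper hyperparameters''; the paper leaves this implicit.
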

\end{tcolorbox}

\noindent
The proof is provided in \cref{app:theoretical-proofs:proof-of-spectral-norm-bound}. \cref{theorem:spectral-norm-bound} ensures that minimizing the regularizers in \mymethod effectively constrains the spectral norm of model predictions, thereby providing a theoretical guarantee for mitigating popularity bias (\cf \cref{lemma:popularity-bias-spectral-norm}).

\noindentparnoline{Computational Efficiency.}
\mymethod is highly computationally efficient. Specifically, the attention regularization merely involves calculating the column sums of the already-computed attention scores, while the feed-forward regularization is efficiently estimated via the power iteration method\footnote{Following prior works~\citep{yoshida2017spectral,miyato2018spectral}, we do only one power iteration step per forward pass.} at each layer. Consequently, the additional time complexity incurred by \mymethod is $O(L|\mathcal{U}|\bar{n}^2 + L d^2)$, which is negligible compared to the overall forward and backward passes. Empirically, as illustrated in \cref{fig:runtime-comparison}, we observe that \mymethod incurs only about 3\% additional training time. This overhead is highly acceptable given the substantial improvements in both recommendation accuracy and fairness.

\noindentparnoline{Discussions on CF Methods.}
Existing methods in traditional collaborative filtering (CF) have explored mitigating popularity bias by regularizing the spectral norm, which conceptually aligns with our motivation. For instance, ReSN~\citep{lin2025recommendation} proposes to approximate and penalize the spectral norm by factorizing it based on user and item embeddings; LogDet~\citep{zhang2023mitigating} applies redundancy decorrelation regularization to user/item covariance matrices to mitigate embedding collapse in graph-based CF. While these methods achieve promising results in CF scenarios, directly adapting them to sequential recommendation presents two fundamental limitations: (i) They typically rely on static user embeddings. However, in sequential recommendation, user preferences are dynamic and can evolve drastically over time, which renders the optimization of global spectral penalties highly unstable. (ii) They treat the model as a black box and fail to address the unique mechanisms of the transformer architecture, where spectral collapse is inherently exacerbated by the deep stacking of attention and feed-forward layers. In contrast, \mymethod adopts a targeted, layer-wise regularization strategy that explicitly bounds the spectral norm within the specific components responsible for the collapse, thereby enabling stable and highly effective optimization for deep sequential recommenders.



\begin{figure*}[t]
    \centering
    \includegraphics[width=0.95\textwidth]{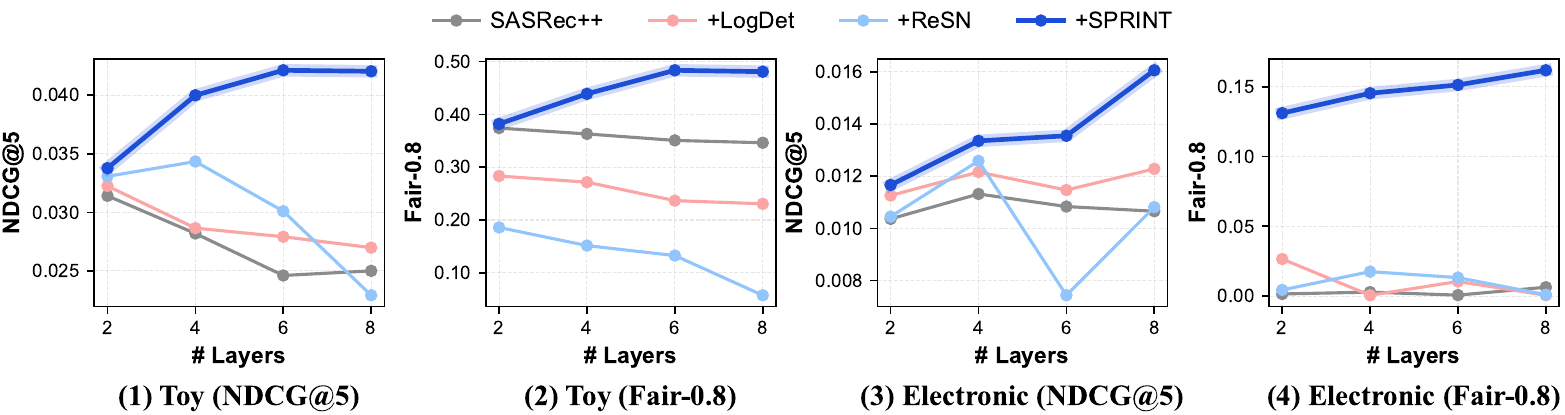}
    \caption{
        Layer scaling results of \mymethod and representative baselines on the SASRec++ backbone, where the number of layers is varied from 2 to 8, and the accuracy and fairness metrics are reported. Refer to \cref{fig:layer-scaling-ml1m-beauty} for results on other datasets.
    }
    \Description{Layer scaling results.}
    \label{fig:layer-scaling-toy-electronic}
\end{figure*}

\begin{figure*}[t]
    \centering
    \includegraphics[width=\textwidth]{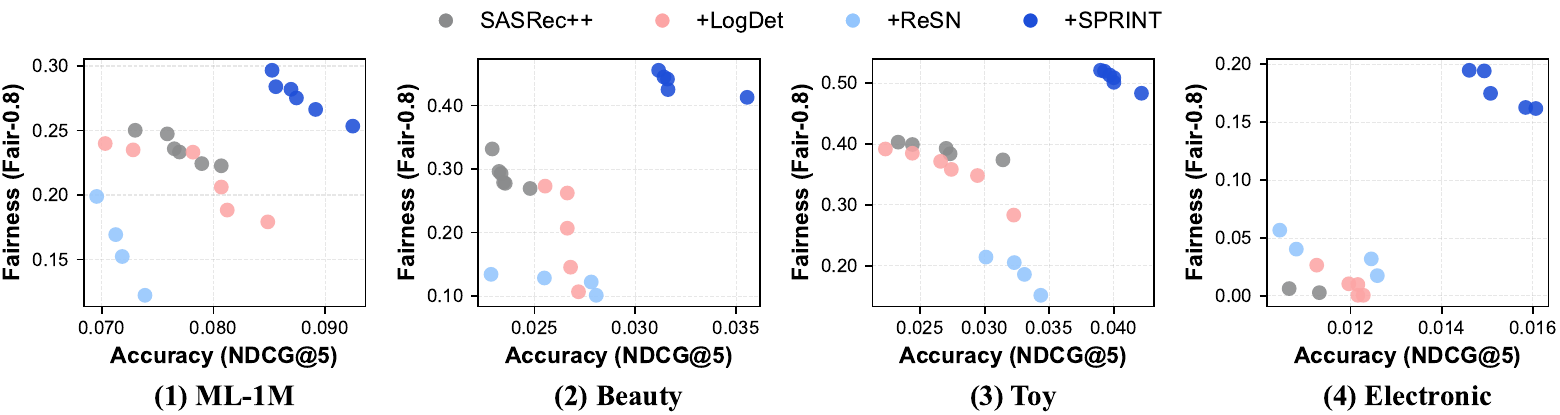}
    \caption{
        Pareto frontier of accuracy and fairness for \mymethod and representative baselines on the SASRec++ backbone.
    }
    \Description{Pareto frontier of accuracy and fairness.}
    \label{fig:pareto}
\end{figure*}

\section{Experiments} \label{sec:experiments}

In this section, we conduct extensive experiments to answer the following research questions (RQs):

\begin{itemize}[topsep=3pt,leftmargin=10pt,itemsep=0pt]
    \item \textbf{RQ1}: How does \mymethod perform compared to existing methods in terms of recommendation accuracy and fairness?
    \item \textbf{RQ2}: How does \mymethod perform under different model scales, especially when stacking more transformer layers?
    \item \textbf{RQ3}: How do the attention and feed-forward regularizations contribute to the overall performance of \mymethod?
    \item \textbf{RQ4}: Can \mymethod generalize to other recommendation scenarios, \eg generative recommendation?
\end{itemize}


\subsection{Experimental Setup} \label{sec:experiments:settings}

\noindentparnolinenospace{Datasets.}
To ensure a fair comparison, we conduct experiments on six widely used datasets from MovieLens~\citep{harper2015movielens} and Amazon~\citep{he2016ups,mcauley2015image}, specifically including ML-1M/20M, Beauty, Toy, Electronic, Clothing, and Book. The dataset preprocessing strictly follows the previous studies~\citep{kang2018self,yang2024psl}; further details are provided in \cref{app:experimental-supplementary:datasets}.

\noindentparnoline{Recommendation Backbones.}
We evaluate each method on two advanced transformer-based sequential recommendation backbones, namely SASRec++~\citep{zhang2024scaling,guo2024scaling} (2024) and HSTU~\citep{zhai2024actions} (2024). In the main experiments, we scale the number of layers up to 8 and the embedding dimensionality up to 256. These scale settings deliberately exceed the typical configurations used in prior studies, allowing us to better assess the scalability of different methods. Furthermore, we conduct large-scale experiments on the ML-20M dataset using SASRec++ with up to 0.34B non-embedding parameters (12 layers and 1536-dimensional embeddings), as illustrated in \cref{fig:ml-20m-scaling-laws}. Please refer to \cref{app:experimental-supplementary:backbone-details} for complete backbone details.

\noindentparnoline{Compared Methods.} To validate the effectiveness of \mymethod, we compare it against a diverse set of representative debiasing methods from the following categories: (i) \textit{spectral regularization}, \eg LogDet~\citep{zhang2023mitigating} (NeurIPS '23) and ReSN~\citep{lin2025recommendation} (WSDM '25); (ii) \textit{propensity reweighting}, \eg R$^2$Rec~\citep{li2025reembedding} (WWW '25) and D$^2$LR~\citep{lu2025dual} (SIGIR '25); (iii) \textit{temporal-aware disentanglement}, \eg MOJITO~\citep{tran2023attention} (SIGIR '23) and TPAB~\citep{yoo2025generalizable} (KDD '25); and (iv) \textit{distributionally robust optimization}, \eg DROS~\citep{yang2023generic} (SIGIR '23). We closely follow the original implementations and settings of these methods, adapting them to the sequential recommendation scenario when necessary. Detailed implementations and hyperparameter configurations are provided in \cref{app:experimental-supplementary:compared-methods}.

\noindentparnoline{Evaluation Metrics.}
We evaluate recommendation performance from the perspectives of both accuracy and fairness. For accuracy, we report the widely used NDCG@$K$ and HitRatio@$K$ metrics, where $K \in \{5, 10\}$. For fairness, we mainly adopt the Fairness metric Fair-$\alpha$~\citep{lin2025recommendation} measuring the exposure ratio of bottom-$\alpha$ long-tail items ($\alpha = 0.8$ in the main experiments, \cf \cref{sec:preliminaries:popularity-bias}). Refer to \cref{tab:fairness-results} for additional fairness metrics and results.



\subsection{Experimental Results} \label{sec:experiments:results}

\noindentparnolinenospace{(RQ1) Overall Performance.}
\cref{tab:main-results-5,tab:main-results-10} present the overall recommendation performance comparison between \mymethod and the baseline methods on the SASRec++ and HSTU backbones. As revealed, \mymethod consistently outperforms all compared methods across all six datasets, yielding an average improvement of \emph{+15.70\%} in NDCG/HR@5 and \emph{+7.12\%} in Fair-0.8. Furthermore, as illustrated in \cref{fig:pareto}, \mymethod establishes a strictly \emph{superior Pareto frontier} in the accuracy-fairness trade-off compared to all baselines. This clearly demonstrates the superiority of \mymethod in simultaneously enhancing recommendation quality and mitigating popularity bias. \cref{tab:fairness-results} reports results on additional fairness metrics.

\noindentparnoline{(RQ2) Layer Scaling Performance.}
As theoretically revealed in \cref{sec:pitfalls-of-scaling-laws:analyses}, the popularity bias amplification induced by spectral collapse becomes increasingly severe as more layers are stacked, fundamentally bottlenecking scalability. To examine the effectiveness of \mymethod within deep transformer architectures, we conduct layer-scaling experiments, the results of which are depicted in \cref{fig:layer-scaling-toy-electronic} and \cref{fig:layer-scaling-ml1m-beauty}. The results demonstrate that our method consistently achieves better accuracy and improved fairness as the model deepens. In contrast, the performance of baseline methods typically stagnates or even degrades due to the unconstrained amplification of popularity bias. This empirical observation perfectly coincides with our discussion in \cref{sec:methodology:analyses_and_discussions}: the targeted, layer-wise regularization design of \mymethod effectively curbs spectral collapse in deep networks, thereby unlocking sustainable scalability.

\noindentparnoline{(RQ3) Ablation Study and Hyperparameter Sensitivity.}
As shown in \cref{tab:ablation-study}, both the attention and feed-forward regularizations contribute substantially to the overall performance of \mymethod, with the optimal results achieved when both regularizations are jointly applied. We further analyze the sensitivity of \mymethod to its key hyperparameters, $\lambda_{\text{attn}}$ and $\lambda_{\text{ffn}}$, which control the strength of the respective regularizations. Results in \cref{fig:hyper-parameter-sensitivity} exhibit a clear performance peak, highlighting the dynamic trade-off between accuracy and fairness. Notably, \mymethod remains highly robust across a reasonably wide range of hyperparameter values, facilitating its practical deployment without the need for exhaustive fine-tuning.

\noindentparnoline{(RQ4) Generalization to Generative Recommendation.}
Finally, we extend \mymethod to the emerging paradigm of generative recommendation, where each item is represented by a unique sequence of semantic IDs. Specifically, we adopt the widely recognized TIGER~\citep{rajput2023recommender} and LETTER~\citep{wang2024learnable} models as our backbones, and seamlessly apply \mymethod to regularize the self-attention and feed-forward blocks within their underlying T5~\citep{raffel2020exploring} architectures. As reported in \cref{tab:generative-recommendation-results}, \mymethod consistently yields notable performance gains across both generative backbones. This highlights the strong generalization capability of our proposed method.

\begin{table}[t]
  \centering
  \caption{\mymethod in generative recommendation.}
  \resizebox{\columnwidth}{!}{
    \begin{tabular}{l|ccc|ccc}
    \toprule
    \multicolumn{1}{c|}{\multirow{2}[4]{*}{\textbf{Backbones}}} & \multicolumn{3}{c|}{\textbf{Beauty}} & \multicolumn{3}{c}{\textbf{Toy}} \\
\cmidrule(lr){2-4} \cmidrule(lr){5-7}
    & \textbf{NDCG@5} & \textbf{HR@5} & \textbf{Fair} & \textbf{NDCG@5} & \textbf{HR@5} & \textbf{Fair} \\
    \midrule
    \textbf{TIGER} & 0.0306 & 0.0450 & 0.1081 & 0.0318 & 0.0487 & 0.2055 \\
    \textbf{+\mymethod} & \textbf{0.0343} & \textbf{0.0503} & \textbf{0.1609} & \textbf{0.0328} & \textbf{0.0504} & \textbf{0.2597} \\
    \midrule
    \textbf{LETTER} & 0.0301 & 0.0450 & 0.0596 & 0.0330 & 0.0473 & 0.2162 \\
    \textbf{+\mymethod} & \textbf{0.0324} & \textbf{0.0482} & \textbf{0.1216} & \textbf{0.0341} & \textbf{0.0493} & \textbf{0.2671} \\
    \bottomrule
    \end{tabular}
  }
  \label{tab:generative-recommendation-results}
\end{table}



\section{Related Work} \label{sec:related_work}

\noindentparnolinenospace{Sequential Recommendation.}
Sequential recommendation~\citep{chang2021sequential,chen2018sequential}, which aims to model dynamic user preferences and temporal dependencies in user-item interactions, has long been a fundamental research topic in recommender systems. Early works in this area primarily focus on leveraging RNNs~\citep{hidasi2015session} or CNNs~\citep{tang2018personalized} to capture sequential patterns. With the advent of the attention mechanism~\citep{vaswani2017attention}, transformer-based models~\citep{kang2018self,sun2019bert4rec} have become the dominant paradigm in sequential recommendation due to their superior capability in modeling long-range dependencies and facilitating parallel computation. In recent years, by aggressively scaling model sizes to billions of parameters~\citep{zhai2024actions,zivic2024scaling,shen2024optimizing,zhang2024scaling,chai2025longer,xu2025climber}, the capabilities of sequential recommenders have been significantly advanced, achieving remarkable improvements in recommendation quality. However, while these scaling efforts yield substantial performance gains, they inherently exacerbate popularity bias. This severe side effect has been largely overlooked in the literature, and our work precisely aims to fill this critical gap.

\noindentparnoline{Popularity Bias and Debiasing.}
In recommender systems, popularity bias refers to the notorious phenomenon where popular items are recommended disproportionately more frequently than niche ones, inevitably leading to unfairness and reduced diversity~\citep{chen2023bias}. Several lines of research have explored the causes of this bias, attributing it to confounding causal factors~\citep{wang2021deconfounded,wei2021model}, long-tail data distributions~\citep{zhang2021causal}, and low-rank embedding spaces~\citep{ohsaka2023curse,zhang2023mitigating}. Given its detrimental impacts, various debiasing strategies have been proposed. One prominent approach utilizes causal inference techniques to disentangle popularity bias from genuine user preferences~\citep{ding2022addressing,li2023balancing,yoo2025generalizable}. Yet, these methods typically require carefully crafted causal graphs, which are difficult to generalize across diverse scenarios. Another major direction involves propensity-based reweighting~\citep{gruson2019offline,zhang2023model,lu2025dual}, which estimates the propensity scores of items to reweight training samples. However, blindly penalizing popular signals can easily result in suboptimal performance. Other efforts address popularity distribution shifts via temporal-aware modeling~\citep{tran2023attention,yang2023generic}; unfortunately, they merely capture the temporal dynamics of the bias without tackling its amplification. Recently, regularization-based methods focusing on decorrelation and spectral collapse~\citep{zhu2021popularity,zhang2023mitigating,lin2025recommendation} have been proposed to constrain the optimization trajectory toward more balanced solutions, achieving promising results in standard collaborative filtering. Nonetheless, directly adapting these techniques to sequential recommendation is highly non-trivial due to the unique bias amplification mechanisms inherent in deep transformer architectures. Our work bridges this gap by rigorously analyzing the underlying architectural causes of this amplification and introducing a layer-wise regularization framework to effectively mitigate the issue, thereby unlocking sustainable scalability for deep sequential recommenders.


\section{Conclusion and Future Directions} \label{sec:conclusion_and_future_directions}

This work identifies a critical pitfall in scaling transformer-based sequential recommenders: although increasing model size generally improves recommendation accuracy, it simultaneously amplifies popularity bias. Our theoretical and empirical analyses reveal that this phenomenon is fundamentally rooted in the spectral collapse induced by stacking more attention and feed-forward layers within the transformer. To address this issue, we propose \mymethod, a simple yet effective regularization framework that constrains the maximum column sums of the attention matrix alongside the spectral norms of the feed-forward parameters. Theoretical analysis guarantees that \mymethod effectively curbs popularity bias by strictly upper-bounding the overall spectral norm of model predictions. Extensive experiments demonstrate that \mymethod consistently outperforms various representative baselines in terms of both accuracy and fairness. Crucially, \mymethod yields significantly more favorable scaling behaviors, achieving superior accuracy scaling while maintaining a balanced popularity distribution.

Looking forward, exploring more computationally efficient implementations of \mymethod (\eg integrating it with state-of-the-art efficient attention mechanisms like FlashAttention~\citep{dao2024flashattention}) presents a highly promising direction. Furthermore, it would be intriguing to investigate how other scaling dimensions (\eg the number of attention heads, embedding dimensionality, and training data volume) intricately impact the recommendation fairness.



\begin{acks}
This work is supported by the National Natural Science Foundation of China (62372399, 62476244), the Starry Night Science Fund of Zhejiang University Shanghai Institute for Advanced Study (SN-ZJU-SIAS-001), and the advanced computing resources provided by the Supercomputing Center of Hangzhou City University.
\end{acks}



\clearpage
\bibliographystyle{ACM-Reference-Format}
\bibliography{references}




\appendix


\section{Theoretical Proofs} \label{app:theoretical-proofs}


\subsection{Proof of \cref{theorem:spectral-norm-attention-aggregation}} \label{app:theoretical-proofs:proof-of-spectral-norm-attention-aggregation}

To prove \cref{theorem:spectral-norm-attention-aggregation}, we first present two lemmas establishing the stability of the spectral norm and singular vectors under perturbations.

\begin{tcolorbox}[kessokustyle=nijikayellow]
\begin{restatable}{lemma}{RestatableApproximateSVDPerturbation} \label{lemma:approximate-svd-perturbation}
    \upshape
    {(\textit{Stability of Spectral Norm under Perturbation}).}
    Consider a matrix $\mathbf{S}$ and its perturbed SVD $\mathbf{S} = \mathbf{U} \mathbf{\Sigma} \mathbf{V}^\top + \mathbf{\Delta}$, where $\mathbf{U}, \mathbf{V}$ are orthogonal matrices, $\mathbf{\Sigma} = \operatorname{diag}(\sigma_1, \sigma_2, \cdots)$ is the diagonal matrix of singular values, and $\mathbf{\Delta}$ is the perturbation satisfying $\|\mathbf{\Delta}\|_2 \leq \epsilon$. Then, the spectral norm of $\mathbf{S}$ is lower-bounded by:
    \begin{equation} \label{eq:approximate-svd-perturbation-spectral-norm-lower-bound}
        \|\mathbf{S}\|_2 \geq \sigma_1 - \epsilon.
    \end{equation}
\end{restatable}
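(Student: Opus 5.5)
The bound follows from the reverse triangle inequality for the spectral norm, combined with the fact that orthogonal factors do not change the spectral norm. Write $\mathbf{S}_0 \triangleq \mathbf{U} \mathbf{\Sigma} \mathbf{V}^\top$, so that $\mathbf{S} = \mathbf{S}_0 + \mathbf{\Delta}$. The argument has two steps.

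\emph{Step 1: compute the norm of the unperturbed part.} Since $\mathbf{U}$ and $\mathbf{V}$ are orthogonal, $\|\mathbf{U}\mathbf{\Sigma}\mathbf{V}^\top \mathbf{x}\|_2 = \|\mathbf{\Sigma}\mathbf{V}^\top\mathbf{x}\|_2$ for every $\mathbf{x}$, and $\mathbf{y} = \mathbf{V}^\top\mathbf{x}$ ranges over the whole unit sphere as $\mathbf{x}$ does. Hence
\begin{equation*}
    \|\mathbf{S}_0\|_2 = \|\mathbf{\Sigma}\|_2 = \max_k |\sigma_k| = \sigma_1,
\end{equation*}
using the convention that the diagonal entries are nonnegative and sorted in decreasing order. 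If the ordering were not assumed, the argument below would still go through with $\sigma_1$ replaced by $\max_k \sigma_k \ge \sigma_1$.

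\emph{Step 2: apply the reverse triangle inequality.} The spectral norm is a norm, so
\begin{equation*}
    \|\mathbf{S}\|_2 = \|\mathbf{S}_0 + \mathbf{\Delta}\|_2 \ge \|\mathbf{S}_0\|_2 - \|\mathbf{\Delta}\|_2 \ge \sigma_1 - \epsilon.
\end{equation*}

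A more constructive variant, useful when the lemma is later reused in the proof of \cref{theorem:spectral-norm-attention-aggregation}, tests $\mathbf{S}$ on the principal right singular vector $\mathbf{v}_1$. Since $\mathbf{S}_0 \mathbf{v}_1 = \sigma_1 \mathbf{u}_1$, we get
\begin{equation*}
    \|\mathbf{S}\mathbf{v}_1\|_2 \ge \|\sigma_1 \mathbf{u}_1\|_2 - \|\mathbf{\Delta}\mathbf{v}_1\|_2 \ge \sigma_1 - \epsilon,
\end{equation*}
and therefore $\|\mathbf{S}\|_2 \ge \|\mathbf{S}\mathbf{v}_1\|_2 \ge \sigma_1 - \epsilon$. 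This version also exhibits an explicit near-maximizing direction, which is what the stacked-product argument will need.

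There is no substantive obstacle here. The only point needing care is the convention that $\mathbf{\Sigma}$ has nonnegative, decreasingly ordered diagonal entries, so that $\sigma_1$ really is the spectral norm of $\mathbf{S}_0$. I would state this convention explicitly, or use the $\mathbf{v}_1$-testing version, which needs only $\sigma_1 \ge 0$. Note that the perturbation-dependent $O(\epsilon^2)$ terms in \cref{eq:spectral-norm-attention-aggregation} do not come from this lemma: they arise from singular-vector perturbation, which is presumably handled by the second lemma.
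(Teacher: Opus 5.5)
Your proposal is correct, and your ``constructive variant'' is exactly the paper's proof: it evaluates $\mathbf{S}$ on the principal right singular vector $\mathbf{v}_1$, writes $\mathbf{S}\mathbf{v}_1 = \sigma_1 \mathbf{u}_1 + \mathbf{\Delta}\mathbf{v}_1$, and applies the triangle inequality to get $\|\mathbf{S}\|_2 \geq \|\mathbf{S}\mathbf{v}_1\|_2 \geq \sigma_1 - \epsilon$. Your primary argument, the matrix-level reverse triangle inequality $\|\mathbf{S}_0 + \mathbf{\Delta}\|_2 \geq \|\mathbf{S}_0\|_2 - \|\mathbf{\Delta}\|_2$, is just an equivalent one-line version of the same step, and the vector version does not even need a sign convention, since $\|\sigma_1 \mathbf{u}_1\|_2 = |\sigma_1| \geq \sigma_1$.
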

\end{tcolorbox}

\begin{proof}[Proof of \cref{lemma:approximate-svd-perturbation}]
Let $\mathbf{v}_1$ be the principal right singular vector of the unperturbed component $\mathbf{U} \mathbf{\Sigma} \mathbf{V}^\top$, which corresponds exactly to the first column of $\mathbf{V}$. Then, we have:
\begin{equation}
    \mathbf{S} \mathbf{v}_1 = \mathbf{U} \mathbf{\Sigma} \mathbf{V}^\top \mathbf{v}_1 + \mathbf{\Delta} \mathbf{v}_1 = \sigma_1 \mathbf{u}_1 + \mathbf{\Delta} \mathbf{v}_1.
\end{equation}
By the triangle inequality, it follows that:
\begin{equation}
    \|\mathbf{S} \mathbf{v}_1\|_2 \geq \sigma_1 - \|\mathbf{\Delta} \mathbf{v}_1\|_2 \geq \sigma_1 - \epsilon.
\end{equation}
By the definition of the spectral norm, we obtain:
\begin{equation}
    \|\mathbf{S}\|_2 = \max_{\|\mathbf{v}\|_2 = 1} \|\mathbf{S} \mathbf{v}\|_2 \geq \|\mathbf{S} \mathbf{v}_1\|_2 \geq \sigma_1 - \epsilon,
\end{equation}
which completes the proof.
\end{proof}

\begin{tcolorbox}[kessokustyle=nijikayellow]
\begin{restatable}{lemma}{RestatableApproximateSVDPerturbationSin} \label{lemma:approximate-svd-perturbation-sin}
    \upshape
    {(\textit{Stability of Singular Vectors under Perturbation}).}
    Under the same notation and assumptions as in \cref{lemma:approximate-svd-perturbation}, let $(\hat{\mathbf{u}}_1, \hat{\mathbf{v}}_1)$ be the principal singular vectors of $\mathbf{S}$, and $(\mathbf{u}_1, \mathbf{v}_1)$ be the principal singular vectors of the unperturbed component $\mathbf{U} \mathbf{\Sigma} \mathbf{V}^\top$. Assume that the spectral gap between the first and second largest singular values of $\mathbf{U} \mathbf{\Sigma} \mathbf{V}^\top$ satisfies $\sigma_1 \geq \sigma_2 + \delta$ for some constant $\delta > 0$. Then, there exists a constant $C = 2 + \epsilon / \sigma_1$ such that:
    \begin{equation} \label{eq:approximate-svd-perturbation-sin}
        \sin \theta(\hat{\mathbf{u}}_1, \mathbf{u}_1) \leq \frac{C \epsilon}{\delta}, \quad
        \sin \theta(\hat{\mathbf{v}}_1, \mathbf{v}_1) \leq \frac{C \epsilon}{\delta},
    \end{equation}
    where $\theta(\cdot, \cdot)$ denotes the angle between two vectors.
\end{restatable}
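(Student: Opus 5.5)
The plan is to prove this as a Wedin-type $\sin\theta$ bound, specialised to the principal singular pair. Write $\mathbf{S}_0 = \mathbf{U}\mathbf{\Sigma}\mathbf{V}^\top$ and $\hat\sigma_1 = \|\mathbf{S}\|_2$. By \cref{lemma:approximate-svd-perturbation} (and the triangle inequality in the other direction), $|\hat\sigma_1 - \sigma_1| \le \epsilon$.

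\textbf{Residual equations.} Use $\mathbf{S}\hat{\mathbf{v}}_1 = \hat\sigma_1 \hat{\mathbf{u}}_1$ and $\mathbf{S}^\top \hat{\mathbf{u}}_1 = \hat\sigma_1 \hat{\mathbf{v}}_1$. Substituting $\mathbf{S} = \mathbf{S}_0 + \mathbf{\Delta}$ gives
\begin{equation*}
\mathbf{S}_0 \hat{\mathbf{v}}_1 - \hat\sigma_1 \hat{\mathbf{u}}_1 = -\mathbf{\Delta}\hat{\mathbf{v}}_1, \qquad \mathbf{S}_0^\top \hat{\mathbf{u}}_1 - \hat\sigma_1 \hat{\mathbf{v}}_1 = -\mathbf{\Delta}^\top \hat{\mathbf{u}}_1.
\end{equation*}
Both residuals have norm at most $\epsilon$.

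\textbf{Projection onto the complement.} Let $\mathbf{P}_\perp = \mathbf{I} - \mathbf{u}_1\mathbf{u}_1^\top$ and $\mathbf{Q}_\perp = \mathbf{I} - \mathbf{v}_1\mathbf{v}_1^\top$. Since $\mathbf{S}_0$ maps $\operatorname{span}(\mathbf{v}_1)^\perp$ into $\operatorname{span}(\mathbf{u}_1)^\perp$ with norm at most $\sigma_2$, we have $\mathbf{P}_\perp \mathbf{S}_0 = \mathbf{S}_0 \mathbf{Q}_\perp$. Projecting the residual equations then yields, with $a = \|\mathbf{P}_\perp \hat{\mathbf{u}}_1\| = \sin\theta(\hat{\mathbf{u}}_1,\mathbf{u}_1)$ and $b = \|\mathbf{Q}_\perp \hat{\mathbf{v}}_1\| = \sin\theta(\hat{\mathbf{v}}_1,\mathbf{v}_1)$,
\begin{equation*}
\hat\sigma_1 a \le \sigma_2 b + \epsilon, \qquad \hat\sigma_1 b \le \sigma_2 a + \epsilon.
\end{equation*}

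\textbf{Closing the system.} Adding the two inequalities gives $(\hat\sigma_1 - \sigma_2)(a+b) \le 2\epsilon$. This is exactly where the constant $2$ in $C$ comes from. Then:
\begin{itemize}
\item If $\epsilon < \delta$, we get $\hat\sigma_1 - \sigma_2 \ge \sigma_1 - \epsilon - \sigma_2 \ge \delta - \epsilon$, so $\max(a,b) \le 2\epsilon/(\delta - \epsilon)$.
\item If $\epsilon \ge \delta$, the bound is trivial, since $C\epsilon/\delta \ge 2 > 1 \ge \sin\theta$.
\end{itemize}

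\textbf{Main obstacle.} The delicate step is converting $2\epsilon/(\delta-\epsilon)$ into the stated form $(2+\epsilon/\sigma_1)\epsilon/\delta$. The denominator loss cannot always be absorbed, so I would avoid subtracting $\epsilon$ from the gap.

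The way to do this is to not bound $\hat\sigma_1 - \sigma_2$ directly. Instead, rewrite the first inequality as $\sigma_1 a \le \sigma_2 b + \epsilon + |\sigma_1 - \hat\sigma_1|\,a \le \sigma_2 b + \epsilon + \epsilon a$, and similarly for $b$. Summing gives $(\sigma_1 - \sigma_2)(a+b) \le 2\epsilon + \epsilon(a+b)$.

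A sharper version of this bookkeeping keeps the extra error relative to $\sigma_1$ by scaling the equations by $1/\hat\sigma_1$: dividing through, the ratio $\sigma_2/\hat\sigma_1 \le \sigma_2/\sigma_1 + O(\epsilon/\sigma_1)$ yields a correction of order $\epsilon^2/(\sigma_1\delta)$. That correction is exactly the $\epsilon/\sigma_1$ term in $C$.

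I expect this last constant-tracking to be the fiddly part. If it resists, I will settle for $C = 2 + O(\epsilon/\delta)$ and note that this suffices for the downstream use in \cref{theorem:spectral-norm-attention-aggregation}, where only the $O(\epsilon/\delta)$ scaling matters.
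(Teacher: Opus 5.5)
\textbf{The closing step has a gap.} Your setup is correct: the residual equations, the identity $\mathbf{P}_\perp\mathbf{S}_0=\mathbf{S}_0\mathbf{Q}_\perp$, and the coupled inequalities $\hat\sigma_1 a\le\sigma_2 b+\epsilon$, $\hat\sigma_1 b\le\sigma_2 a+\epsilon$ all hold. This Wedin-type route differs from the paper's. The paper applies Davis--Kahan to the Gram matrices $\mathbf{S}^\top\mathbf{S}$ and $\mathbf{S}_0^\top\mathbf{S}_0$, bounding their difference by $2\sigma_1\epsilon+\epsilon^2$ and their top eigengap below by $(\sigma_1-\sigma_2)(\sigma_1+\sigma_2)\ge\delta\sigma_1$. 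The quotient of these is exactly $(2+\epsilon/\sigma_1)\epsilon/\delta$, so both the $2$ and the $\epsilon/\sigma_1$ in $C$ come from the Gram computation, and looking for them inside your argument is misdirected.

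Summing your two inequalities and then using $\max(a,b)\le a+b$ gives $\max(a,b)\le 2\epsilon/(\delta-\epsilon)$, so the summation introduces a spurious factor $2$. The inequality $2\epsilon/(\delta-\epsilon)\le(2+\epsilon/\sigma_1)\epsilon/\delta$ is equivalent to $2+\epsilon/\sigma_1\le\delta/\sigma_1$. This never holds, because $\delta\le\sigma_1$. You therefore fall short for small $\epsilon$, which is exactly where the lemma has content.

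Your proposed repairs change nothing:
\begin{itemize}
\item Moving $|\sigma_1-\hat\sigma_1|a$ to the right-hand side reproduces $(\delta-\epsilon)(a+b)\le 2\epsilon$.
\item Dividing both inequalities by the same positive number $\hat\sigma_1$ cannot change what their sum implies, so no $\epsilon^2/(\sigma_1\delta)$ correction can appear.
\end{itemize}
As written, you obtain only something like $C=2+4\epsilon/\delta$ (after a case split at $\epsilon=\delta/2$), which is weaker than the statement.

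\textbf{The fix is not to add the inequalities.} If $a\ge b$, the first inequality gives $\hat\sigma_1 a\le\sigma_2 b+\epsilon\le\sigma_2 a+\epsilon$; the case $b\ge a$ is symmetric. Hence
\[
\max(a,b)\le\frac{\epsilon}{\hat\sigma_1-\sigma_2}\le\frac{\epsilon}{\delta-\epsilon}\qquad(\epsilon<\delta).
\]
If $\epsilon\le\delta/2$, this is at most $2\epsilon/\delta\le C\epsilon/\delta$. If $\epsilon>\delta/2$, then $C\epsilon/\delta>1\ge\sin\theta$. This proves the lemma, in fact with $C=2$.

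Once repaired, your route is self-contained, avoids squaring, and gives a slightly sharper constant. The paper's route is a one-line reduction to a known theorem, but the cited variant (Theorem 1 of Yu, Wang and Samworth) has an extra factor $2$ in the numerator. Taken literally, it yields $4+2\epsilon/\sigma_1$ rather than the stated constant. The classical Davis--Kahan form without that factor instead needs a gap involving the perturbed eigenvalue $\hat\sigma_1^2$, which is the same bookkeeping your argument does explicitly with $\hat\sigma_1-\sigma_2$.
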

\end{tcolorbox}

\begin{proof}[Proof of \cref{lemma:approximate-svd-perturbation-sin}]
We denote the unperturbed component as $\mathbf{S}^* = \mathbf{U} \mathbf{\Sigma} \mathbf{V}^\top$ for brevity. Then,
\begin{equation}
    \mathbf{S}^\top \mathbf{S} - (\mathbf{S}^*)^\top \mathbf{S}^* = (\mathbf{S}^*)^\top \mathbf{\Delta} + \mathbf{\Delta}^\top \mathbf{S}^* + \mathbf{\Delta}^\top \mathbf{\Delta}.
\end{equation}
Observe that $\| (\mathbf{S}^*)^\top \mathbf{\Delta} \|_2 \leq \sigma_1 \epsilon$, $\| \mathbf{\Delta}^\top \mathbf{S}^* \|_2 \leq \sigma_1 \epsilon$, and $\| \mathbf{\Delta}^\top \mathbf{\Delta} \|_2 \leq \epsilon^2$. Therefore, we have:
\begin{equation}
    \| \mathbf{S}^\top \mathbf{S} - (\mathbf{S}^*)^\top \mathbf{S}^* \|_2 \leq 2 \sigma_1 \epsilon + \epsilon^2.
\end{equation}
Recalling the assumption $\sigma_1 \geq \sigma_2 + \delta$, and noting that the eigenvalues of $(\mathbf{S}^*)^\top \mathbf{S}^*$ are $\lambda_k = \sigma_k^2$, we can lower-bound the spectral gap by:
\begin{equation}
    \lambda_1 - \lambda_2 = (\sigma_1 - \sigma_2)(\sigma_1 + \sigma_2) \geq \delta \cdot \sigma_1.
\end{equation}
Applying the Davis-Kahan $\sin \theta$ Theorem (\cf Theorem 1 in~\citep{yu2015useful}) to $\mathbf{S}^\top \mathbf{S}$ and $(\mathbf{S}^*)^\top \mathbf{S}^*$, we obtain:
\begin{equation}
    \sin \theta(\hat{\mathbf{v}}_1, \mathbf{v}_1) \leq \frac{\| \mathbf{S}^\top \mathbf{S} - (\mathbf{S}^*)^\top \mathbf{S}^* \|_2}{\lambda_1 - \lambda_2} \leq \frac{2 \sigma_1 \epsilon + \epsilon^2}{\delta \sigma_1} = \frac{C \epsilon}{\delta},
\end{equation}
where $C = 2 + \epsilon / \sigma_1$. This proves the bound for the right singular vectors. Similarly, we can derive the same bound for the left singular vectors $\hat{\mathbf{u}}_1$ and $\mathbf{u}_1$ by applying the Davis-Kahan $\sin \theta$ Theorem to $\mathbf{S} \mathbf{S}^\top$ and $\mathbf{S}^* (\mathbf{S}^*)^\top$. This completes the proof.
\end{proof}

Now we are ready to prove \cref{theorem:spectral-norm-attention-aggregation}.

\begin{tcolorbox}[kessokustyle=bocchipink]
\RestatableSpectralNormAttentionAggregation*
\end{tcolorbox}
\footnotetext{These assumptions are empirically verified in \cref{tab:singular-vector-alignment,app:experimental-supplementary:supplementary-results}. The results demonstrate that the singular spaces of the attention score matrices are highly aligned across layers, and the principal left and right singular vectors ($\mathbf{u}_1$ and $\mathbf{v}_1$) exhibit a high cosine similarity of over 0.8. In fact, if each diagonal block $\mathbf{S}_u^{(\ell)}$ is a Cesàro matrix~\citep{brown1965cesaro,ross2022cesaro}, the cosine similarity between the principal singular vectors of the block-diagonal $\mathbf{S}^{(\ell)}$ will approach 1 as the length $N \to \infty$ (\cf \cref{app:theoretical-proofs:justification-of-assumption-spectral-norm-attention-aggregation}).}

\begin{proof}[Proof of \cref{theorem:spectral-norm-attention-aggregation}]
The core idea of this proof is to establish a lower bound for the spectral norm $\left\| \mathbf{S}^{(L)} \mathbf{S}^{(L - 1)} \cdots \mathbf{S}^{(1)} \right\|_2$ by evaluating the \emph{projection norm} of the attention aggregation along the principal singular vector $\mathbf{v}_1$ of the unperturbed component $\mathbf{U} \mathbf{\Sigma}^{(\ell)} \mathbf{V}^\top$ for each layer $\ell$.

\noindentparnoline{Projection Norm Decomposition.}
We first define a sequence of unit vectors $\mathbf{v}^{(\ell)}$ for each layer $\ell = 1, 2, \cdots, L$ as follows:
\begin{equation} \label{eq:recursive-vector-definition}
    \mathbf{v}^{(\ell)} = \frac{\mathbf{S}^{(\ell)} \mathbf{v}^{(\ell - 1)}}{\|\mathbf{S}^{(\ell)} \mathbf{v}^{(\ell - 1)}\|_2}, \quad \mathbf{v}^{(0)} = \mathbf{v}_1,
\end{equation}
where $\mathbf{v}_1 = \mathbf{u}_1$ is the principal singular vector of the unperturbed component $\mathbf{U} \mathbf{\Sigma}^{(\ell)} \mathbf{V}^\top$ (note that by our assumption, $\mathbf{v}_1$ and $\mathbf{u}_1$ are highly aligned and are shared across layers in practice). This recursive definition yields a concise decomposition of the projection norm along the principal singular vector $\mathbf{v}^{(0)} = \mathbf{v}_1$:
\begin{equation} \label{eq:projection-norm-decomposition}
    \left\| \mathbf{S}^{(L)} \mathbf{S}^{(L - 1)} \cdots \mathbf{S}^{(1)} \mathbf{v}^{(0)} \right\|_2
    = \prod_{\ell = 1}^{L} \left\| \mathbf{S}^{(\ell)} \mathbf{v}^{(\ell - 1)} \right\|_2.
\end{equation}
To estimate the \emph{intermediate projection norm} $\left\| \mathbf{S}^{(\ell)} \mathbf{v}^{(\ell - 1)} \right\|_2$, we let $\theta^{(\ell)} = \theta(\mathbf{v}^{(\ell)}, \mathbf{v}_1)$ denote the angle between the direction $\mathbf{v}^{(\ell)}$ at layer $\ell$ and the unperturbed principal singular vector $\mathbf{v}_1$. We can then decompose the direction $\mathbf{v}^{(\ell - 1)}$ into a component aligned with $\mathbf{v}_1$ and an orthogonal component $\mathbf{w}^{(\ell - 1)}$ as follows:
\begin{equation} \label{eq:vector-orthogonal-decomposition}
    \mathbf{v}^{(\ell - 1)} = \cos \theta^{(\ell - 1)} \mathbf{v}_1 + \sin \theta^{(\ell - 1)} \mathbf{w}^{(\ell - 1)},
\end{equation}
where $\mathbf{w}^{(\ell - 1)} \perp \mathbf{v}_1$ and $\|\mathbf{w}^{(\ell - 1)}\|_2 = 1$. Recalling that $\mathbf{u}_1 = \mathbf{v}_1$, we can expand the intermediate projection as follows:
\begin{equation}
\begin{aligned}
    \mathbf{S}^{(\ell)} \mathbf{v}^{(\ell - 1)} & = \sigma_1^{(\ell)} \cos \theta^{(\ell - 1)} \mathbf{v}_1 \\
    & + \sin \theta^{(\ell - 1)} \mathbf{U} \mathbf{\Sigma}^{(\ell)} \mathbf{V}^\top \mathbf{w}^{(\ell - 1)} + \mathbf{\Delta}^{(\ell)} \mathbf{v}^{(\ell - 1)}.
\end{aligned}
\end{equation}
Since the first two terms on the right-hand side are orthogonal, we can invoke the reverse triangle inequality to lower-bound the intermediate projection norm:
\begin{equation} \label{eq:projection-norm-decomposition-lower-bound}
\begin{aligned}
    \left\| \mathbf{S}^{(\ell)} \mathbf{v}^{(\ell - 1)} \right\|_2
    & \geq \sigma_1^{(\ell)} \cos \theta^{(\ell - 1)} - \left\| \mathbf{\Delta}^{(\ell)} \mathbf{v}^{(\ell - 1)} \right\|_2 \\
    & \geq \sigma_1^{(\ell)} \cos \theta^{(\ell - 1)} - \epsilon,
\end{aligned}
\end{equation}
where the last inequality follows from $\|\mathbf{\Delta}^{(\ell)}\|_2 \leq \epsilon$, as specified in our perturbation assumptions.

\noindentparnoline{Bounding the Unperturbed Angle.}
Inequality \eqnref{eq:projection-norm-decomposition-lower-bound} establishes a concise lower bound for the intermediate projection norm at each layer $\ell$, wherein the term $\cos \theta^{(\ell - 1)}$ requires further evaluation. To this end, we proceed to bound the angle $\theta^{(\ell)}$ based on the angle $\theta^{(\ell - 1)}$ from the preceding layer. Specifically, we first consider the ideal case where there is no perturbation in the $\ell$-th layer, \ie $\mathbf{\Delta}^{(\ell)} = \mathbf{0}$. In this scenario, our goal is to bound the corresponding ideal angle $\theta_{\text{ideal}}^{(\ell)} = \theta(\mathbf{U} \mathbf{\Sigma}^{(\ell)} \mathbf{V}^\top \mathbf{v}^{(\ell - 1)}, \mathbf{v}_1)$.

To achieve this, we decompose the orthogonal component $\mathbf{w}^{(\ell - 1)}$ from \eqnref{eq:vector-orthogonal-decomposition} into the right singular vector space $\mathbf{V}$, \ie $\mathbf{w}^{(\ell - 1)} = \sum_{j \geq 2} \alpha_j^{(\ell - 1)} \mathbf{v}_j$, where $\mathbf{v}_j$ is the $j$-th right singular vector of the unperturbed component, and $\sum_{j \geq 2} (\alpha_j^{(\ell - 1)})^2 = 1$. Then, we have:
\begin{equation}
    \mathbf{\Sigma}^{(\ell)} \mathbf{V}^\top \mathbf{w}^{(\ell - 1)} = \sum_{j \geq 2} \alpha_j^{(\ell - 1)} \sigma_j^{(\ell)} \mathbf{e}_j,
\end{equation}
where $\mathbf{e}_j$ is the $j$-th standard basis vector. Consequently, we can derive a lower bound for $\cos \theta_{\text{ideal}}^{(\ell)}$ as follows:
\begin{equation}
\begin{aligned}
    \cos \theta_{\text{ideal}}^{(\ell)}
    & = \frac{\langle \mathbf{U} \mathbf{\Sigma}^{(\ell)} \mathbf{V}^\top \mathbf{v}^{(\ell - 1)}, \mathbf{v}_1 \rangle}{\|\mathbf{U} \mathbf{\Sigma}^{(\ell)} \mathbf{V}^\top \mathbf{v}^{(\ell - 1)}\|_2} \\
    & = \frac{\sigma_1^{(\ell)} \cos \theta^{(\ell - 1)}}{\sqrt{(\sigma_1^{(\ell)})^2 \cos^2 \theta^{(\ell - 1)} + \sum_{j \geq 2} (\alpha_j^{(\ell - 1)})^2 (\sigma_j^{(\ell)})^2 \sin^2 \theta^{(\ell - 1)}}} \\
    & \geq \frac{\sigma_1^{(\ell)} \cos \theta^{(\ell - 1)}}{\sqrt{(\sigma_1^{(\ell)})^2 \cos^2 \theta^{(\ell - 1)} + (\sigma_2^{(\ell)})^2 \sin^2 \theta^{(\ell - 1)}}},
\end{aligned}
\end{equation}
where the last inequality follows from the facts that $\sum_{j \geq 2} (\alpha_j^{(\ell - 1)})^2 = 1$ and $\sigma_j^{(\ell)} \leq \sigma_2^{(\ell)}$ for $j \geq 2$. Notably, this bound can be rearranged into a much more concise form:
\begin{equation}
    \tan \theta_{\text{ideal}}^{(\ell)} \leq \frac{\sigma_2^{(\ell)}}{\sigma_1^{(\ell)}} \tan \theta^{(\ell - 1)}.
\end{equation}
This reveals that, in the absence of perturbations, the angle $\theta_{\text{ideal}}^{(\ell)}$ between the projection direction and the principal singular vector is strictly compressed by a factor of $\sigma_2^{(\ell)} / \sigma_1^{(\ell)} < 1$ relative to the previous angle $\theta^{(\ell - 1)}$ after a single layer of attention aggregation.

\noindentparnoline{Incorporating Perturbations.}
Next, we consider the actual angle $\theta^{(\ell)}$ under perturbations, which can be lower-bounded as follows:
\begin{equation}
\begin{aligned} \label{eq:angle-perturbation}
    \cos \theta^{(\ell)} 
    & = \frac{\langle \mathbf{U} \mathbf{\Sigma}^{(\ell)} \mathbf{V}^\top \mathbf{v}^{(\ell - 1)} + \mathbf{\Delta}^{(\ell)} \mathbf{v}^{(\ell - 1)}, \mathbf{v}_1 \rangle}{\|\mathbf{U} \mathbf{\Sigma}^{(\ell)} \mathbf{V}^\top \mathbf{v}^{(\ell - 1)} + \mathbf{\Delta}^{(\ell)} \mathbf{v}^{(\ell - 1)}\|_2} \\
    & \geq \frac{\sigma_1^{(\ell)} \cos \theta^{(\ell - 1)} - \epsilon}{\sqrt{(\sigma_1^{(\ell)})^2 \cos^2 \theta^{(\ell - 1)} + (\sigma_2^{(\ell)})^2 \sin^2 \theta^{(\ell - 1)}} + \epsilon}.
\end{aligned}
\end{equation}
Recall the Taylor expansion for a sufficiently small $\epsilon$:
\begin{equation}
    \frac{a - \epsilon}{b + \epsilon} = \frac{a}{b} - \frac{a + b}{b^2} \epsilon + o(\epsilon),
\end{equation}
which holds for any $a, b > 0$ and $\epsilon < |b|$. In our case, setting $a = \sigma_1^{(\ell)} \cos \theta^{(\ell - 1)}$ and $b = \sqrt{(\sigma_1^{(\ell)})^2 \cos^2 \theta^{(\ell - 1)} + (\sigma_2^{(\ell)})^2 \sin^2 \theta^{(\ell - 1)}}$, we have $(a + b) / b^2 \leq 2 / b \leq 2 / \sigma_2^{(\ell)}$ for a sufficiently small $\epsilon$. Therefore, we can simplify the bound in \eqnref{eq:angle-perturbation} to the following form:
\begin{equation}
    \cos \theta^{(\ell)} \geq \frac{\sigma_1^{(\ell)} \cos \theta^{(\ell - 1)}}{\sqrt{(\sigma_1^{(\ell)})^2 \cos^2 \theta^{(\ell - 1)} + (\sigma_2^{(\ell)})^2 \sin^2 \theta^{(\ell - 1)}}} - \frac{2\epsilon}{\sigma_2^{(\ell)}} + o(\epsilon).
\end{equation}
By applying the Taylor expansion once more, this inequality can be reformulated as:
\begin{equation} \label{eq:angle-perturbation-tan-recursive-inequality-origin}
    \tan \theta^{(\ell)} - \frac{4\epsilon}{\sigma_2^{(\ell)}} \sec^3 \theta^{(\ell)} \leq \frac{\sigma_2^{(\ell)}}{\sigma_1^{(\ell)}} \tan \theta^{(\ell - 1)} + o(\epsilon).
\end{equation}
To solve this recursive inequality, we proceed by induction on the layer index $\ell$ with the inductive hypothesis that $\tan \theta^{(\ell)} = O(\epsilon \ell / \min\{\delta, \eta\})$. For the base case $\ell = 1$, \cref{lemma:approximate-svd-perturbation-sin} establishes that $\sin \theta^{(1)} \leq 3 \epsilon / \delta$ (noting that $3 > 2 + \epsilon / \sigma_1^{(1)}$). Therefore, $\tan \theta^{(1)} \leq C \epsilon / \delta + o(\epsilon^2) = O(\epsilon / \delta)$, which satisfies the hypothesis. For the inductive step, assuming $\tan \theta^{(\ell - 1)} = O(\epsilon (\ell - 1) / \min\{\delta, \eta\})$, we have $\sec^3 \theta^{(\ell - 1)} = 1 + O(\epsilon^2 (\ell - 1)^2 / \min\{\delta, \eta\}^2)$. Thus, the inequality in \eqnref{eq:angle-perturbation-tan-recursive-inequality-origin} can be simplified to:
\begin{equation}
    \tan \theta^{(\ell)} \leq \frac{\sigma_2^{(\ell)}}{\sigma_1^{(\ell)}} \tan \theta^{(\ell - 1)} + \frac{4 \epsilon}{\sigma_2^{(\ell)}} + o(\epsilon).
\end{equation}
Applying this inequality recursively from layer $1$ to layer $\ell$ yields the following bound for the angle at the current layer:
\begin{equation} \label{eq:angle-recursive-bound}
    \tan \theta^{(\ell)} \leq \left( \prod_{\ell' = 1}^{\ell} \frac{\sigma_2^{(\ell')}}{\sigma_1^{(\ell')}} \right) \tan \theta^{(0)} + 4 \sum_{\ell' = 1}^{\ell} \frac{\epsilon}{\sigma_2^{(\ell')}} + o(\epsilon).
\end{equation}
Note that $\tan \theta^{(0)} = \tan \theta(\mathbf{v}_1, \mathbf{v}_1) = 0$, meaning the first term in \eqnref{eq:angle-recursive-bound} vanishes. The remaining terms can be further upper-bounded by $4 \epsilon \ell / \eta + o(\epsilon) = O(\epsilon \ell / \eta)$, where $\eta$ is the assumed lower bound of $\sigma_2^{(\ell)}$ across all layers. Therefore, the induction hypothesis holds for layer $\ell$, completing the induction and proving that for all layers $\ell$:
\begin{equation} \label{eq:angle-tan-bound}
    \tan \theta^{(\ell)} \leq \frac{4 \epsilon \ell}{\min\{\delta, \eta\}} + o(\epsilon).
\end{equation}

\noindentparnoline{Deriving Spectral Norm Lower Bound.}
By substituting the derived bound \eqnref{eq:angle-tan-bound} for $\tan \theta^{(\ell)}$ into the Taylor expansion of $\cos \theta^{(\ell)}$, we obtain:
\begin{equation}
    \cos \theta^{(\ell)} \geq 1 - \frac{8 \epsilon^2 \ell^2}{\min\{\delta, \eta\}^2} + o(\epsilon^2).
\end{equation}
Consequently, the lower bound of the intermediate projection norm in \eqnref{eq:projection-norm-decomposition-lower-bound} becomes:
\begin{equation}
\begin{aligned}
    \left\| \mathbf{S}^{(\ell)} \mathbf{v}^{(\ell - 1)} \right\|_2
    & \geq \sigma_1^{(\ell)} \cos \theta^{(\ell - 1)} - \epsilon \\
    & \geq \sigma_1^{(\ell)} - \epsilon - O\left( \frac{\epsilon^2 (\ell - 1)^2 \sigma_1^{(\ell)}}{\min\{\delta, \eta\}^2} \right).
\end{aligned}
\end{equation}
Finally, substituting this intermediate bound into the projection norm decomposition \eqnref{eq:projection-norm-decomposition}, we obtain the desired lower bound for the spectral norm:
\begin{equation}
    \left\| \mathbf{S}^{(L)} \mathbf{S}^{(L - 1)} \cdots \mathbf{S}^{(1)} \right\|_2
    \geq \prod_{\ell = 1}^{L} \left( \sigma_1^{(\ell)} - \epsilon - O\left( \frac{\epsilon^2 \ell^2 \sigma_1^{(\ell)}}{\min\{\delta, \eta\}^2} \right) \right).
\end{equation}
For a sufficiently small $\epsilon$, the right-hand side grows exponentially with the number of layers $L$. This completes the proof.
\end{proof}

\begin{table}[t]
    \centering
    \caption{Cosine similarity (\%) between the principal left and right singular vectors of attention score matrices, estimated on the SASRec++ backbone with different numbers of layers.}
    \label{tab:singular-vector-alignment}
    \begin{tabular}{c|ccccc}
    \toprule
    \textbf{\#Layers} & \textbf{ML-1M} & \textbf{Beauty} & \textbf{Toy} & \textbf{Electronic} \\
    \midrule
    \textbf{2}   
        & 91.83\%  & 82.99\%  & 82.58\%  & 87.30\% \\
    \textbf{4}     
        & 90.28\%  & 82.02\%  & 82.63\%  & 79.48\% \\ 
    \textbf{6}      
        & 88.00\%  & 81.76\%  & 80.06\%  & 88.21\% \\
    \textbf{8} 
        & 87.88\%  & 82.07\%  & 82.84\%  & 82.57\% \\
    \bottomrule
    \end{tabular}
\end{table}



\subsection{Justification of Assumption in \cref{theorem:spectral-norm-attention-aggregation}} \label{app:theoretical-proofs:justification-of-assumption-spectral-norm-attention-aggregation}

In \cref{theorem:spectral-norm-attention-aggregation}, we assume that the left and right principal singular vectors $\mathbf{u}_1$ and $\mathbf{v}_1$ of the attention score matrix $\mathbf{S}$ are highly aligned (omitting the layer index for brevity). Empirically, we verify this assumption by assessing the cosine similarity $\cos \theta(\mathbf{u}_1, \mathbf{v}_1)$ across different layers. As illustrated in \cref{tab:singular-vector-alignment}, this consistently reveals a high alignment, with $\cos \theta(\mathbf{u}_1, \mathbf{v}_1)$ ranging between 0.8 and 0.9.

Beyond empirical verification, we can also provide a theoretical justification for this assumption in the specific yet representative case, where each diagonal block $\mathbf{S}_u$ of the attention score matrix $\mathbf{S}$ in each layer takes the form of a Cesàro matrix~\citep{brown1965cesaro,ross2022cesaro}, representing a common autoregressive attention pattern in practice. Formally, a matrix $\mathbf{M} \in \mathbb{R}^{m \times m}$ is defined as a Cesàro matrix if:
\begin{equation}
    \mathbf{M}_{i, j} = \begin{cases}
        \frac{1}{i}, & \text{if } j \leq i, \\
        0, & \text{otherwise}.
    \end{cases}
\end{equation}
In the context of attention mechanisms, a Cesàro matrix can be interpreted as an attention score matrix where each token attends uniformly to all preceding tokens, which naturally leads to higher attention scores for popular items in sequential recommendation scenarios. For such a matrix $\mathbf{M}$, we establish the following theorem:

\begin{tcolorbox}[kessokustyle=bocchipink]
\begin{restatable}{theorem}{RestatableSingularVectorAlignmentUniformAttention} \label{theorem:singular-vector-alignment-uniform-attention}
    \upshape
    Let $\mathbf{M} \in \mathbb{R}^{m \times m}$ be a Cesàro matrix, and let $\mathbf{u}_1$ and $\mathbf{v}_1$ denote the principal left and right singular vectors of $\mathbf{M}$, respectively. Then, we have:
    \begin{equation}
        \lim_{m \to \infty} \cos \theta(\mathbf{u}_1, \mathbf{v}_1) = 1,
    \end{equation}
    where $\theta(\cdot, \cdot)$ denotes the angle between two vectors.
\end{restatable}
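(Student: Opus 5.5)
The plan is to reduce the angle to a ratio involving $\sigma_1(\mathbf{M})$, and then to control that ratio using the classical Hardy inequality for the Cesàro operator. The identity $\mathbf{u}_1 = \mathbf{M}\mathbf{v}_1/\sigma_1$ gives
\begin{equation*}
\cos\theta(\mathbf{u}_1,\mathbf{v}_1) = \frac{\mathbf{v}_1^\top \mathbf{M} \mathbf{v}_1}{\sigma_1},
\end{equation*}
where $\sigma_1 = \|\mathbf{M}\|_2$. So the task is to show that the quadratic form of $\mathbf{M}$ at its top right singular vector is asymptotically as large as $\sigma_1$ itself.

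\emph{Step 1 (Hardy's quadratic inequality).} I will prove that every $\mathbf{x} \in \mathbb{R}^m$ satisfies $\|\mathbf{M}\mathbf{x}\|_2^2 \leq 2\, \mathbf{x}^\top \mathbf{M} \mathbf{x}$. This is Hardy's elementary telescoping argument. Write $A_n = \sum_{j \leq n} x_j$, so that $(\mathbf{M}\mathbf{x})_n = A_n/n$. Substituting $x_n = A_n - A_{n-1}$ gives
\begin{equation*}
\Big(\frac{A_n}{n}\Big)^2 - 2x_n\frac{A_n}{n} = -\frac{A_n^2}{n} + \frac{2A_nA_{n-1}}{n} - \frac{A_n^2}{n^2}.
\end{equation*}
Bounding $2A_nA_{n-1} \leq \frac{n-1}{n}A_n^2 + \frac{n}{n-1}A_{n-1}^2$ (the AM--GM inequality, valid for real entries) makes the right-hand side at most
\begin{equation*}
\frac{A_{n-1}^2}{n-1} - \frac{A_n^2}{n}.
\end{equation*}
Summing over $n = 1, \dots, m$, the terms telescope to $-A_m^2/m \leq 0$, which yields the claim. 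Applying it to $\mathbf{x} = \mathbf{v}_1$ gives $\sigma_1^2 \leq 2\sigma_1 \cos\theta(\mathbf{u}_1,\mathbf{v}_1)$, that is,
\begin{equation*}
\cos\theta(\mathbf{u}_1,\mathbf{v}_1) \geq \frac{\sigma_1}{2}.
\end{equation*}
The same inequality together with Cauchy--Schwarz, $\mathbf{x}^\top\mathbf{M}\mathbf{x} \leq \|\mathbf{x}\|_2\|\mathbf{M}\mathbf{x}\|_2$, also gives $\|\mathbf{M}\mathbf{x}\|_2 \leq 2\|\mathbf{x}\|_2$, so $\sigma_1 \leq 2$.

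\emph{Step 2 ($\sigma_1 \to 2$).} It remains to show $\liminf_{m \to \infty} \sigma_1 \geq 2$. For this I will use the test vector $x_i = i^{-1/2}$. Its squared norm is $\|\mathbf{x}\|_2^2 = \sum_{i \leq m} 1/i \sim \log m$. Comparing the sum with an integral gives $\sum_{j\leq i} j^{-1/2} = 2\sqrt{i} + O(1)$. Hence $(\mathbf{M}\mathbf{x})_i = 2 i^{-1/2} + O(1/i)$, and therefore
\begin{equation*}
\|\mathbf{M}\mathbf{x}\|_2^2 = 4\sum_{i \le m} \frac{1}{i} + O\Big(\sum_{i \le m} i^{-3/2}\Big) + O\Big(\sum_{i \le m} i^{-2}\Big) = 4\log m + O(1).
\end{equation*}
This gives $\sigma_1^2 \geq \|\mathbf{M}\mathbf{x}\|_2^2/\|\mathbf{x}\|_2^2 = 4 - O(1/\log m)$.

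\emph{Conclusion.} Combining the two steps, $1 \geq \cos\theta(\mathbf{u}_1,\mathbf{v}_1) \geq \sigma_1/2 \to 1$ as $m \to \infty$. The result also gives the rate $1 - \cos\theta = O(1/\log m)$. The sign ambiguity of singular vectors is fixed by the pairing $\mathbf{u}_1 = \mathbf{M}\mathbf{v}_1/\sigma_1$.

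The main obstacle is Step 1. A naive approach would try to show directly that $\mathbf{v}_1$ is an approximate eigenvector of $\mathbf{M}$, which requires a spectral-gap estimate for $\mathbf{M}^\top\mathbf{M}$. That gap degenerates as $m \to \infty$, since the limiting Cesàro operator has continuous spectrum. Hardy's quadratic inequality avoids this entirely by bounding the gap between $\|\mathbf{M}\mathbf{x}\|_2^2$ and $2\,\mathbf{x}^\top\mathbf{M}\mathbf{x}$ uniformly over all $\mathbf{x}$. If that telescoping step failed, my fallback would be to note that $\mathbf{M}$ has positive entries, so by Perron--Frobenius $\mathbf{v}_1$ can be chosen nonnegative, and then to compare $\mathbf{v}_1$ with the profile $i^{-1/2}$. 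That route would be considerably messier.

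Finally, for the block-diagonal $\mathbf{S}$ mentioned in the paper's footnote, the principal singular pair comes from the longest block, so the statement transfers to it.
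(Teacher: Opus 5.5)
Your proof is correct, and its first half is the paper's argument written in different coordinates. The paper uses $\cos\theta(\mathbf{u}_1,\mathbf{v}_1)=\sigma_1\,\mathbf{v}_1^\top(\mathbf{M}^{-1})^\top\mathbf{v}_1$ and the explicit bidiagonal inverse. It then proves the exact identity $\mathbf{v}^\top\mathbf{M}^{-1}\mathbf{v}=\tfrac12\|\mathbf{v}\|_2^2+\tfrac m2 v_m^2+\tfrac12\sum_{i<m} i(v_i-v_{i+1})^2$. Substituting $\mathbf{v}=\mathbf{M}\mathbf{x}$ (so $v_i=A_i/i$) turns this into your inequality $\|\mathbf{M}\mathbf{x}\|_2^2\le 2\,\mathbf{x}^\top\mathbf{M}\mathbf{x}$. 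The two extra terms are exactly your telescoping remainder $A_m^2/(2m)$ and the AM--GM slack you discard, so both routes reach $\cos\theta\ge\sigma_1/2$.

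The real difference is the second half. The paper simply cites Theorem 5.1 in \citep{ross2022cesaro} for $\sigma_1\to 2$. Your test vector $x_i=i^{-1/2}$ proves $\liminf_m\sigma_1\ge 2$ directly, which makes the result self-contained and gives the explicit rate $1-\cos\theta=O(1/\log m)$. Your formulation also never needs to invert $\mathbf{M}$. The upper bound $\sigma_1\le 2$ is a pleasant by-product but is not needed, since $\cos\theta\le 1$ already caps the limit.

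There is one computational slip in Step 1. With $x_n=A_n-A_{n-1}$ the expansion is
\[
\Big(\frac{A_n}{n}\Big)^2-2x_n\frac{A_n}{n}=\frac{A_n^2}{n^2}-\frac{2A_n^2}{n}+\frac{2A_nA_{n-1}}{n},
\]
not the expression you display. Applying your AM--GM bound to your displayed version would give $\frac{A_{n-1}^2}{n-1}-\frac{2A_n^2}{n^2}$, which does not telescope. Applying it to the correct expansion gives exactly the claimed $\frac{A_{n-1}^2}{n-1}-\frac{A_n^2}{n}$; the $n=1$ term needs no AM--GM since $A_0=0$. With that fix the argument is complete.
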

\end{tcolorbox}

\begin{proof}[Proof of \cref{theorem:singular-vector-alignment-uniform-attention}]
To calculate the cosine similarity between the principal left and right singular vectors $\mathbf{u}_1$ and $\mathbf{v}_1$ of the Cesàro matrix $\mathbf{M}$, we first express $\mathbf{u}_1$ in terms of $\mathbf{v}_1$.

By the definition of the SVD, we have $\mathbf{M}^\top \mathbf{u}_1 = \sigma_1 \mathbf{v}_1$, where $\sigma_1$ is the largest singular value of $\mathbf{M}$. Provided $\mathbf{M}$ is invertible, we can write $\mathbf{u}_1 = \sigma_1 (\mathbf{M}^{-1})^\top \mathbf{v}_1$, where the inverse $\mathbf{M}^{-1}$ can be computed explicitly as follows:
\begin{equation}
    \mathbf{M}^{-1}_{i, j} = \begin{cases}
    i, & j = i, \\
    -(i-1), & j = i-1, \\
    0, & \text{otherwise}.
    \end{cases}
\end{equation}
The cosine similarity $\cos \theta(\mathbf{u}_1, \mathbf{v}_1) = \mathbf{v}_1^\top \mathbf{u}_1$ can then be formulated as a quadratic form of $\mathbf{v}_1$:
\begin{equation}
    \frac{1}{\sigma_1} \cos \theta(\mathbf{u}_1, \mathbf{v}_1) = \mathbf{v}_1^\top (\mathbf{M}^{-1})^\top \mathbf{v}_1.
\end{equation}
For simplicity, let $v_i$ denote the $i$-th element of $\mathbf{v}_1$. We can expand this quadratic form as:
\begin{equation} \label{eq:cosine-quadratic-form-expansion}
    \frac{1}{\sigma_1} \cos \theta(\mathbf{u}_1, \mathbf{v}_1) = \sum_{i=1}^{m-1} i(v_i - v_{i+1})v_i + m v_m^2.
\end{equation}
Applying the basic algebraic identity $a^2 - ab = \frac{1}{2}a^2 - \frac{1}{2}b^2 + \frac{1}{2}(a-b)^2$ to the terms inside the summation, we get:
\begin{equation}
    i(v_i - v_{i+1})v_i = \frac{i}{2} v_i^2 - \frac{i}{2} v_{i+1}^2 + \frac{i}{2} (v_i - v_{i+1})^2.
\end{equation}
Substituting this back, the summation in \eqnref{eq:cosine-quadratic-form-expansion} can be rearranged into a telescoping series:
\begin{equation}
\begin{aligned}
    \frac{1}{\sigma_1} \cos \theta(\mathbf{u}_1, \mathbf{v}_1) 
    & = \sum_{i=1}^{m-1} \left( \frac{i}{2} v_i^2 - \frac{i}{2} v_{i+1}^2 \right) + \sum_{i=1}^{m-1} \frac{i}{2} (v_i - v_{i+1})^2 + m v_m^2 \\
    & = \frac{1}{2} \sum_{i=1}^{m-1} v_i^2 + \frac{m+1}{2} v_m^2 + \frac{1}{2} \sum_{i=1}^{m-1} i (v_i - v_{i+1})^2 \\
    & = \frac{1}{2} \sum_{i=1}^{m} v_i^2 + \frac{m}{2} v_m^2 + \frac{1}{2} \sum_{i=1}^{m-1} i (v_i - v_{i+1})^2 \\
    & \geq \frac{1}{2} \|\mathbf{v}_1\|_2^2 = \frac{1}{2}.
\end{aligned}
\end{equation}
From this, we directly obtain a lower bound for the cosine similarity:
\begin{equation}
    \cos \theta(\mathbf{u}_1, \mathbf{v}_1) \geq \frac{\sigma_1}{2}.
\end{equation}

Finally, it has been established that the largest singular value $\sigma_1$ of the Cesàro matrix $\mathbf{M}$ approaches 2 as $m \to \infty$ (\cf Theorem 5.1 in \citep{ross2022cesaro}). Therefore, taking the limit yields:
\begin{equation}
    \lim_{m \to \infty} \cos \theta(\mathbf{u}_1, \mathbf{v}_1) = 1,
\end{equation}
which completes the proof.
\end{proof}

Building upon \cref{theorem:singular-vector-alignment-uniform-attention}, we further generalize this result to the sparse block-diagonal case, where the global attention score matrix $\mathbf{S}$ is block-diagonal with each block being a Cesàro matrix. This directly mirrors the practical scenario in sequential recommendation, where the interactions of different users are processed independently. By combining \cref{theorem:singular-vector-alignment-uniform-attention} and the following \cref{corollary:singular-vector-alignment-block-diagonal}, we can theoretically justify our assumption in \cref{theorem:spectral-norm-attention-aggregation} that the principal left and right singular vectors of the attention score matrix are highly aligned across different layers in practice.

\begin{tcolorbox}[kessokustyle=ryoblue]
\begin{restatable}{corollary}{RestatableSingularVectorAlignmentBlockDiagonal} \label{corollary:singular-vector-alignment-block-diagonal}
    \upshape
    Let $\mathbf{S} = \operatorname{diag}(\mathbf{S}_u)_{u \in \mathcal{U}} \in \mathbb{R}^{N \times N}$ be a block-diagonal attention matrix, where each block $\mathbf{S}_u \in \mathbb{R}^{n_u \times n_u}$ is a Cesàro matrix representing the causal uniform attention of user $u$. Let $\mathbf{u}_1$ and $\mathbf{v}_1$ be any pair of corresponding principal left and right singular vectors of the global matrix $\mathbf{S}$. Then, we have:
    \begin{equation} \label{eq:global-cosine-bound}
        \cos \theta(\mathbf{u}_1, \mathbf{v}_1) \geq \frac{\sigma_1(\mathbf{S})}{2}.
    \end{equation}
    Furthermore, as the maximum sequence length $n_{\max} = \max_{u \in \mathcal{U}} n_u \to \infty$, we have $\lim_{n_{\max} \to \infty} \cos \theta(\mathbf{u}_1, \mathbf{v}_1) = 1$.
\end{restatable}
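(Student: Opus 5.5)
The plan is to rerun the quadratic-form argument from the proof of \cref{theorem:singular-vector-alignment-uniform-attention} on the global block-diagonal matrix $\mathbf{S}$. This avoids describing the principal singular pairs of $\mathbf{S}$ explicitly. That matters because $\sigma_1(\mathbf{S})$ may be attained by several blocks at once, in which case the principal singular subspace is spread over those blocks.

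\textbf{Step 1: an expression for the cosine.} Each Cesàro block $\mathbf{S}_u$ is lower triangular with positive diagonal entries $1/i$, so it is invertible. Hence $\mathbf{S}$ is invertible, and $\mathbf{S}^{-1}=\operatorname{diag}(\mathbf{S}_u^{-1})_{u\in\mathcal{U}}$, where each block has the explicit bidiagonal form given in the proof of \cref{theorem:singular-vector-alignment-uniform-attention}. For any corresponding pair $(\mathbf{u}_1,\mathbf{v}_1)$ we have $\mathbf{S}^\top\mathbf{u}_1=\sigma_1(\mathbf{S})\mathbf{v}_1$. Therefore $\mathbf{u}_1=\sigma_1(\mathbf{S})(\mathbf{S}^{-1})^\top\mathbf{v}_1$ and
\[
\cos\theta(\mathbf{u}_1,\mathbf{v}_1)=\sigma_1(\mathbf{S})\,\mathbf{v}_1^\top(\mathbf{S}^{-1})^\top\mathbf{v}_1 .
\]

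\textbf{Step 2: lower-bound the quadratic form.} Write $\mathbf{v}_1$ as the concatenation of its block pieces $\mathbf{v}^{(u)}\in\mathbb{R}^{n_u}$. Because $\mathbf{S}^{-1}$ is block diagonal, the quadratic form splits as $\sum_u (\mathbf{v}^{(u)})^\top(\mathbf{S}_u^{-1})^\top\mathbf{v}^{(u)}$. The telescoping identity in the proof of \cref{theorem:singular-vector-alignment-uniform-attention} never uses that the vector has unit norm, so it gives each summand $\geq \tfrac12\|\mathbf{v}^{(u)}\|_2^2$. Summing over $u$ and using $\|\mathbf{v}_1\|_2=1$ yields $\mathbf{v}_1^\top(\mathbf{S}^{-1})^\top\mathbf{v}_1\ge\tfrac12$, and hence \eqnref{eq:global-cosine-bound}. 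The form is nonnegative, so no sign ambiguity in the singular pair arises.

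\textbf{Step 3: the limit.} The singular values of a block-diagonal matrix are the union of the blocks' singular values, so $\sigma_1(\mathbf{S})=\max_u\sigma_1(\mathbf{S}_u)=\max_u c_{n_u}$, where $c_m$ is the spectral norm of the $m\times m$ Cesàro matrix. The $m\times m$ Cesàro matrix is the leading principal submatrix of the $(m+1)\times(m+1)$ one, so $c_m$ is nondecreasing in $m$. It follows that $\sigma_1(\mathbf{S})=c_{n_{\max}}$. By Theorem 5.1 of~\citep{ross2022cesaro}, already used in the proof of \cref{theorem:singular-vector-alignment-uniform-attention}, $c_m\to2$. Combining with $\cos\theta\le1$ and squeezing gives $\lim_{n_{\max}\to\infty}\cos\theta(\mathbf{u}_1,\mathbf{v}_1)=1$.

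The only delicate point is Step 3: the global spectral norm must be controlled by $n_{\max}$ alone, regardless of how many short sequences are present. The submatrix monotonicity handles this. I would also check that the telescoping bound in Step 2 is applied with the correct orientation, $(\mathbf{S}_u^{-1})^\top$ rather than $\mathbf{S}_u^{-1}$. This is harmless here, since a scalar quadratic form is unchanged by transposing the matrix.
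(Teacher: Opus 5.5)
Your proof is correct, but it is organized differently from the paper's.

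\textbf{The paper's route.} It works at the level of singular vectors. It uses that the spectrum of $\mathbf{S}$ is the union of the block spectra and applies the finite-$m$ bound $\cos\theta\geq\sigma_1/2$ from the proof of \cref{theorem:singular-vector-alignment-uniform-attention} to each block. It then splits into cases. With a unique dominant block, the global principal pair is the zero-padded block pair. With $r$ tied blocks, it writes $\mathbf{v}_1=\sum_j\alpha_j\tilde{\mathbf{v}}^{(j)}$ and $\mathbf{u}_1=\sum_j\alpha_j\tilde{\mathbf{u}}^{(j)}$, so the cosine is a convex combination of the local cosines.

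\textbf{Your route.} You apply the same inverse-matrix and telescoping identity directly to the global matrix. This uses $\mathbf{S}^{-1}=\operatorname{diag}(\mathbf{S}_u^{-1})$ and the fact that the bound $\geq\tfrac12\|\mathbf{v}^{(u)}\|_2^2$ holds for vectors of any norm.
\begin{itemize}
\item This removes the case analysis and covers every corresponding pair without describing the principal subspace.
\item It also avoids a fact the paper uses silently in the degenerate case: each dominant block must have a \emph{simple} top singular value. That is what makes the principal subspace exactly $r$-dimensional and spanned by the padded block vectors. The fact does hold, since $\mathbf{M}^\top\mathbf{M}$ has entries $\sum_{k\geq\max(i,j)}k^{-2}>0$ and Perron--Frobenius applies, but the paper never states it.
\item In exchange, the paper's route gives structural information your proof does not: the global principal vectors are supported on the dominant blocks.
\end{itemize}

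\textbf{The limit.} Your submatrix-monotonicity argument justifies $\max_u\sigma_1(\mathbf{S}_u)\to2$, which the paper asserts without proof. The step is not as delicate as you suggest, though. You only need $\sigma_1(\mathbf{S})\geq c_{n_{\max}}$, which is immediate because the longest block is one of the blocks. Together with $\cos\theta\leq1$, that already forces the limit.
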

\end{tcolorbox}

\begin{proof}[Proof of \cref{corollary:singular-vector-alignment-block-diagonal}]

For any user $u$, let $\mathbf{u}_i^{(u)} \in \mathbb{R}^{n_u}$ and $\mathbf{v}_i^{(u)} \in \mathbb{R}^{n_u}$ denote the $i$-th left and right singular vectors of the block $\mathbf{S}_u$, respectively, for $i = 1, 2, \cdots, n_u$. The corresponding singular values are denoted as $\sigma_i(\mathbf{S}_u)$. We pad these local singular vectors with zeros to match the global dimension $N$ while maintaining their original coordinates within the block-diagonal structure, resulting in $\tilde{\mathbf{u}}_i^{(u)} \in \mathbb{R}^N$ and $\tilde{\mathbf{v}}_i^{(u)} \in \mathbb{R}^N$. Clearly, $\tilde{\mathbf{u}}_i^{(u)}$ and $\tilde{\mathbf{v}}_i^{(u)}$ are the left and right singular vectors of the global matrix $\mathbf{S}$ corresponding to the same singular value $\sigma_i(\mathbf{S}_u)$. Furthermore, the union of all block singular values $\{\sigma_i(\mathbf{S}_u) : u \in \mathcal{U}, i = 1, 2, \cdots, n_u\}$ constitutes the entire singular value spectrum of $\mathbf{S}$. Therefore, the global maximum singular value $\sigma_1(\mathbf{S})$ must be achieved by at least one of the block singular values, \ie $\sigma_1(\mathbf{S}) = \max_{u \in \mathcal{U}} \sigma_1(\mathbf{S}_u)$.

Since each block $\mathbf{S}_u$ is a Cesàro matrix, we can apply \cref{theorem:singular-vector-alignment-uniform-attention} to obtain the following local bound for each user block:
\begin{equation} \label{eq:local-cosine-bound}
    \cos \theta(\tilde{\mathbf{u}}_1^{(u)}, \tilde{\mathbf{v}}_1^{(u)}) = \cos \theta(\mathbf{u}_1^{(u)}, \mathbf{v}_1^{(u)}) \geq \frac{\sigma_1(\mathbf{S}_u)}{2}.
\end{equation}
Therefore, if there is a unique dominant block that achieves the global maximum singular value $\sigma_1(\mathbf{S})$, then the corresponding zero-padded principal singular vectors of this block are also global, and their cosine similarity is directly lower-bounded by $\sigma_1(\mathbf{S}) / 2$, establishing the desired bound in \eqnref{eq:global-cosine-bound}.

In the more general case where multiple dominant blocks achieve the same global maximum singular value, we can still establish the identical lower bound. Specifically, assume there are $r \geq 1$ dominant blocks that achieve the exact same global maximum singular value $\sigma_1(\mathbf{S})$. The principal singular space is therefore an $r$-dimensional degenerate subspace spanned by the zero-padded principal singular vectors $\{ \tilde{\mathbf{u}}^{(j)} \}_{j = 1}^{r}$ and $\{ \tilde{\mathbf{v}}^{(j)} \}_{j = 1}^{r}$ of these dominant blocks. Any global principal right singular vector $\mathbf{v}_1$ can be expressed as an orthogonal linear combination of the zero-padded principal right singular vectors $\tilde{\mathbf{v}}^{(j)}$ corresponding to these dominant blocks, \ie
\begin{equation}
    \mathbf{v}_1 = \sum_{j=1}^r \alpha_j \tilde{\mathbf{v}}^{(j)}, \quad \text{\st } \sum_{j=1}^r \alpha_j^2 = 1.
\end{equation}
Correspondingly, the left singular vector is strictly constrained by the exact same linear coefficients:
\begin{equation}
    \mathbf{u}_1 = \frac{1}{\sigma_1(\mathbf{S})} \mathbf{S} \mathbf{v}_1 = \sum_{j=1}^r \alpha_j \tilde{\mathbf{u}}^{(j)}.
\end{equation}
The cosine similarity can then be calculated as:
\begin{equation}
    \cos \theta(\mathbf{u}_1, \mathbf{v}_1) = \left\langle \sum_{i=1}^r \alpha_i \tilde{\mathbf{u}}^{(i)}, \sum_{j=1}^r \alpha_j \tilde{\mathbf{v}}^{(j)} \right\rangle =  \sum_{j=1}^r \alpha_j^2 \langle \tilde{\mathbf{u}}^{(j)}, \tilde{\mathbf{v}}^{(j)} \rangle.
\end{equation}
Applying the local bound \eqnref{eq:local-cosine-bound} to these dominant blocks, we obtain:
\begin{equation}
    \cos \theta(\mathbf{u}_1, \mathbf{v}_1) \geq \sum_{j=1}^r \alpha_j^2 \left( \frac{\sigma_1(\mathbf{S})}{2} \right) = \frac{\sigma_1(\mathbf{S})}{2},
\end{equation}
which exactly matches the bound \eqnref{eq:global-cosine-bound} obtained in the unique dominant block case.

Finally, as the maximum sequence length $n_{\max} \to \infty$, the global maximum singular value $\sigma_1(\mathbf{S}) = \max_u \sigma_1(\mathbf{S}_u) \to 2$. Substituting this limit into our bound \eqnref{eq:global-cosine-bound} yields $\lim_{n_{\max} \to \infty} \cos \theta(\mathbf{u}_1, \mathbf{v}_1) \ge 1$. Given that cosine similarity is strictly capped at 1, we conclude that $\lim_{n_{\max} \to \infty} \cos \theta(\mathbf{u}_1, \mathbf{v}_1) = 1$, which completes the proof.
\end{proof}



\subsection{Proof of \cref{theorem:spectral-norm-bound}} \label{app:theoretical-proofs:proof-of-spectral-norm-bound}

To prove \cref{theorem:spectral-norm-bound}, we first extend our proposed objective in \eqnref{eq:overall-objective} to general transformer architectures, which may include multi-head attention, residual connections, layer normalization, and non-linear activation functions. Notably, our core idea is to establish an upper bound for the spectral norm of the model predictions (\ie $\| \hat{\mathbf{Y}} \|_2$) in a backward manner, recursively applying the spectral norm bounds from the output of each layer back to the input.

\noindentparnoline{Prediction Layer.}
We first consider the prediction layer, where the model output $\hat{\mathbf{Y}}$ is generated by the multiplication between the last layer output $\mathbf{X}^{(L)}$ and the item embedding matrix $\mathbf{E}$, \ie $\hat{\mathbf{Y}} = \mathbf{X}^{(L)} \mathbf{E}^\top$. Therefore, we can derive the following spectral norm bound for the final predictions:
\begin{equation}
    \| \hat{\mathbf{Y}} \|_2 \leq \| \mathbf{X}^{(L)} \|_2 \cdot \| \mathbf{E} \|_2,
\end{equation}
where the first term $\| \mathbf{X}^{(L)} \|_2$ is the spectral norm of the last layer output, which can be recursively upper-bounded by the spectral norm bounds of each preceding layer. The second term $\| \mathbf{E} \|_2$ can be regularized analogously to the proposed feed-forward regularization in \eqnref{eq:feed-forward-regularization-derivation} by the power iteration method and can be viewed as a part of the feed-forward regularization.

\noindentparnoline{Residual Connections.}
Next, we consider residual connections, which are typically added after each attention or feed-forward layer. Formally, this operation can be bounded by:
\begin{equation}
\begin{aligned}
    \| \mathbf{X}_{\text{block-res}} \|_2
    & = \| \mathbf{X}_{\text{block}} + \mathbf{X}_{\text{block-input}} \|_2 \\
    & \leq \| \mathbf{X}_{\text{block}} \|_2 + \| \mathbf{X}_{\text{block-input}} \|_2,
\end{aligned}
\end{equation}
where $\mathbf{X}_{\text{block}}$ is the output of the attention or feed-forward block, and $\mathbf{X}_{\text{block-input}}$ is the input to this block. If we can bound the spectral norm of $\mathbf{X}_{\text{block}}$ by $\| \mathbf{X}_{\text{block}} \|_2 \leq C_\text{block} \cdot \| \mathbf{X}_{\text{block-input}} \|_2$ for some scalar amplification factor $C_\text{block} > 0$, then we can bound the spectral norm of the residual output $\mathbf{X}_{\text{block-res}}$ by:
\begin{equation}
    \| \mathbf{X}_{\text{block-res}} \|_2 \leq (1 + C_\text{block}) \cdot \| \mathbf{X}_{\text{block-input}} \|_2.
\end{equation}
Thus, the residual connection can simply be viewed as introducing a multiplicative factor of $(1 + C_\text{block})$, as opposed to $C_\text{block}$ in the non-residual case presented in the main text. Consequently, residual connections do not alter the overall form of the spectral norm bound. Our goal then reduces to bounding the spectral norm amplification factor $C_\text{block}$ of the individual attention and feed-forward blocks, which is precisely achieved by our proposed regularizations.

\begin{table}[t]
    \centering
    \caption{Dataset statistics.}
    \label{tab:dataset-statistics}
    \begin{tabular}{l|ccccc}
    \toprule
    \textbf{Dataset} & \textbf{\#Users} & \textbf{\#Items} & \textbf{\#Interactions} & \textbf{Density} \\
    \midrule
    \textbf{ML-20M}
        & 138,493 & 18,345  & 19,984,024       & 0.0787\% \\
    \textbf{ML-1M}   
        & 6,040   & 3,416   & 999,611         & 4.8448\%   \\
    \textbf{Beauty}     
        & 22,332  & 12,086   & 198,215        & 0.0734\%   \\
    \textbf{Toy}      
        & 19,124  & 11,758  & 165,247        & 0.0735\%   \\
    \textbf{Electronic} 
        & 28,566  & 15,743  & 241,198        & 0.0536\%   \\
    \textbf{Clothing} 
        & 39,230  & 22,948  & 277,534        & 0.0308\%   \\
    \textbf{Book} 
        & 38,234  & 38,519  & 517,167        & 0.0351\%   \\
    \bottomrule
    \end{tabular}
\end{table}

\noindentparnoline{Layer Normalization.}
Layer normalization is typically applied prior to the attention and feed-forward blocks (\ie pre-norm). In this work, we specifically consider RMSNorm~\citep{dubey2024llama}, which generally exhibits better performance than the original LayerNorm with a bias term. Formally, the RMSNorm operation is defined as follows:
\begin{equation}
    \mathbf{X}_{\text{block-norm}} = \mathbf{X}_{\text{block-input-RMS}} \times \operatorname{diag}(\mathbf{\gamma}_{\text{block}}),
\end{equation}
where $\mathbf{X}_{\text{block-input-RMS}}$ is the root-mean-square normalized representation obtained from the original input $\mathbf{X}_{\text{block-input}}$, and $\mathbf{\gamma}_{\text{block}}$ is the learnable scaling parameter. To ensure numerical stability, we assume that the minimum 2-norm of the row vectors in any $\mathbf{X}_{\text{block-input}}$ is lower-bounded by a positive constant $\xi_r$, and the maximum absolute value of any element in $\mathbf{\gamma}_{\text{block}}$ is upper-bounded by a positive constant $\xi_\gamma$. We can then derive the following spectral norm bound for the layer normalization operation:
\begin{equation}
    \| \mathbf{X}_{\text{block-norm}} \|_2 \leq \frac{\xi_\gamma}{\xi_r} \| \mathbf{X}_{\text{block-input}} \|_2.
\end{equation}

\noindentparnoline{Feed-forward Network.}
Following the standard transformer architecture, we consider a two-layer feed-forward network with a non-linear activation function $\sigma(\cdot)$, \ie $\mathbf{X}_{\text{ffn}} = \sigma(\mathbf{X}_{\text{ffn-norm}} \mathbf{W}_1) \mathbf{W}_2$. We assume that the activation function $\sigma(\cdot)$ is Lipschitz continuous with a finite Lipschitz constant $L_\sigma$, and $\sigma(\mathbf{0}) = \mathbf{0}$. For instance, the ReLU activation satisfies this assumption with $L_\sigma = 1$. Then, we establish the spectral norm bound for the feed-forward block:
\begin{equation} \label{eq:feed-forward-spectral-norm-bound}
\begin{aligned}
    \| \mathbf{X}_{\text{ffn}} \|_2
    & \leq \| \sigma(\mathbf{X}_{\text{ffn-norm}} \mathbf{W}_1) \|_2 \cdot \| \mathbf{W}_2 \|_2 \\
    & \leq L_\sigma \sqrt{d} \| \mathbf{X}_{\text{ffn-norm}} \mathbf{W}_1 \|_2 \cdot \| \mathbf{W}_2 \|_2 \\
    & \leq L_\sigma \sqrt{d} \| \mathbf{W}_1 \|_2 \| \mathbf{W}_2 \|_2 \cdot \| \mathbf{X}_{\text{ffn-norm}} \|_2,
\end{aligned}
\end{equation} 
where the second inequality relies on the spectral norm bound for element-wise operations, \ie $\| \sigma(\cdot) \|_2 \leq \| \sigma(\cdot) \|_F \leq L_\sigma \| \cdot \|_F \leq L_\sigma \sqrt{d} \| \cdot \|_2$. The input term $\mathbf{X}_{\text{ffn-norm}}$ can be recursively upper-bounded. Note that the coefficient $L_\sigma \sqrt{d}$ is a constant for a given model architecture, which can be absorbed into the aforementioned $\frac{\xi_\gamma}{\xi_r}$ term from the layer normalization. Consequently, the entire feed-forward block possesses the following spectral norm amplification factor $C_\text{ffn}$:
\begin{equation} \label{eq:feed-forward-regularization-c-factor}
    C_\text{ffn} = \frac{\xi_\gamma}{\xi_r} L_\sigma \sqrt{d} \| \mathbf{W}_1 \|_2 \| \mathbf{W}_2 \|_2,
\end{equation}
which perfectly aligns with our proposed feed-forward regularization in \eqnref{eq:feed-forward-regularization-derivation} up to a constant multiplier.

\noindentparnoline{Single-head Attention.}
In the main text, we derived the spectral norm bound \eqnref{eq:attention-regularization} for the single-head attention case. Specifically, during the attention aggregation of a given layer, \ie $\mathbf{X}_{\text{attn}} = \mathbf{S} \mathbf{X}_{\text{attn-norm}} \mathbf{W}_V \mathbf{W}_O$, we have:
\begin{equation} \label{eq:attention-spectral-norm-bound-single-head}
    \| \mathbf{X}_{\text{attn}} \|_2 \leq \| \mathbf{S} \|_2 \cdot \| \mathbf{W}_V \|_2 \| \mathbf{W}_O \|_2 \cdot \| \mathbf{X}_{\text{attn-norm}} \|_2.
\end{equation}
Here, the constituent terms are bounded as follows:
\begin{itemize}[topsep=3pt,leftmargin=10pt,itemsep=0pt]
    \item The first term $\| \mathbf{S} \|_2$ can be upper-bounded by the 1-norm $\sqrt{\| \mathbf{S} \|_1}$ as shown in \eqnref{eq:attention-regularization-derivation}, which is further bounded by our proposed attention regularizer $\sqrt{\widetilde{\| \mathbf{S} \|}_{1}}$ defined in \eqnref{eq:attention-regularization}.
    \item The second term $\| \mathbf{W}_V \|_2 \| \mathbf{W}_O \|_2$ is directly governed by the proposed feed-forward regularization in \eqnref{eq:feed-forward-regularization-derivation}.
    \item The last term $\| \mathbf{X}_{\text{attn-norm}} \|_2$ can be recursively upper-bounded.
\end{itemize}
Similarly, the entire single-head attention block with pre-norm introduces the following spectral norm amplification factor $C_\text{attn-single}$:
\begin{equation} \label{eq:attention-regularization-c-factor}
    C_\text{attn-single} = \frac{\xi_\gamma}{\xi_r} \sqrt{\widetilde{\| \mathbf{S} \|}_{1}} \cdot \| \mathbf{W}_V \|_2 \| \mathbf{W}_O \|_2,
\end{equation}
which seamlessly decomposes into the exact attention and feed-forward regularizers proposed in the main text.

\noindentparnoline{Multi-head Attention.}
For multi-head attention with $H$ heads, the input $\mathbf{X}_{\text{attn-norm}}$ is first projected into $H$ value matrices via projection weights $\mathbf{W}_{V, h} \in \mathbb{R}^{d \times d/H}$, \ie $\mathbf{V}_{h} = \mathbf{X}_{\text{attn-norm}} \mathbf{W}_{V, h}$ for $h = 1, 2, \cdots, H$. The attention aggregation is then performed separately for each head:
\begin{equation}
    \mathbf{X}_{\text{attn}} = [\mathbf{S}_1 \mathbf{V}_1, \mathbf{S}_2 \mathbf{V}_2, \cdots, \mathbf{S}_H \mathbf{V}_H] \mathbf{W}_O.
\end{equation}
Denoting $\mathbf{V}_h' = \mathbf{S}_h \mathbf{V}_h$ as the aggregated value matrix for head $h$, we utilize the following inequality for matrix concatenation:
\begin{equation}
    \| [\mathbf{V}_1', \mathbf{V}_2', \cdots, \mathbf{V}_H'] \|_2 \leq \sqrt{\| \mathbf{V}_1' \|_2^2 + \| \mathbf{V}_2' \|_2^2 + \cdots + \| \mathbf{V}_H' \|_2^2}.
\end{equation}
Specifically, letting $\mathbf{V}' = [\mathbf{V}_1', \mathbf{V}_2', \cdots, \mathbf{V}_H']$ represent the concatenated matrix, this inequality is derived as follows:
\begin{equation}
\begin{aligned}
    \| \mathbf{V}' \|_2^2 
    & = \| \mathbf{V}' (\mathbf{V}')^\top \|_2
    = \left\| \sum_{h = 1}^{H} \mathbf{V}_h' (\mathbf{V}_h')^\top \right\|_2 \\
    & \leq \sum_{h = 1}^{H} \| \mathbf{V}_h' (\mathbf{V}_h')^\top \|_2
    = \sum_{h = 1}^{H} \| \mathbf{V}_h' \|_2^2.
\end{aligned}
\end{equation}
Recall that for each head $h$, we established in \eqnref{eq:attention-spectral-norm-bound-single-head} that $\| \mathbf{V}_h' \|_2 \leq \sqrt{\widetilde{\| \mathbf{S}_h \|}_{1}} \cdot \| \mathbf{W}_{V, h} \|_2 \cdot \| \mathbf{X}_{\text{attn-norm}} \|_2$. Thus, we can bound the spectral norm for the multi-head attention operation as:
\begin{equation}
    \| \mathbf{X}_{\text{attn}} \|_2 \leq \sqrt{\sum_{h = 1}^{H} \widetilde{\| \mathbf{S}_h \|}_{1} \cdot \| \mathbf{W}_{V, h} \|_2^2} \cdot \| \mathbf{W}_O \|_2 \cdot \| \mathbf{X}_{\text{attn-norm}} \|_2,
\end{equation}
which yields the multi-head spectral norm amplification factor $C_\text{attn-multi}$:
\begin{equation} \label{eq:attention-regularization-c-factor-multi-head}
    C_\text{attn-multi} = \frac{\xi_\gamma}{\xi_r} \sqrt{\sum_{h = 1}^{H} \widetilde{\| \mathbf{S}_h \|}_{1} \cdot \| \mathbf{W}_{V, h} \|_2^2} \cdot \| \mathbf{W}_O \|_2.
\end{equation}
The bound in \eqnref{eq:attention-regularization-c-factor-multi-head} exhibits a slightly more complex form than the single-head case in \eqnref{eq:attention-regularization-c-factor}. The trailing term $\| \mathbf{W}_O \|_2$ naturally falls under the feed-forward regularization in \eqnref{eq:feed-forward-regularization-derivation}, while the leading square-root term acts as a generalized form of the proposed attention regularization in \eqnref{eq:attention-regularization}. Although we adopt the compact regularization form derived from \eqnref{eq:attention-regularization-c-factor-multi-head} in our practical implementations, we empirically find that optimizing the regularizers separately for each head (\ie $\sqrt{\widetilde{\| \mathbf{S}_h \|}_{1}} \cdot \| \mathbf{W}_{V, h} \|_2$) performs similarly well. This is expected, as minimizing the individual components within a sum intrinsically drives down the overall summation.

\begin{figure}[t]
    \centering
    \includegraphics[width=\columnwidth]{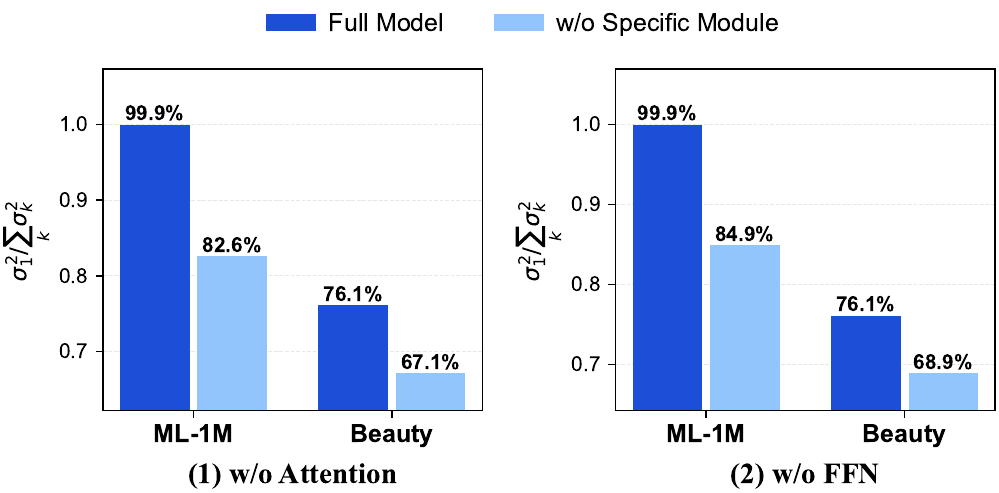}
    \caption{
        Ablation study investigating the impact of attention and feed-forward components on the spectral collapse of model predictions, evaluated using an 8-layer SASRec++ backbone with 256-dim embeddings. We compare the severity of spectral collapse in the full transformer model against variants where (1) 6 attention blocks or (2) 6 feed-forward blocks are removed. The results demonstrate that removing either component effectively mitigates the spectral collapse. Consequently, both components inherently drive the exacerbation of this collapse, verifying the necessity of jointly regularizing both architectural components to restrain popularity bias amplification.
    }
    \Description{Ablation: the effect of attention and FFN on spectral norm.}
    \label{fig:ablation-spectral-norm}
\end{figure}

Now we are ready to prove \cref{theorem:spectral-norm-bound}:

\begin{tcolorbox}[kessokustyle=bocchipink]
\RestateTheoremSpectralNormBound*
\end{tcolorbox}

\begin{proof}[Proof of \cref{theorem:spectral-norm-bound}]
Throughout the comprehensive derivations in \cref{app:theoretical-proofs:proof-of-spectral-norm-bound}, we have systematically bounded the spectral norm of each block's output relative to its input using a closed-form amplification factor $C_\text{block}$ ($+1$ if a residual connection is applied), as defined in \eqnref{eq:feed-forward-regularization-c-factor}, \eqnref{eq:attention-regularization-c-factor}, and \eqnref{eq:attention-regularization-c-factor-multi-head} for feed-forward, single-head attention, and multi-head attention blocks, respectively. Therefore, by recursively propagating these bounds from the output layer back to the input layer, we establish the overall spectral norm bound for the model predictions $\hat{\mathbf{Y}}$:
\begin{equation}
    \| \hat{\mathbf{Y}} \|_2
    \leq \| \mathbf{X}^{(L)} \|_2 \cdot \| \mathbf{E} \|_2
    \leq \prod_{\text{blocks}} C_\text{block} \cdot \| \mathbf{X}^{(0)} \|_2.
\end{equation}
Here, the item embedding term $\| \mathbf{E} \|_2$ is regularized identically to the feed-forward weights, and the input term $\| \mathbf{X}^{(0)} \|_2$, which consists of selected vectors from the item embedding matrix, is trivially bounded by $\| \mathbf{E} \|_2$ as well. Consequently, taking the logarithm of both sides directly recovers the exact form of our proposed regularization objective in \eqnref{eq:overall-objective}, completing the proof of \cref{theorem:spectral-norm-bound}.
\end{proof}



\begin{table}[t]
    \centering
    \caption{
        Architecture configurations for large-scale scaling experiments on the ML-20M dataset. Here, $L$ is the number of layers, $d$ is the hidden dimensionality, and $H$ is the number of attention heads. The last column "\#Params" denotes the approximate number of non-embedding parameters. This is calculated as $12Ld^2$, reflecting the standard transformer architecture excluding the embedding layer.
    }
    \label{tab:large-scale-architecture}
    \begin{tabular}{l|ccc|c}
    \toprule
    \textbf{Model Size} & $L$ & $d$ & $H$ & \textbf{\#Params} \\
    \midrule
    XXS        & 1              & 64            & 1             & 0.05M                           \\
    XS         & 2              & 128           & 2             & 0.39M                           \\
    Small      & 4              & 256           & 4             & 3.10M                            \\
    Base       & 6              & 512           & 8             & 18.9M                           \\
    Large      & 8              & 768           & 12            & 56.6M                           \\
    XL         & 10             & 1024          & 16            & 0.13B                \\
    XXL        & 12             & 1536          & 24            & 0.34B                 \\
    \bottomrule
    \end{tabular}
\end{table}


\section{Experimental Supplementary} \label{app:experimental-supplementary}

\subsection{Datasets} \label{app:experimental-supplementary:datasets}

To ensure a fair comparison with existing methods, we conduct experiments on six widely used benchmark datasets from MovieLens~\citep{harper2015movielens} and Amazon~\citep{he2016ups,mcauley2015image}, specifically including ML-1M/20M, Beauty, Toy, Electronic, Clothing, and Book. The dataset preprocessing strictly follows the standard procedure from previous studies~\citep{yang2024psl,yang2025breaking,zhang2026talos}, where users and items with fewer than 5 interactions are filtered out (\ie 5-core), and the remaining interactions are sorted in chronological order. For each user, the most recent interaction is used for testing, the second most recent interaction is used for validation, and all remaining historical interactions are used for training~\citep{kang2018self,yang2026bear}. Specifically, following the common practice in sequential recommendation~\citep{kang2018self}, the historical interaction sequences for each user are truncated to a maximum length of 200 on ML-1M/20M and 50 on the Amazon datasets. Detailed statistics of the datasets are summarized in \cref{tab:dataset-statistics}.


\subsection{Backbones} \label{app:experimental-supplementary:backbone-details}

We evaluate each method on two advanced transformer-based sequential recommendation backbones, namely SASRec++~\citep{zhang2024scaling,guo2024scaling} (2024) and HSTU~\citep{zhai2024actions} (2024). The former adapts SASRec~\citep{kang2018self} with several architectural adjustments for better scalability, while the latter modifies the attention mechanism to increase model capacity and improve computational efficiency. In our main experiments, we scale the number of layers up to 8 and the embedding dimensionality up to 256. These scale settings deliberately exceed the typical configurations used in prior studies, allowing us to better assess the scalability of different methods. Furthermore, we conduct large-scale experiments on the ML-20M dataset using SASRec++ with up to 0.34B non-embedding parameters (12 layers and 1536-dimensional embeddings), as illustrated in \cref{fig:ml-20m-scaling-laws}. The backbone details are provided below:

\noindentparnoline{SASRec++~\citep{zhang2024scaling,guo2024scaling}.}
\citet{kang2018self} initially proposed the SASRec architecture, which consists of a stack of transformer blocks. As model sizes scaled up, \citet{zhang2024scaling} and \citet{guo2024scaling} identified several architectural bottlenecks in the original design and proposed a series of adjustments to enhance scalability. Following these works, we adopt the improved SASRec++ architecture as one of our evaluation backbones. SASRec++ directly utilizes the standard LlamaDecoder architecture~\citep{dubey2024llama}, incorporating pre-norm LayerNorm, RoPE~\citep{su2024roformer} for positional encoding, an intermediate dimensionality of $4d$ for the feed-forward network, and optional dropout on the attention weights. In the main experiments, the number of attention heads is searched over $\{1, 2, 4, 8\}$ for different layer configurations, while it is fixed to $d / 64$ for the large-scale experiments on ML-20M. Specifically, for the layer scaling experiments, we fix the embedding dimensionality to 512 and vary the number of layers. Detailed architectural configurations for the large-scale experiments are provided in \cref{tab:large-scale-architecture}.

\noindentparnoline{HSTU~\citep{zhai2024actions}.}
HSTU is a recently proposed backbone that achieves state-of-the-art performance in industrial-scale sequential recommendation. The core modification of HSTU is the introduction of a SiLU-based attention mechanism, which replaces the traditional softmax function with the SiLU activation to compute attention scores. In this work, we generalize the original HSTU architecture to support an optional feed-forward network (which is absent in the original HSTU) and an optional gated attention mechanism (which is present in the original HSTU). Empirically, we observe that in certain settings, incorporating the feed-forward network and removing the gated attention can further improve training stability and performance. To ensure a fair comparison, all experiments on the HSTU backbone are conducted by searching over these feed-forward and gated attention options, and we report the best results for each method.

\noindentparnoline{Training Details.}
For all evaluated methods, we adopt the Binary Cross-Entropy (BCE) loss~\citep{kang2018self} with negative sampling as the main recommendation loss in \eqnref{eq:overall-objective}, where the number of negative samples is set to 32 for ML-1M/20M and 16 for Amazon datasets. We utilize the AdamW optimizer~\citep{loshchilov2017decoupled} with a learning rate of $10^{-3}$ and a weight decay of $0.1$ in our main experiments. The warmup ratio is set to $0.05$, after which the learning rate is linearly decayed to zero. For the large-scale experiments on the ML-20M dataset, we search over learning rates in $\{10^{-3}, 5 \times 10^{-4}, 10^{-4}\}$ and weight decays in $\{0.1, 0.01\}$ to ensure stable training. The batch size is fixed at 512 for main experiments and 1024 for large-scale experiments, and the maximum number of training epochs is set to 200 across all experiments. The optimal results for all methods in the main experiments are selected based on the NDCG@5 metric. The remaining hyperparameters for the compared methods and \mymethod are searched over the ranges specified in \cref{app:experimental-supplementary:compared-methods}.


\begin{figure}[t]
    \centering
    \includegraphics[width=\columnwidth]{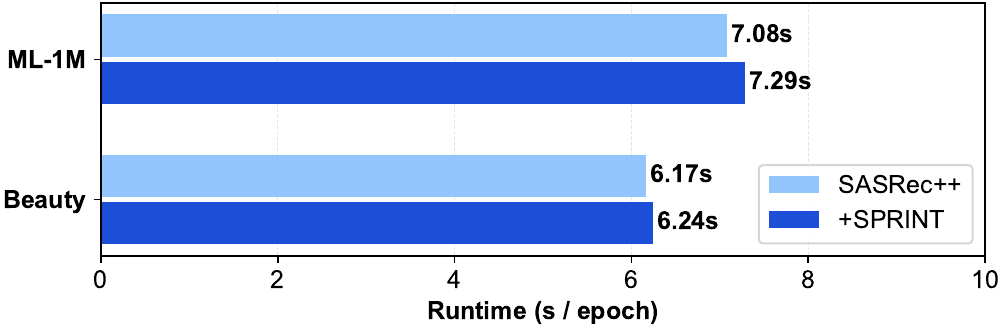}
    \caption{
        Runtime comparison (seconds per epoch) between the standard SASRec++ backbone and +\mymethod. The results demonstrate that the proposed regularizations introduce only a negligible training overhead.
    }
    \Description{Runtime comparison.}
    \label{fig:runtime-comparison}
\end{figure}

\subsection{Compared Methods} \label{app:experimental-supplementary:compared-methods}

The details of the compared methods are summarized below:

\begin{itemize}[topsep=3pt,leftmargin=10pt,itemsep=0pt]
    \item \textbf{DROS}~\citep{yang2023generic}: A distributionally robust optimization (DRO) method that incorporates KL-divergence-based DRO~\citep{wang2024distributionally} regularization, weighted by the item popularity distribution, to enhance model robustness against distribution shifts. Key hyperparameters: temperature $\beta_0 \in \{0.5, 1.0, 1.5\}$, weight $\alpha \in \{0.1, 0.3, 0.5\}$.
    \item \textbf{MOJITO}~\citep{tran2023attention}: A temporal-aware disentanglement method that separates temporal context and item representations, explicitly disentangling the short-term time-dependent intents from context-independent long-term preferences. Key hyperparameters: weight $\lambda \in \{0.01, 0.1, 0.5\}$.
    \item \textbf{TPAB}~\citep{yoo2025generalizable}: A temporal-aware disentanglement approach that separates intrinsic item property embeddings from temporal-popularity embeddings via exponential time bucketing. Key hyperparameters: number of buckets $K \in \{20, 40, 60\}$, bootstrapping weight $\lambda \in \{0, 0.5, 1.0, 1.5, 2.0, 2.5, 3.0\}$.
    \item \textbf{R$^2$Rec}~\citep{li2025reembedding}: A propensity reweighting method that incorporates adaptive loss reweighting, coupled with a Gaussian reembedding operation for tail items, to explicitly prioritize tail item optimization. Key hyperparameter: temperature $\tau \in \{0.05, 0.1, 0.5, 1.0\}$.
    \item \textbf{D$^2$LR}~\citep{lu2025dual}: A propensity reweighting method that performs token-level inverse propensity score (IPS) debiasing. Key hyperparameters: $\alpha \in \{0.1, 0.2, \dots, 0.9\}$ and $\beta \in \{0.05, 0.1, \dots, 0.4\}$.
    \item \textbf{LogDet}~\citep{zhang2023mitigating}: A spectral regularization method that mitigates dimensional collapse by regularizing the covariance matrix of item embeddings via log-determinant divergence. Key hyperparameter: weight $\lambda \in \{10^{-3}, 10^{-4}, 10^{-5}, 10^{-6}\}$.
    \item \textbf{ReSN}~\citep{lin2025recommendation}: A spectral regularization method that constrains the largest singular value of the predicted interaction matrix to mitigate popularity bias. Key hyperparameter: regularization weight $\lambda \in \{10^{-3}, 10^{-4}, 10^{-5}, 10^{-6}\}$.
    \item \textbf{\mymethod (Ours)}: The proposed method that jointly regularizes the spectral norms of attention and feed-forward components to mitigate the spectral collapse during scaling. Key hyperparameters: attention regularization weight $\lambda_{\text{attn}} \in \{2.5, 5.0, 7.5, 10\}$ for the main experiments, and $\lambda_{\text{attn}} \in \{0.05, 0.1, 0.25, 0.5\}$ for the large-scale experiments on the ML-20M dataset, feed-forward regularization weight $\lambda_{\text{ffn}} \in \{10^{-1}, 10^{-2}, 10^{-3}, 10^{-4}\}$.
\end{itemize}


\subsection{Supplementary Results} \label{app:experimental-supplementary:supplementary-results}

In addition to the results reported in the main text, we further present supplementary results to provide a more comprehensive evaluation of \mymethod and the compared methods.

\noindentparnoline{Popularity Memorization Effect in Sequential Recommendation.}
\cref{lemma:popularity-bias-spectral-norm} states that the popularity information within recommendation data is theoretically encoded in the principal right singular vector $\mathbf{q}_1$ of the model predictions $\hat{\mathbf{Y}}$. In sequential recommendation, we also empirically observe this phenomenon: the Spearman's rank correlation between item popularity $\mathbf{r}$ and the principal right singular vector $\mathbf{q}_1$ is remarkably high (\eg 0.9038 on the ML-1M dataset, and 0.8258 on the Toy dataset). This strong correlation empirically verifies the existence of popularity memorization in sequential recommendation.

\begin{table*}[t]
  \centering
  \caption{Overall performance comparison on SASRec++ and HSTU. The columns "N@$K$", "H@$K$", and "Fair" denote the NDCG@$K$, HitRatio@$K$, and Fair-0.8 metrics, respectively. The best results are highlighted in bold, and the strongest baselines are underlined. "\textcolor{darkred}{\textbf{Imp.\%}}" indicates the relative improvement of \mymethod over the best baseline (\cf \cref{tab:main-results-5} for top-5 results).}
  \resizebox{\textwidth}{!}{
    \begin{tabular}{l|ccc|ccc|ccc|ccc|ccc|ccc}
    \toprule
    \multicolumn{1}{c|}{\multirow{2}[4]{*}{\textbf{Methods}}} & \multicolumn{3}{c|}{\textbf{ML-1M}} & \multicolumn{3}{c|}{\textbf{Beauty}} & \multicolumn{3}{c|}{\textbf{Toy}} & \multicolumn{3}{c|}{\textbf{Electronic}} & \multicolumn{3}{c|}{\textbf{Clothing}} & \multicolumn{3}{c}{\textbf{Book}} \\
\cmidrule(lr){2-4} \cmidrule(lr){5-7} \cmidrule(lr){8-10} \cmidrule(lr){11-13} \cmidrule(lr){14-16} \cmidrule(lr){17-19}
    & \textbf{N@10} & \textbf{H@10} & \textbf{Fair} & \textbf{N@10} & \textbf{H@10} & \textbf{Fair} & \textbf{N@10} & \textbf{H@10} & \textbf{Fair} & \textbf{N@10} & \textbf{H@10} & \textbf{Fair} & \textbf{N@10} & \textbf{H@10} & \textbf{Fair} & \textbf{N@10} & \textbf{H@10} & \textbf{Fair} \\
    \midrule
    \textbf{SASRec} & .1015 & .1938 & .2227 & .0315 & .0615 & .2694 & .0392 & \uline{.0780} & .3741 & .0148 & .0277 & .0029 & .0026 & .0060 & .3806 & .0301 & .0664 & .4128 \\
    +DROS & .1045 & .1989 & .2172 & .0319 & .0619 & .2592 & .0351 & .0694 & .3478 & .0149 & .0282 & .0035 & .0027 & .0061 & \uline{.4157} & .0300 & .0663 & .4154 \\
    +MOJITO & .1055 & .2028 & .2188 & .0350 & .0605 & \uline{.3202} & \uline{.0420} & .0710 & .3803 & .0157 & \uline{.0310} & .0002 & .0030 & .0051 & .3717 & .0322 & .0538 & .2375 \\
    +TPAB & .1045 & .2031 & .1921 & .0281 & .0537 & .2291 & .0291 & .0529 & .3011 & .0072 & .0147 & .0004 & .0005 & .0010 & .3631 & .0302 & .0531 & .2156 \\
    +R$^2$Rec & .1008 & .2002 & .1997 & .0327 & .0658 & .3108 & .0368 & .0706 & \uline{.3878} & \uline{.0166} & .0304 & .0043 & .0025 & .0052 & .3500 & \uline{.0323} & .0654 & .3028 \\
    +D$^2$LR & .1051 & .1970 & \uline{.2317} & .0323 & .0670 & .3152 & .0359 & .0709 & .3543 & .0142 & .0263 & .0031 & .0023 & .0048 & .2994 & .0312 & \uline{.0674} & \uline{.4208} \\
    +LogDet & \uline{.1075} & \uline{.2035} & .1792 & .0349 & .0670 & .1071 & .0399 & .0743 & .2833 & .0160 & \uline{.0310} & .0006 & .0031 & .0058 & .1528 & .0316 & .0611 & .1832 \\
    +ReSN & .0987 & .2030 & .1223 & \uline{.0352} & \uline{.0674} & .1016 & .0409 & .0723 & .1515 & .0157 & .0288 & \uline{.0175} & \uline{.0038} & \uline{.0069} & .0850 & .0302 & .0453 & .3572 \\
\cmidrule(lr){1-1} \cmidrule(lr){2-4} \cmidrule(lr){5-7} \cmidrule(lr){8-10} \cmidrule(lr){11-13} \cmidrule(lr){14-16} \cmidrule(lr){17-19}
    \textbf{+\mymethod} & \textbf{.1155} & \textbf{.2121} & \textbf{.2534} & \textbf{.0406} & \textbf{.0692} & \textbf{.4127} & \textbf{.0485} & \textbf{.0804} & \textbf{.4836} & \textbf{.0194} & \textbf{.0336} & \textbf{.1618} & \textbf{.0043} & \textbf{.0075} & \textbf{.4454} & \textbf{.0389} & \textbf{.0709} & \textbf{.4953} \\
    \textcolor{darkred}{\textbf{Imp.\%}} & \textcolor{darkred}{\textbf{7.45\%}} & \textcolor{darkred}{\textbf{4.26\%}} & \textcolor{darkred}{\textbf{2.17\%}} & \textcolor{darkred}{\textbf{15.26\%}} & \textcolor{darkred}{\textbf{2.62\%}} & \textcolor{darkred}{\textbf{9.25\%}} & \textcolor{darkred}{\textbf{15.55\%}} & \textcolor{darkred}{\textbf{3.13\%}} & \textcolor{darkred}{\textbf{9.58\%}} & \textcolor{darkred}{\textbf{17.37\%}} & \textcolor{darkred}{\textbf{8.40\%}} & \textcolor{darkred}{\textbf{14.43\%}} & \textcolor{darkred}{\textbf{10.74\%}} & \textcolor{darkred}{\textbf{9.33\%}} & \textcolor{darkred}{\textbf{2.97\%}} & \textcolor{darkred}{\textbf{20.53\%}} & \textcolor{darkred}{\textbf{5.21\%}} & \textcolor{darkred}{\textbf{7.45\%}} \\
    \midrule
    \textbf{HSTU} & .1037 & .2036 & .1902 & .0341 & .0647 & .3369 & .0379 & .0686 & .4215 & .0194 & .0358 & .0014 & .0030 & \uline{.0059} & .4179 & .0327 & .0572 & .1815 \\
    +DROS & .1036 & .1990 & .1934 & .0346 & \uline{.0666} & \uline{.3861} & .0375 & .0643 & \uline{.4951} & .0196 & .0359 & .0009 & .0030 & .0058 & \uline{.4582} & .0332 & .0572 & .1739 \\
    +MOJITO & .1037 & .2047 & .2019 & .0343 & .0626 & .3174 & .0367 & .0593 & .3753 & .0208 & .0388 & .0001 & .0031 & .0056 & .3856 & .0344 & .0602 & .1768 \\
    +TPAB & .1001 & .1997 & .2240 & .0349 & .0639 & .3781 & .0355 & .0621 & .4133 & .0135 & .0250 & .0001 & .0011 & .0021 & .3834 & .0358 & \uline{.0641} & .1383 \\
    +R$^2$Rec & .0949 & .1936 & .2128 & .0327 & .0645 & .3269 & .0382 & .0692 & .4429 & .0201 & .0364 & .0021 & .0030 & \uline{.0059} & .3430 & .0338 & .0622 & .1679 \\
    +D$^2$LR & .1042 & .2038 & \uline{.2305} & .0335 & .0648 & .3824 & .0378 & .0696 & .4589 & .0198 & .0372 & .0084 & .0029 & \uline{.0059} & .4508 & .0351 & .0603 & \uline{.2490} \\
    +LogDet & \uline{.1075} & .2072 & .1574 & \uline{.0350} & .0640 & .3110 & \uline{.0387} & \uline{.0709} & .4425 & .0216 & \uline{.0395} & \uline{.0116} & \uline{.0033} & .0057 & .3155 & \uline{.0360} & \uline{.0641} & .1029 \\
    +ReSN & .1018 & \uline{.2086} & .1520 & .0335 & \uline{.0666} & .1398 & .0346 & .0615 & .2350 & \uline{.0219} & .0358 & .0000 & .0032 & \uline{.0059} & .1042 & .0250 & .0363 & .0752 \\
\cmidrule(lr){1-1} \cmidrule(lr){2-4} \cmidrule(lr){5-7} \cmidrule(lr){8-10} \cmidrule(lr){11-13} \cmidrule(lr){14-16} \cmidrule(lr){17-19}
    \textbf{+\mymethod} & \textbf{.1109} & \textbf{.2097} & \textbf{.2503} & \textbf{.0413} & \textbf{.0695} & \textbf{.4449} & \textbf{.0482} & \textbf{.0724} & \textbf{.5173} & \textbf{.0244} & \textbf{.0420} & \textbf{.2150} & \textbf{.0041} & \textbf{.0070} & \textbf{.4744} & \textbf{.0389} & \textbf{.0678} & \textbf{.3237} \\
    \textcolor{darkred}{\textbf{Imp.\%}} & \textcolor{darkred}{\textbf{3.10\%}} & \textcolor{darkred}{\textbf{0.57\%}} & \textcolor{darkred}{\textbf{1.99\%}} & \textcolor{darkred}{\textbf{18.30\%}} & \textcolor{darkred}{\textbf{4.26\%}} & \textcolor{darkred}{\textbf{5.89\%}} & \textcolor{darkred}{\textbf{24.57\%}} & \textcolor{darkred}{\textbf{2.22\%}} & \textcolor{darkred}{\textbf{2.22\%}} & \textcolor{darkred}{\textbf{11.36\%}} & \textcolor{darkred}{\textbf{6.22\%}} & \textcolor{darkred}{\textbf{20.34\%}} & \textcolor{darkred}{\textbf{24.68\%}} & \textcolor{darkred}{\textbf{17.32\%}} & \textcolor{darkred}{\textbf{1.62\%}} & \textcolor{darkred}{\textbf{7.88\%}} & \textcolor{darkred}{\textbf{5.72\%}} & \textcolor{darkred}{\textbf{7.47\%}} \\
    \bottomrule
    \end{tabular}
  }
  \label{tab:main-results-10}
\end{table*}

\noindentparnoline{Ablation Study on Transformer Components and Spectral Collapse.}
In \cref{fig:ablation-spectral-norm}, we conduct an ablation study to isolate the effects of the attention and feed-forward components on the spectral collapse of model predictions. Specifically, we compare the severity of spectral collapse in the full transformer model against variants where multiple attention blocks or feed-forward blocks are omitted. The results clearly indicate that both components independently contribute to the exacerbation of spectral collapse. Replacing either component with identity mappings (\ie removing them) significantly mitigates this collapse. This finding empirically verifies the necessity of jointly regularizing both the attention and feed-forward components to effectively control the spectral collapse inherent in scaling sequential recommenders.

\noindentparnoline{Cross-Layer Alignment of Singular Vectors.}
In \cref{theorem:spectral-norm-attention-aggregation}, we assume that the singular vectors of the attention score matrices are highly aligned across layers, up to small perturbations. This crucial assumption enables us to derive a closed-form lower bound for the spectral norm of multi-layer attention aggregation. To empirically verify this, we compute the cosine similarity between the singular vectors of the attention score matrices across different layers within a 8-layer trained transformer. Our results reveal that this cross-layer cosine similarity remains consistently high (above 0.7788 on the ML-1M dataset, and above 0.7845 on the Beauty dataset). This evidence empirically verifies the validity of our assumption and confirms the practical applicability of the derived spectral norm bound. Refer to \cref{app:theoretical-proofs:justification-of-assumption-spectral-norm-attention-aggregation} for further theoretical and empirical justifications of the assumption in \cref{theorem:spectral-norm-attention-aggregation}.

\begin{table}[t]
  \centering
  \caption{Ablation study on \mymethod regularizations.}
  \resizebox{\columnwidth}{!}{
    \begin{tabular}{l|cc|cc}
    \toprule
    \multicolumn{1}{c|}{\multirow{2}[4]{*}{\textbf{Backbones}}} & \multicolumn{2}{c|}{\textbf{ML-1M}} & \multicolumn{2}{c}{\textbf{Beauty}} \\
\cmidrule(lr){2-3} \cmidrule(lr){4-5}
    & \textbf{NDCG@5} & \textbf{Fairness} & \textbf{NDCG@5} & \textbf{Fair-0.8} \\
    \midrule
    $\lambda_{\text{attn}} = \lambda_{\text{ffn}} = 0$ & 0.0807 & 0.2227 & 0.0248 & 0.2694 \\
    $\lambda_{\text{attn}} = 0$ & 0.0819 & 0.2512 & 0.0310 & 0.3428 \\
    $\lambda_{\text{ffn}} = 0$ & 0.0850 & 0.2388 & 0.0308 & 0.3672 \\
    \textbf{+\mymethod} & \textbf{0.0925} & \textbf{0.2534} & \textbf{0.0355} & \textbf{0.4127} \\
    \bottomrule
    \end{tabular}
  }
  \label{tab:ablation-study}
\end{table}

\noindentparnoline{Empirical Runtime Analysis.}
As discussed in \cref{sec:methodology:analyses_and_discussions}, the proposed regularizers are theoretically lightweight. To empirically validate this, we record the actual training time per epoch across different datasets. As detailed in \cref{fig:runtime-comparison}, implementing \mymethod on top of the standard SASRec++ backbone incurs merely a marginal computational overhead (approximately 3\%). This empirical evidence confirms that our method seamlessly scales to practical recommender systems without sacrificing training efficiency.

\noindentparnoline{(RQ1) Overall Performance.}
\cref{tab:main-results-5,tab:main-results-10} presents the overall recommendation performance comparison of \mymethod and compared methods on the SASRec++ and HSTU backbones, where the best results are reported based on NDCG@5 metric (\cf \cref{sec:experiments:results} for further discussions). In addition to the Fair@1-0.8 metric~\citep{lin2025recommendation} (\ie long-tail exposure in top-1 recommendations), we also report multiple other fairness metrics, including Fair@5-0.8, Fair@10-0.8, Gini@5~\citep{klimashevskaia2024survey}, ARP@5 (Average Recommendation Popularity)~\citep{klimashevskaia2024survey}, and BQS@5 (Balanced Quality Score, based on NDCG@5)~\citep{coppolillo2024balanced}, to comprehensively evaluate the fairness performance of different models. As illustrated in \cref{tab:fairness-results}, \mymethod consistently outperforms all compared methods across different fairness metrics, demonstrating its superior ability to mitigate popularity bias.

\noindentparnoline{(RQ2) Layer Scaling Performance.}
To examine the effectiveness of \mymethod in deep transformers, we conduct layer scaling experiments in \cref{fig:layer-scaling-toy-electronic,fig:layer-scaling-ml1m-beauty} (\cf \cref{sec:experiments:results} for further discussions).

\noindentparnoline{(RQ3) Ablation Study and Hyperparameter Analysis.} We perform an ablation study in \cref{tab:ablation-study} to isolate the contributions of attention and feed-forward regularizations in \mymethod. The sensitivity of \mymethod to the regularization weights $\lambda_{\text{attn}}$ and $\lambda_{\text{ffn}}$ is further analyzed in \cref{fig:hyper-parameter-sensitivity} (\cf \cref{sec:experiments:results} for further discussions).

\begin{table*}[p]
  \centering
  \caption{Fairness performance comparison on Beauty and Toy datasets. The columns "ARP", "Gini", "BQS", and "Fair" denote the Average Recommendation Popularity@5, Gini Index@5, Balanced Quality Score@5, and Fair-0.8@$K$ ($K \in \{1, 5, 10\}$) metrics, respectively. The best results are highlighted in bold. For ARP$\downarrow$ and Gini$\downarrow$, lower is better; for BQS$\uparrow$ and Fair$\uparrow$, higher is better.}
  \resizebox{\textwidth}{!}{
    \begin{tabular}{l|cccccc|cccccc}
    \toprule
    \multicolumn{1}{c|}{\multirow{2}[4]{*}{\textbf{Methods}}} & \multicolumn{6}{c|}{\textbf{Beauty}} & \multicolumn{6}{c}{\textbf{Toy}} \\
\cmidrule(lr){2-7} \cmidrule(lr){8-13}
    & \textbf{ARP@5}$\downarrow$ & \textbf{Gini@5}$\downarrow$ & \textbf{BQS@5}$\uparrow$ & \textbf{Fair@1}$\uparrow$ & \textbf{Fair@5}$\uparrow$ & \textbf{Fair@10}$\uparrow$ & \textbf{ARP@5}$\downarrow$ & \textbf{Gini@5}$\downarrow$ & \textbf{BQS@5}$\uparrow$ & \textbf{Fair@1}$\uparrow$ & \textbf{Fair@5}$\uparrow$ & \textbf{Fair@10}$\uparrow$ \\
    \midrule
    \textbf{SASRec++} & 52.4 & 0.76 & 0.5000 & 0.2694 & 0.3193 & 0.3186 & 29.8 & 0.63 & 0.5000 & 0.3741 & 0.4189 & 0.4192 \\
    +LogDet & 61.7 & 0.81 & 0.5008 & 0.1071 & 0.1490 & 0.1659 & 34.3 & 0.76 & 0.4996 & 0.2833 & 0.3373 & 0.3402 \\
    +ReSN & 65.1 & 0.81 & 0.5016 & 0.1016 & 0.1911 & 0.2162 & 42.1 & 0.81 & 0.4992 & 0.1515 & 0.2237 & 0.2434 \\
\cmidrule(lr){1-1} \cmidrule(lr){2-7} \cmidrule(lr){8-13}
    \textbf{+\mymethod} & \textbf{35.7} & \textbf{0.57} & \textbf{0.5042} & \textbf{0.4127} & \textbf{0.4375} & \textbf{0.4249} & \textbf{21.7} & \textbf{0.54} & \textbf{0.5033} & \textbf{0.4836} & \textbf{0.5206} & \textbf{0.5047} \\
    \bottomrule
    \end{tabular}
  }
  \label{tab:fairness-results}
\end{table*}

\begin{figure*}[p]
    \centering
    \includegraphics[width=0.95\textwidth]{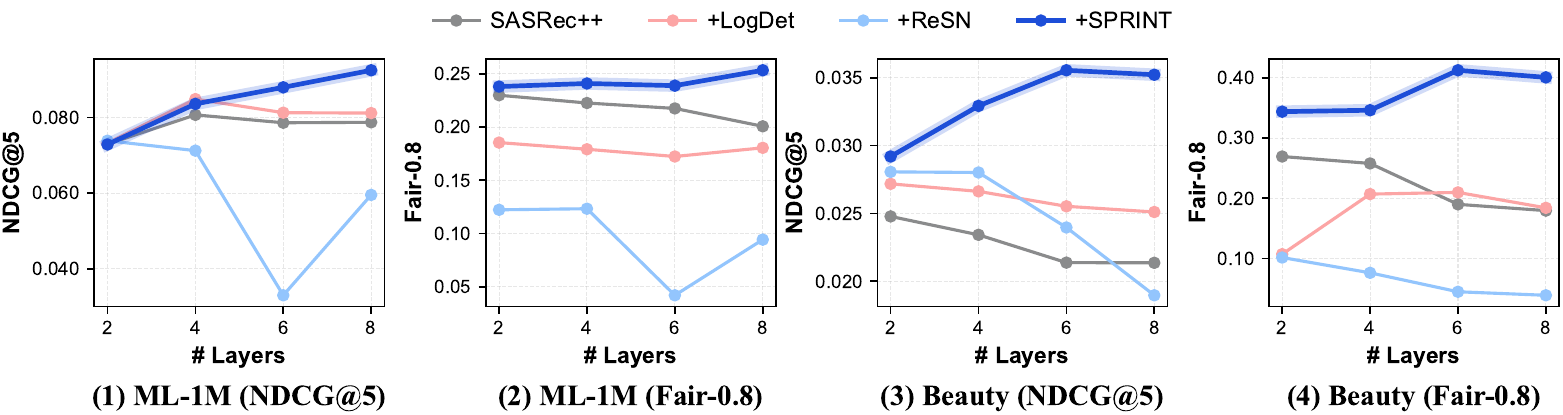}
    \caption{
        Layer scaling results of \mymethod and representative baselines on the SASRec++ backbone, where the number of layers is varied from 2 to 8, and the accuracy and fairness metrics are reported. Refer to \cref{fig:layer-scaling-toy-electronic} for results on other datasets.
    }
    \Description{Layer scaling results.}
    \label{fig:layer-scaling-ml1m-beauty}
\end{figure*}

\begin{figure*}[p]
    \centering
    \includegraphics[width=\textwidth]{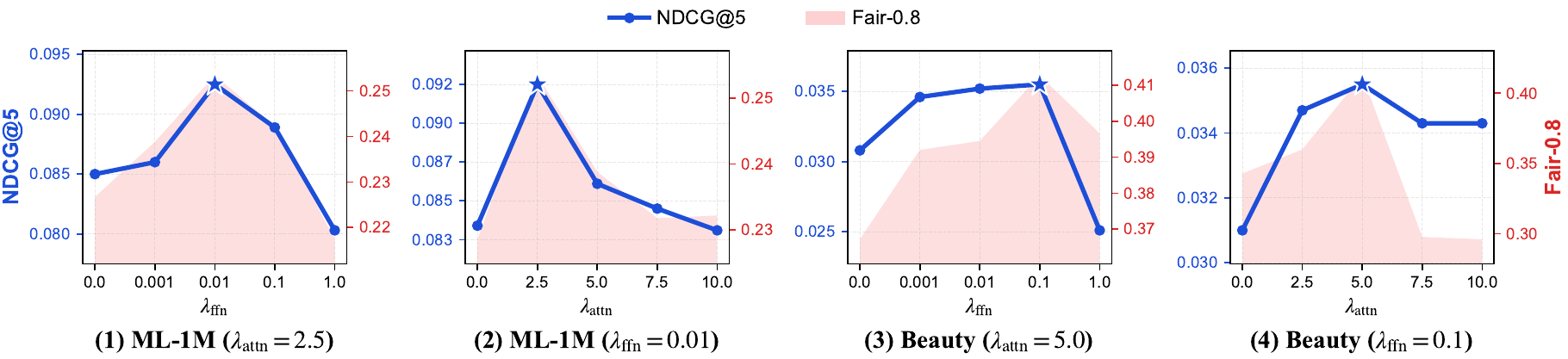}
    \caption{
        Hyperparameter sensitivity analysis of \mymethod on the SASRec++ backbone with respect to the attention regularization weight $\lambda_\text{attn}$ and the feed-forward regularization weight $\lambda_\text{ffn}$.
    }
    \Description{Hyper-parameter sensitivity analysis.}
    \label{fig:hyper-parameter-sensitivity}
\end{figure*}




\end{document}

\endinput